\documentclass[letterpaper,11pt]{article}
\usepackage{microtype}
\usepackage[letterpaper,margin=1in]{geometry}
\usepackage[english]{babel}
\usepackage[numbers,sort&compress]{natbib}
\usepackage{color}
\usepackage{doi}
\usepackage{latexsym}
\usepackage{amsmath}
\usepackage{amssymb}
\usepackage{amsthm}
\usepackage{thmtools}
\usepackage{thm-restate}
\usepackage[all,warning]{onlyamsmath}
\usepackage{enumitem}
\usepackage{framed}
\usepackage{xparse}
\usepackage{mathtools}
\usepackage{graphicx}
\usepackage[basic]{complexity}
\usepackage[small]{caption}
\usepackage{booktabs}
\usepackage{hyperref}
\usepackage{algpseudocode}

\theoremstyle{plain}
\newtheorem{theorem}{Theorem}
\newtheorem{lemma}[theorem]{Lemma}

\newtheorem{corollary}[theorem]{Corollary}
\newtheorem{claim}[theorem]{Claim}

\theoremstyle{definition}
\newtheorem{definition}[theorem]{Definition}

\theoremstyle{remark}

\setlist[itemize]{label=--}
\setlist[enumerate]{label=(\arabic*),labelindent=\parindent,leftmargin=*}

\DeclarePairedDelimiter\braces{\{}{\}}
\NewDocumentCommand\set{O{}mg}{\ensuremath{\braces[#1]{#2\IfNoValueTF{#3}{}{\,:\,#3}}}}

\DeclareMathOperator{\dist}{dist}

\DeclareMathOperator{\mypoly}{poly}

\DeclareMathOperator{\irr}{irr}
\DeclareMathOperator{\In}{in}
\DeclareMathOperator{\Out}{out}

\newclass{\lcl}{LCL}
\newclass{\LLL}{LLL}
\newclass{\local}{LOCAL}
\newclass{\dlocal}{DLOCAL}
\newclass{\rlocal}{RLOCAL}
\newclass{\LogStar}{LogStar}

\newcommand{\namedref}[2]{\hyperref[#2]{#1~\ref*{#2}}}

\newenvironment{myabstract}
{\list{}{\listparindent 1.5em%
		\itemindent    \listparindent
		\leftmargin    1cm
		\rightmargin   1cm
		\parsep        0pt}%
	\item\relax}
{\endlist}

\newenvironment{mycover}
{\list{}{\listparindent 0pt
		\itemindent    \listparindent
		\leftmargin    1cm
		\rightmargin   1cm
		\parsep        0pt}%
	\raggedright
	\item\relax}
{\endlist}

\newcommand{\myemail}[1]{\,$\cdot$\, {\small #1}}
\newcommand{\myaff}[1]{\,$\cdot$\, {\small #1}\par\medskip}

\hypersetup{
	colorlinks=true,
	linkcolor=black,
	citecolor=black,
	filecolor=black,
	urlcolor=[rgb]{0,0.1,0.5},
	pdftitle={Hardness of minimal symmetry breaking in distributed computing},
	pdfauthor={Alkida Balliu, Juho Hirvonen, Dennis Olivetti, Jukka Suomela}
}

\begin{document}

\begin{mycover}
	{\huge\bfseries\boldmath Hardness of minimal symmetry breaking in distributed computing \par}
	\bigskip
	\bigskip

	\textbf{Alkida Balliu}
	\myemail{alkida.balliu@aalto.fi}
	\myaff{Aalto University}

	\textbf{Juho Hirvonen}
	\myemail{juho.hirvonen@aalto.fi}
	\myaff{Aalto University}

	\textbf{Dennis Olivetti}
	\myemail{dennis.olivetti@aalto.fi}
	\myaff{Aalto University}

	\textbf{Jukka Suomela}
	\myemail{jukka.suomela@aalto.fi}
	\myaff{Aalto University}

\end{mycover}

\medskip
\begin{myabstract}
	\noindent\textbf{Abstract.}
	A graph is \emph{weakly $2$-colored} if the nodes are labeled with colors black and white such that each black node is adjacent to at least one white node and vice versa. In this work we study the distributed computational complexity of weak $2$-coloring in the standard \local{} model of distributed computing, and how it is related to the distributed computational complexity of other graph problems.

	First, we show that weak $2$-coloring is a \emph{minimal} distributed symmetry-breaking problem for regular even-degree trees and high-girth graphs: if there is any non-trivial \emph{locally checkable labeling problem} that is solvable in $o(\log^* n)$ rounds with a distributed graph algorithm in the middle of a regular even-degree tree, then weak $2$-coloring is also solvable in $o(\log^* n)$ rounds there.

	Second, we prove a \emph{tight lower bound} of $\Omega(\log^* n)$ for the distributed computational complexity of weak $2$-coloring in regular trees; previously only a lower bound of $\Omega(\log \log^* n)$ was known. By minimality, the same lower bound holds for any non-trivial locally checkable problem inside regular even-degree trees.
\end{myabstract}

\thispagestyle{empty}
\setcounter{page}{0}
\newpage


\section{Introduction} \label{sec:introduction}

In this work, we show that distributed symmetry-breaking problems in regular trees are either solvable in $O(1)$ rounds or they require $\Omega(\log^* n)$ rounds, and this is tight. This is an exponential improvement over the previous bound of $\Omega(\log \log^* n)$ rounds.

\paragraph{Symmetry breaking in distributed graph algorithms.}

Local symmetry breaking is a fundamental primitive in computer networks. As a simple example, consider the task of $3$-vertex coloring an $n$-cycle: the topology of the computer network is a cycle with $n$ nodes (computers) and $n$ edges (communication links), and each computer has to output a value from the set $\{1,2,3\}$ such that adjacent computers output different values.

If all nodes are identical and run the same deterministic algorithm, this is impossible: if all nodes start in the same state, they will send the same messages to their neighbors, they will receive the same messages from their neighbors, and switch to the same new state, ad infinitum -- and in $3$-coloring adjacent nodes have to reach distinct states. Hence we will need to assume that the nodes have e.g.\ \emph{unique identifiers} from a $\mypoly(n)$-sized set, or they have access to a source of \emph{random bits}.

With unique identifiers we can solve any graph problem trivially in $T = O(n)$ communication rounds, not only in cycles but more generally in any connected graph: all nodes can gather full information on the entire graph, solve the problem locally by brute force, and output their own part of the solution. We are interested in \emph{fast} distributed algorithms: each node should produce its own part of the output after some $T = o(n)$ communication rounds. It is good to note that here time (number of communication rounds) and distance (shortest-path distance in the input graph) are, in essence, equivalent: in $T$ rounds all nodes can gather full information on their $T$-round neighborhood, and nothing beyond that.

Distributed symmetry breaking in cycles is nowadays completely understood \cite{Naor1995,Brandt2017,chang17hierarchy}. For example, the task of $3$-coloring is solvable in time $T = \Theta(\log^* n)$ and this is tight; this holds for deterministic algorithms that use unique identifiers, as well as for randomized algorithms that work with high probability. We can also give a much stronger statement: Consider \emph{any} problem $P$ in which the task is to label the cycle with values from some constant-size set $X$, subject to some local constraints. Then $P$ falls in one of the following classes:
\begin{enumerate}[noitemsep]
\item \emph{trivial:} it is solvable in $O(1)$ rounds,
\item \emph{local:} its computational complexity is $\Theta(\log^* n)$ rounds,
\item \emph{global:} we need $\Theta(n)$ rounds just to check if a feasible solution exists.
\end{enumerate}
Examples of problems of type~(2) include vertex coloring with $k \ge 3$ colors, edge coloring with $k \ge 3$ colors, maximal independent set, and maximal matching. Examples of problems of type~(3) include vertex or edge coloring with $2$ colors.

However, once we switch from cycles to regular higher-degree graphs, we have a much more diverse landscape of computational complexity. The main gap of our current knowledge is in the low end of the complexity spectrum: we do not know if there are any problem of the above form that would require $\omega(1)$ rounds but that would be solvable in $o(\log^* n)$ rounds e.g.\ in regular high-girth graphs or regular trees \cite{chang17hierarchy}.

\paragraph{Prior work on distributed computational complexity.}

Formally, the model of computing that we study here is known as the $\local$ model of distributed computing \cite{Linial1992,Peleg2000}: the same graph plays the dual role of being both the unknown input graph and the structure of the computer network, the nodes correspond to computers, the edges correspond to communication links, the nodes are labeled with unique identifiers from a $\mypoly(n)$-sized set, computation proceeds in synchronous communication rounds, and the complexity measure that we care about is the number of rounds until all nodes have stopped and announced their local outputs. In particular, local computation is free and the message size is unbounded.

In the past three years, there has been a long sequence of papers \cite{Balliu2018stoc,Balliu2018disc,Brandt2016,chang16exponential,chang17hierarchy,fischer17sublogarithmic,ghaffari17distributed,Ghaffari2018,Ghaffari2018a,Pettie2018,Ghaffari2017} that have studied the computational complexity landscape of graph problems in the $\local$ model. The primary goal has been to understand so-called $\lcl$ problems, or \emph{locally checkable labeling} problems. In brief, these are problems in which the task is to label nodes with values from a constant-sized set, subject to local constraints. Informally, this is a natural distributed analogue of the class $\NP$: all $\lcl$ problems are such that, given a feasible solution, it is easy to \emph{verify} in a distributed manner. Now given an $\lcl$ problem $P$, what can we say about its computational complexity in the $\local$ model?

Surprisingly, the answer is that we can say quite a lot, even though the family of $\lcl$ problems is very diverse. For example, if we focus on bounded-degree graphs, we can prove \emph{gap results}: there is no $\lcl$ problem whose deterministic or randomized distributed complexity falls between $\omega(1)$ and $o(\log \log^* n)$ rounds, and there is no problem whose deterministic distributed complexity falls between $\omega(\log^* n)$ and $o(\log n)$ rounds.

\paragraph{Challenges below the log-star threshold.}

In the case of bounded-degree graphs, there are $\lcl$ problems with complexities such as $\Theta(\log \log^* n)$ and $\Theta(\sqrt{\log^* n})$ \cite{Balliu2018stoc}. However, all known problems in this region are highly artificial, and the worst-case input graphs have a rather peculiar structure, with lots of short cycles.

No \emph{natural} $\lcl$ problems in this region are known, and neither are there any known $\lcl$ problems that would have a computational complexity between $\omega(1)$ and $o(\log^* n)$ e.g.\ in bounded-degree \emph{trees} or regular \emph{high-girth graphs}. This is an open question identified by Chang and Pettie~\cite{chang17hierarchy}: do any such $\lcl$ problems exist at all?

The main reason that this question has remained open is related to the limitations of the current proof techniques. The proof technique introduced by Naor and Stockmeyer \cite{Naor1995} is based on the following idea: if there is an algorithm $A$ that solves an $\lcl$ problem $P$ such that each node only sees $o(\log^* n)$ other nodes in its radius-$T$ neighborhood, then using the hypergraph version of Ramsey's theorem we can construct another algorithm $A'$ that solves the same problem $P$ in the same running time in an \emph{order-invariant} manner: $A'$ does not use the numerical values of the identifiers but only their relative order.

If we apply this idea to a cycle, we can set $T = o(\log^* n)$. Then each node in $A$ makes its own decision based on the identifiers of $o(\log^* n)$ other nodes, and hence we can turn $A$ into order-invariant $A'$. And now $A'$ cannot solve any non-trivial $\lcl$ problem (consider the case in which the nodes are placed in an increasing order along the cycle; most of the nodes have then isomorphic radius-$T$ neighborhoods w.r.t.\ such a total order).

However, if we try to apply this idea to e.g. $4$-regular trees, we must have $T = o(\log \log^* n)$ in order to guarantee that radius-$T$ neighborhoods contain only $o(\log^* n)$ nodes. The technique does not say anything about e.g.\ algorithms that would have a running time of $T = O(\sqrt{\log^* n})$. Some ad-hoc extensions of the proof technique exist for the case of $d$-dimensional grids \cite{chang17hierarchy}, but nothing of that sort is known about trees.

\paragraph{\boldmath Focus on homogeneous $\lcl$s.}

$\lcl$s are a broad family of problems. At one extreme, we have got symmetry-breaking problems such as $3$-coloring of a cycle: the nodes have isomorphic local neighborhoods, yet they need to produce different outputs. At the other extreme, we have got problems in which nontrivial instances are exactly those in which symmetry has already been broken: for example, typical load balancing problems are trivial if the load is already uniformly distributed, and nontrivial only if there are differences in the load.

We focus in this work only on symmetry-breaking problems. In essence, we look at $\lcl$ problems that are nontrivial inside a regular even-degree tree (or regular even-degree high-girth graph), and trivial in all other cases. We formalize the notion of \emph{homogeneous} $\lcl$s in Section~\ref{sec:weak-is-minimal}. In brief, the idea is that we can turn any given $\lcl$ problem $P$ into a homogeneous $\lcl$ problem $P'$ such that, to solve $P'$, for each local neighborhood it is enough to either solve $P$ or find an irregularity in the input graph (for example, a short cycle or a low-degree node). Problem $P'$ is well-defined in any input graph, but the worst-case instances will be regular balanced trees.

It is easy to see that any homogeneous $\lcl$ is solvable in $O(\log n)$ rounds by brute force, and by prior work we know there are several examples of homogeneous $\lcl$s with deterministic complexities $\Theta(\log n)$ \cite{Brandt2016,chang16exponential,ghaffari17distributed,panconesi95delta}, $\Theta(\log^* n)$ \cite{Linial1992,cole86deterministic,Goldberg1988}, and $\Theta(1)$ \cite{Naor1995}. There is also a gap between $o(\log n)$ and $\omega(\log^* n)$ \cite{chang16exponential}, and another gap between $o(\log \log^* n)$ and $\omega(1)$ \cite{Naor1995,chang17hierarchy}. However, nothing is currently known about the region between $o(\log^* n)$ and $\Omega(\log \log^* n)$.

\paragraph{Contribution.}

We show that there are no homogeneous $\lcl$s with complexity between $o(\log^* n)$ and $\Omega(\log \log^* n)$. Together with prior work \cite{Naor1995,chang16exponential,chang17hierarchy,chang18complexity}, we have now a complete characterization of homogeneous $\lcl$s: each such problem falls in one of the classes listed in Table~\ref{tab:summary}, and all of the classes are nonempty.

In this work we focus on the proof of the gap between $o(\log^* n)$ and $\Omega(\log \log^* n)$, as everything else follows from prior work; we give an overview of the other gap results in Appendix~\ref{sec:gap-appendix}.

\begin{table}
\centering
\begin{tabular}{lll}
\toprule
Deterministic & Randomized & Example \\
\midrule
$\Theta(\log n)$ & $\Theta(\log n)$ & $2$-coloring \\
$\Theta(\log n)$ & $\Theta(\log \log n)$ & sinkless orientation \\
$\Theta(\log^* n)$ & $\Theta(\log^* n)$ & weak $2$-coloring in even-degree graphs \\
$\Theta(1)$ & $\Theta(1)$ & weak $2$-coloring in odd-degree graphs \\
\bottomrule
\end{tabular}
\caption{All possible distributed time complexities of homogeneous $\lcl$s; see Appendix~\ref{sec:gap-appendix} for details.}\label{tab:summary}
\end{table}

\paragraph{Approach, part 1: identifying a minimal problem.}

In complexity theory, the concept of \emph{complete} problems has played a fundamental role. In this work we introduce the complementary concept of \emph{minimal} problems. Informally:
\begin{itemize}[noitemsep]
    \item $P$ is $X$-complete: if we can solve $P$ efficiently, we can solve any $Q \in X$ efficiently.
    \item $P$ is $X$-minimal: if we can solve any $Q \in X$ efficiently, we can solve $P$ efficiently.
\end{itemize}
Let $\LogStar$ be the class of homogeneous $\lcl$s with time complexity $O(\log^* n)$ and $\Omega(\log \log^* n)$. In this work we:
\begin{itemize}[noitemsep]
    \item identify a problem $P$ that is $\LogStar$-minimal,
    \item prove a lower bound of $\Omega(\log^* n)$ for the time complexity of problem $P$.
\end{itemize}
Hence all problems in class $\LogStar$ will require $\Theta(\log^* n)$ rounds.

It turns out that the following problem is $\LogStar$-minimal: \emph{weak $2$-coloring in even-degree graphs}. In a weak $2$-coloring, the task is to label the nodes with colors $1$ and $2$ such that each node with label $1$ is adjacent to at least one node with label $2$ and vice versa. This is the problem studied in the seminal work by Naor and Stockmeyer \cite{Naor1995} -- they showed that weak $2$-coloring is solvable in $O(1)$ rounds in \emph{odd-degree} graphs. We prove a lower bound of $\Omega(\log^* n)$ rounds for \emph{even-degree} graphs.

The claim is trivial in the case of degree $2$ (i.e., cycles); this follows from the standard Ramsey-theoretic argument and also more directly from Linial's \cite{Linial1992} lower bound. The interesting cases are degrees $4, 6, \dotsc$, where one may think that, since it is necessary to break symmetry with just one neighbor, the problem becomes \emph{easier} as the degree increases. Note that, while for the standard $\Delta+1$ coloring problem we can prove a lower bound of $\Omega(\log^* n)$ on graphs of higher degree by reduction from the $2$-regular case, an algorithm for finding a weak $2$-coloring in e.g.\ $4$-regular graphs does \emph{not} imply an algorithm for solving the same problem in $2$-regular graphs.

We prove that weak $2$-coloring in regular trees of any even constant degree requires $\Omega(\log^* n)$ rounds, even for a randomized algorithm, and even if identifiers are exactly in $\{1,\ldots, n\}$.

\paragraph{Approach, part 2: lower bound for weak 2-coloring.}

Our proof technique is based on a \emph{speedup simulation argument}: On a high level, the idea is to show that given an algorithm $A_0$ that solves problem $P_0$ in time $T$, we can construct another algorithm $A_1$ that solves problem $P_1$ in time $T-1$, and eventually an algorithm $A_T$ that solves problem $P_T$ in time $0$. This is a contradiction, as $P_T$ is a nontrivial problem that cannot be solved in $0$ rounds. Hence $P_0$ cannot be solved in $T$ rounds, either.

In prior work, there have been two main types of such arguments. The first flavor is what Linial~\cite{Linial1992} and Naor~\cite{Naor1991} used (see also \cite{Laurinharju2014}). Here in each iteration we have the same problem---vertex coloring---but with different parameters. Here is an informal version of the main idea:
\begin{itemize}
	\item Assume that algorithm $A$ finds a $c$-vertex coloring in time $T$. Then we can construct algorithm $A'$ that finds a $2^c$-vertex coloring in time $T-1$.
\end{itemize}

The second flavor is what Brandt et al.~\cite{Brandt2016} used. Here we have two different problems---in their case so-called sinkless orientation and sinkless coloring---and we alternate between them. Here is an informal version of the main idea:
\begin{itemize}
	\item Assume that algorithm $A$ finds a sinkless orientation in time $T$. Then we can construct algorithm $A'$ that finds a sinkless coloring in time $T$.
	\item Assume that algorithm $A'$ finds a sinkless coloring in time $T$. Then we can construct algorithm $A''$ that finds a sinkless orientation in time $T-1$.
\end{itemize}
Here sinkless orientation is an \emph{edge-based problem}, in which the task is to label each edge, while sinkless coloring is a \emph{node-based problem}, in which the task is to label each node. By alternating between edge-based and node-based perspectives (and also edge-centric and node-centric models of distributed computing), we can turn any algorithm that finds a sinkless orientation into a faster algorithm that solves the same problem (we only lose in the success probability here).

In this work we combine the above two ideas. We alternate between a node-based and an edge-based problem, similar to what Brand et al.~\cite{Brandt2016} did. However, we replace a single node-based problem with a family of node-based problems $P_i$, in the spirit of Linial~\cite{Linial1992} and Naor~\cite{Naor1991}, and similarly we replace a single edge-based problem with a family of edge-based problems $Q_i$:
\begin{itemize}
	\item Assume that algorithm $A$ solves $Q_i$ in time $T$. Then we can construct algorithm $A'$ that solves $P_i$ in time $T$.
	\item Assume that algorithm $A'$ solves $P_i$ in time $T$. Then we can construct algorithm $A''$ that solves $Q_{i+1}$ in time $T-1$.
\end{itemize}
Iterating this process, we get a $0$-time algorithm for solving a nontrivial problem $P_T$ (with a sufficiently high success probability), which will be a contradiction.

In our case the node-based problem $P_i$ will be weak coloring with $f(i)$ colors, for a suitable choice of function $f(i)$. One of the main challenges is identifying a suitable edge-based problem $Q_i$ so that the recursion works; a somewhat unusual variant of non-proper edge coloring will do the trick.

\paragraph{Open questions for future work.}

The main question left open is to extend the result from homogeneous $\lcl$s to arbitrary $\lcl$s on trees or high-girth graphs. We conjecture that the same gap between $o(\log^* n)$ and $\omega(1)$ holds also in that case.

\section{Model and definitions}
We consider simple, undirected and connected graphs $G=(V,E)$. We denote by $n=|V|$ the size of the graph. The \emph{distance} between two nodes $u$ and $v$, $\dist(u,v)$, is the number of edges on the shortest path between $u$ and $v$. A \emph{labeling} of a graph $G$ is a mapping $\ell \colon V \to \Sigma$. If we consider a labeled graph $(G,\ell)$, the \emph{$t$-radius neighborhood} of a node $v$, $B_t(v)$, is a pair $(G',\ell')$, where $G'$ is the subgraph of $G$ induced by all nodes at distance at most $t$ from $v$, and $\ell'$ is the labeling function $\ell$ restricted to the nodes of $G'$. The $t$-radius neighborhood of an edge is the union of the $t$-radius neighborhoods of its endpoints.

\subsection{\boldmath The \local{} model}
We consider the standard \local{} model of distributed computing \cite{Peleg2000,Linial1992}. A distributed network is represented as a graph $G=(V,E)$, where each node $v \in V$ represents a computational entity, and an edge $e \in E$ represents a communication link between two entities.  Each node $v \in V$ runs the same algorithm $\mathcal{A}$. An algorithm is randomized if, on each node, it can access a private sequence of unbiased random bits, otherwise it is deterministic. The sequence of random bits can be seen as a labeling of the nodes, mapping a node to its own sequence of bits. The maximum degree of $G$ is bounded by some constant $\Delta$, and this parameter is known to $\mathcal{A}$. Also, $\mathcal{A}$ knows the size of the network $n=|V|$. Nodes may be provided with identifiers in $\{1,\ldots,n^c\}$, where $c \ge 1$ is a constant known to the algorithm. If the algorithm is not provided with identifiers, then $\mathcal{A}$ is said to be \emph{anonymous}. While in the standard \local{} model nodes may be provided with some additional input, for our purpose we assume that nodes do not have any.

Initially, each node $v \in V$ knows just its degree, the size of the network $n$ and the maximum degree $\Delta$. Then, the computation proceeds in synchronous rounds. At each round, each node $v \in V$ does the following:
\begin{itemize}[noitemsep]
	\item sends a (possibly different) message to each of its neighbors,
	\item receives the messages sent by the neighbors,
	\item performs some local computation.
\end{itemize}

After each round, a node may decide to halt and produce an output. We do not impose any constraint on the size of the messages exchanged by the nodes. Also, we do not impose constraints on the local computational power. The only parameter we are interested to analyze is the number of rounds required to solve a task. The running time of an algorithm is the number of rounds required until all nodes have stopped. Since there are no constraints on the bandwidth nor on the amount of local computation, we can see a $T$-round algorithm $\mathcal{A}$ as a mapping from radius-$T$ neighborhoods to valid outputs.

\subsection{Locally checkable labelings}
Locally checkable labellings (\lcl{}s) were introduced by Naor and Stockmeyer \cite{Naor1995}. These graph problems are described by a constant-size set of possible input labels $\Sigma_{\In}$, a constant-size set of possible output labels $\Sigma_{\Out}$, and a set of local constraints $\mathcal{C}$.  Given an \lcl{} $P = (\Sigma_{\In},\Sigma_{\Out},\mathcal{C})$, each node $v \in V$ is labeled with some input from $\Sigma_{\In}$, and the goal is to label all nodes with some output from $\Sigma_{\Out}$, such that each $r$-radius neighborhood of the graph satisfies the constraints $\mathcal{C}$. Here $r=O(1)$ is called the \emph{radius} of the \lcl{}. We assume that the input graphs have maximum degree at most $\Delta=O(1)$. Since $\Delta$ and $r$ are constants, the local constraints can be described as a finite set of valid $r$-radius input-output labeled neighborhoods. In this work, we consider \lcl{}s in which nodes have no input, that is, $\Sigma_{\In} = \{\bot\}$.

An example of \lcl{} is $(\Delta+1)$-coloring. In this case, $\Sigma_{\In} = \{\bot\}$, $\Sigma_{\Out} = \{1,\ldots,\Delta+1\}$, $r=1$, and the local constraints impose that a node of color $c \in \Sigma_{\Out}$ cannot be adjacent to another node of the same color $c$.

Let $\dlocal(T)$ and $\rlocal(T)$ be the classes of \lcl{}s that can be solved in the $\local$ model in time $T$ using a deterministic and a randomized algorithm, respectively.


\section{Minimal symmetry breaking} \label{sec:minimal}

Our plan is to show that, in a certain formal sense, weak colorings form a family of minimal symmetry breaking problems. We start with the observation that given a distance-$k$ weak $c$-coloring, for any constant $k$ and $c$, it is possible to compute a weak $2$-coloring in constant time. It then follows that if we consider any $\lcl$ $P$ that is nontrivial inside regular even-degree trees, a solution for $P$ also gives a solution for weak $2$-coloring.

\subsection{Weak colorings are constant-time reducible to each other}

Formally we consider the following family of weak colorings.
\begin{definition}
  For $c,k = O(1)$, a labeling $\varphi \colon V \to [c]$ is a distance-$k$ weak $c$-coloring if for every node $v$, there exists $u$ with $\dist(v,u) \leq k$ and $\varphi(v) \neq \varphi(u)$.
  A distance-$1$ weak $c$-coloring is also called a weak $c$-coloring.
\end{definition}

\begin{lemma} \label{lem:weak-col-family}
  Given a distance-$k$ weak $c$-coloring, for $k, c = O(1)$, it is possible to compute a weak $2$-coloring in time $O(1)$.
\end{lemma}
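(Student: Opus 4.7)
My plan is to proceed in two stages: first convert the distance-$k$ weak $c$-coloring into a distance-$1$ weak $c'$-coloring with $c' = ck$, and then reduce that to a weak $2$-coloring via the parity of BFS levels from weak local minima.

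For the first stage, I would have each node $v$ gather its $k$-radius view in $k$ rounds and compute $d(v) = \min\{\dist(v,u) : \varphi(u) \neq \varphi(v)\}$, which lies in $\{1,\dotsc,k\}$ by hypothesis. Define $\varphi'(v) = (\varphi(v), d(v))$, using at most $ck$ labels. To verify that $\varphi'$ is a distance-$1$ weak coloring, I split into two cases. If $d(v) = 1$, then the witness at distance $1$ already differs in $\varphi$, hence also in $\varphi'$. If $d(v) > 1$, then every neighbor of $v$ has $\varphi$-color equal to $\varphi(v)$; however, following the shortest path to the nearest differently-$\varphi$-colored node, the first step lands on a neighbor $v_1$ with $\varphi(v_1) = \varphi(v)$ and $d(v_1) \leq d(v)-1$, so $\varphi'(v_1) \neq \varphi'(v)$.

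For the second stage, let $M$ be the set of weak local minima of $\varphi'$, i.e., nodes all of whose neighbors have $\varphi'$-value at least their own. Starting from any $v$, repeatedly stepping to a strictly-smaller-$\varphi'$-labeled neighbor yields a strictly decreasing sequence in a totally ordered set of size at most $c' = ck$, so the walk terminates within $c'-1$ steps at some $w \in M$ (whose weak coloring witness is automatically a strictly larger neighbor). Hence every node is within graph distance $c'-1$ of $M$. In $O(c')$ rounds I compute $\ell(v) = \dist(v, M)$ and output $\psi(v) = (\ell(v) \bmod 2) + 1$. Any $v$ with $\ell(v) \geq 1$ has a BFS predecessor at level $\ell(v)-1$, so it has a neighbor of opposite parity; any $v \in M$ has a weak coloring witness $u$ with $\varphi'(u) > \varphi'(v)$, so $u \notin M$ and $\ell(u) = 1$, giving $\psi(u) = 2 \neq 1 = \psi(v)$.

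The step I expect to need the most care is the case $d(v) > 1$ in the first stage: the instinct is to try to exhibit a $\varphi'$-distinct neighbor via the $\varphi$-coordinate, but that is precisely ruled out in this case, and one must instead track the strict decrease of the $d$-coordinate along the path toward the witness. Everything else is essentially bookkeeping, and the total running time is $k + O(ck) = O(1)$ since $c, k = O(1)$, comfortably within the constant-time budget the lemma demands.
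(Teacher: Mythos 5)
Your proof is correct, and it takes a genuinely different route from the paper's in its second stage. In the first stage both proofs pass to a distance-$1$ weak coloring by tracking the distance to the nearest differently-colored node; the paper keeps only the \emph{parity} of that distance (yielding a weak $2c$-coloring), while you keep the exact value (yielding a weak $ck$-coloring). Your version trades a slightly larger but still constant palette for a cleaner verification: the paper must argue that the closest witness of $v$ remains the closest witness of the next node on the path, whereas you only need the elementary observation that $d(v_1) \le d(v)-1$, so the second coordinate of $\varphi'$ already separates $v$ from $v_1$. The substantive divergence is in the second stage. The paper builds an oriented pseudoforest (each node pointing to a differently-colored neighbor), runs Cole--Vishkin to get a $3$-coloring in $O(\log^* c)$ rounds, and then greedily extracts a maximal independent set, which it reinterprets as a weak $2$-coloring. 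You instead observe that, under any fixed total order on the $c' = ck$ labels, every node reaches a local minimum of $\varphi'$ by a strictly descending walk of length at most $c'-1$, so the set $M$ of local minima is a distance-$(c'-1)$ dominating set; BFS-distance parity to $M$ then gives a weak $2$-coloring, and the case $v \in M$ is handled by noting that $v$'s weak-coloring witness $u$ has $\varphi'(u) > \varphi'(v)$ and hence $u \notin M$, giving $\ell(u)=1$. Your route avoids Cole--Vishkin and pseudoforests entirely at the cost of $O(ck)$ rounds instead of $O(k + \log^* c)$; since $c,k = O(1)$ both are $O(1)$ as required, but the paper's version would remain efficient if $c$ grew, while yours is more elementary for the constant regime that the lemma actually needs.
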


\begin{proof}
  First we compute a weak $2c$-coloring as follows. Let $\varphi$ denote the initial distance-$k$ weak $c$-coloring. Each node $v$ finds the closest node $u$ such that $u$ has a different color $\varphi(u) \neq \varphi(v)$ from $v$. Break ties by choosing one of the nodes with the smallest such color. Then $v$ outputs the color $\varphi'(v) = (\varphi(v), \dist(u,v) \bmod 2)$. The number of colors in $\varphi'$ is $2c$, and $\varphi'$ can be computed in $k$ rounds.

  For node $v$, if there is a neighbor $u$ such that $\varphi(v) \neq \varphi(u)$, then $\varphi'(v) \neq \varphi'(u)$ since the first elements are the old colors. Now assume that this was not the case and consider $w$, a neighbor of $v$ on the path to $u$, the closest node with a different color from $v$. Now $u$ must be the closest node of different color to $w$, since if there was a $u' \neq u$ with $\dist(w,u') < \dist(w,u)$ and $\varphi(w) \neq \varphi(u')$, then $u'$ would be the closest node to $v$ with a different color as well. Since $\dist(v,u) = \dist(w,u)+1$ we have that $\varphi'(v) \neq \varphi'(w)$.

  The last step is to reduce the number of colors from $2c$ to $2$, and here we can directly apply the algorithm by Naor and Stockmeyer \cite{Naor1995}. First use the standard Cole--Vishkin color reduction algorithm~\cite{cole86deterministic,Goldberg1988,Barenboim2013} to go from $2c$ colors to $3$ colors in time $O(\log^* c)$: each node picks one of its neighbors with a different color, and the color reduction algorithm is run on the resulting oriented pseudoforest~\cite[Section 5.2]{rybicki11exact}. This reduces the number of colors to 3. Finally, we use the 3-coloring to greedily find a maximal independent set in the pseudoforest and interpret this as a weak 2-coloring in the natural way.
\end{proof}

\subsection{Minimality of weak colorings}\label{sec:weak-is-minimal}

We start by defining a \emph{pointer problem} $P^*$, which we will later use to formalize the concept of \emph{homogeneous} $\lcl$s. Assume that $\Delta \in \{4,6,\dotsc\}$. In problem $P^*$, each node $v$ outputs a natural number $0 \le d(v) < \Delta$ and a possibly empty \emph{pointer} $p(v)$ that points to one of its neighbors. We say that a node $v$ is \emph{$P^*$-happy} if all of the following holds:
\begin{enumerate}[noitemsep]
  \item If $\deg(v) = \Delta$, then $p(v) = u$, where $u$ is one of the neighbors of $v$.
  \item If $\deg(v) < \Delta$, then $p(v) = \bot$ and $d(v) = \deg(v)$.
  \item If $p(v) = u$, then $d(v) = d(u)$ (pointer chains are labeled consistently).
  \item If $p(v) = u$, then $p(u) \neq v$ (pointer chains do not backtrack).
  \item If $p(v) = u$, then $p(u) \neq \bot$ or $\deg(u) = d(v)$ (pointer chains terminate at correct degree).
\end{enumerate}
We have a feasible solution for $P^*$ if all nodes are $P^*$-happy. Note that $P^*$-happiness can be verified locally in 1 round. If we encode the pointers as port numbers, we will have constantly many possible output labels for any constant $\Delta$, and hence $P^*$ is an $\lcl$ problem.

The definition of $P^*$ implies that there is a chain of pointers starting from each node $v$ of degree $\Delta$, consistently labeled with some $d < \Delta$, and that chain can only end in a node of degree $d < \Delta$, or there is a cycle in the pointer chain.

We call nodes of degree less than $\Delta$ and cycles of nodes of degree $\Delta$ \emph{irregularities}. We are interested in the closest irregularity to a given node. The distance $\dist(v,C)$ to a cycle $C$ is defined as $\min_{u \in C} \{ \dist(v,u) \} + \ell(C)$, where $\ell(C)$ is $|C|/2$ for even cycles and $\lfloor|C|/2\rfloor+1$ for odd cycles. This ensures that if a node does not have any irregularities within distance $r$, then its $r$-neighborhood is a regular tree.

The following lemma states that the 1-neighborhood of each node within distance $r$ to an irregularity can be feasibly labeled according to $P^*$ in time $O(r)$; we postpone the proof of this lemma to Section~\ref{ssec:homog-solving}.

\begin{restatable}{lemma}{lemhomoglocal} \label{lem:homog-local}
  Let $G$ be a graph of maximum degree $\Delta$. Problem $P^*$ can be solved in the $1$-neighborhood of all nodes with irregularities within distance $r$ in time $O(r)$.
\end{restatable}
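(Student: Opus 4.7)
The plan is a weighted BFS from irregularities with a globally canonical tie-breaking rule. In $O(r)$ rounds every node collects its $r$-hop view and identifies, if it has one within distance $r$, the irregularity minimising the modified distance (graph distance for low-degree nodes, graph distance plus $\ell(C)$ for cycles). Ties are broken by a canonical representative: the low-degree node's identifier, or the minimum identifier on the cycle. An $r$-hop view suffices because any cycle $C$ attaining this minimum at distance at most $r$ satisfies $\ell(C) \le r$, hence $|C| \le 2r$, and so $C$ fits entirely inside any relevant node's view.

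Given the nearest irregularity $I_v$, the output rule is: if $\deg(v) < \Delta$, set $d(v) = \deg(v)$ and $p(v) = \bot$; otherwise, if $I_v$ is a low-degree node $u^*$, set $d(v) = \deg(u^*)$ and $p(v)$ to the neighbor of $v$ closest to $u^*$ (ties broken by port number); if $I_v = C$ is a cycle and $v \notin C$, set $d(v) = 0$ and $p(v)$ to the neighbor of $v$ closest to $C$; if $v \in C$, set $d(v) = 0$ and $p(v)$ to the successor of $v$ in the canonical cyclic orientation of $C$ (the orientation starting at the minimum-identifier node of $C$ and traversing first its smaller-identifier cycle neighbor).

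The correctness reduces to a single stability claim: whenever $u = p(v)$, we have $I_u = I_v$. Indeed, for any other irregularity $I'$,
\[
\dist(u, I') \ge \dist(v, I') - 1 \ge \dist(v, I_v) - 1 = \dist(u, I_v),
\]
so $I_v$ still attains the minimum weighted distance at $u$; and because the canonical ordering on irregularities is observer-independent, ties among irregularities are resolved identically at $v$ and at $u$. Granted $I_u = I_v$, condition (3) is immediate, condition (4) follows because $p(u)$ is either strictly closer to the common target than $u$, or (for cycle-interior pointers) the next cycle node, which differs from $v$ since $|C| \ge 3$ in a simple graph, and condition (5) holds because $p(u) = \bot$ only when $u = u^*$, in which case $\deg(u) = d(v)$ by construction.

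The main technical obstacle is guaranteeing that tie-breaking is consistent across the boundaries between the ``Voronoi cells'' of distinct irregularities; a locally chosen rule could easily cause two adjacent nodes to prefer different equidistant candidates and thereby assign inconsistent $d$-values along a pointer. The canonical identifier-based order on irregularities sidesteps this, and once it is in place the verification of the five $P^*$-happiness conditions is routine bookkeeping.
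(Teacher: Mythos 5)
Your argument hinges on the stability claim ``whenever $u = p(v)$, we have $I_u = I_v$,'' and that claim is false for cycle-interior pointers. When $v$ lies on a cycle $C$ and $p(v)=u$ is its cyclic successor, both $u$ and $v$ are at graph-distance $0$ from $C$, so $\dist(u,C)=\dist(v,C)=\ell(C)$. Your displayed chain
\[
\dist(u, I') \ge \dist(v, I') - 1 \ge \dist(v, I_v) - 1 = \dist(u, I_v)
\]
therefore ends with a wrong equality: the right-hand side equals $\dist(u,I_v)-1$, not $\dist(u,I_v)$. This leaves open $\dist(u,I') = \dist(u,I_v)-1$, i.e.\ some other irregularity strictly beating $C$ at $u$. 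Concretely, if a smaller cycle $C'$ is tied with $C$ at $v$ (say $\dist_{\mathrm{graph}}(v,C')+\ell(C')=\ell(C)$, with $C$ winning the ID tie-break) and the shortest path from $v$ to $C'$ exits through $u$, then at $u$ one has $\dist(u,C')=\ell(C)-1<\ell(C)=\dist(u,C)$, so $I_u = C' \neq C = I_v$. Your observer-independent tie-breaking cannot help here because at $u$ there is no longer a tie. Moreover, your tie-breaking rule compares modified distances across irregularity \emph{types}, so the same phenomenon also allows $I_u$ to be a low-degree node $u^*$ while $I_v$ is a cycle, giving $d(v)=0\neq\deg(u^*)=d(u)$, an outright violation of condition~(3); you never check this.

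The paper does not assert $\irr(p(v))=\irr(v)$. It does two things instead. First, it gives cycle irregularities absolute priority over low-degree nodes, which guarantees that when $v\in C$ the successor $u\in C$ also has a cycle-type nearest irregularity; this makes condition~(3) automatic in that regime because every cycle-type irregularity is labeled $d=0$, regardless of which cycle it is. It then verifies conditions~(4)--(5) directly even when $\irr(p(v))$ is a \emph{different} cycle $C'$. Second, for nodes whose nearest irregularity is a low-degree node, the paper has them look an additional $r$ hops to detect whether some node on the pointer path will instead fall back to a cycle, and it adjusts the $d$-label to $0$ in that case; your single-radius plan never encounters the need for this second pass. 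The skeleton (BFS toward the canonically-chosen nearest irregularity) matches the paper, but the real content of the proof is exactly the behaviour at the boundary between basins of attraction, and that is where your argument breaks.
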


Lemma~\ref{lem:homog-local} yields an upper bound of $O(\log n)$ for $P^*$, as there is always an irregularity at such a distance. Then in Section~\ref{ssec:homog-complexity} we prove a matching lower bound, which gives the following theorem:

\begin{restatable}{theorem}{thmsymcomplexity} \label{thm:symcomplexity}
  The distributed time complexity of $P^*$ is $\Theta(\log_{\Delta} n)$.
\end{restatable}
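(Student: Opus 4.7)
The upper bound $O(\log_\Delta n)$ is a direct corollary of \lemmaref{lem:homog-local}. The only extra ingredient is the observation that in any $n$-node graph of maximum degree $\Delta$, every vertex $v$ has an irregularity within distance $R = O(\log_\Delta n)$: otherwise $B_R(v)$ would have to contain a complete $\Delta$-regular tree of radius $R$, which already has $\Omega((\Delta-1)^R)$ vertices and cannot fit inside $n$ vertices once $R$ exceeds $\log_{\Delta-1} n + O(1)$. Plugging this $R$ into \lemmaref{lem:homog-local} gives the desired $O(\log_\Delta n)$-round algorithm.

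\paragraph{Hard instance and structural facts.}
For the matching $\Omega(\log_\Delta n)$ lower bound my plan is to take, as the adversarial instance, a complete $\Delta$-regular tree $T_R$ of depth $R = \Theta(\log_\Delta n)$, in which every internal node has degree $\Delta$, every leaf has degree $1$, and the center $c$ sits at distance exactly $R$ from every leaf. Two structural facts about $P^*$ on such a tree drive the argument. First, since a non-backtracking walk in a tree is automatically a simple path, the chain $c, p(c), p^2(c), \dots$ must be a simple path of length at least $R - T$ ending at some leaf. Second, since every leaf has degree $1$, constraint~(5) forces $d(c) = 1$, and constraint~(3) then propagates this label all along the chain, so every internal node on the chain commits to $d = 1$ and to a pointer whose choice must be consistent with the choices of its neighbors.

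\paragraph{Main step.}
The core of the proof is to show that a $T$-round algorithm, whose output at each deep node depends only on a fully $\Delta$-regular $T$-ball, cannot consistently orchestrate these $R$-long chains. A natural plan is an indistinguishability argument: using a Ramsey-type pruning on identifier patterns or a carefully designed nearly symmetric identifier assignment, construct two instances that coincide on the $T$-views of $c$ and of a chosen neighbor $u$ but on which every valid $P^*$-solution forces opposite orientations of the edge $\{c,u\}$. Since $A$ produces the same outputs on both instances, at least one of them must yield $p(c) = u$ and $p(u) = c$ simultaneously, violating constraint~(4). An alternative route would be a reduction: turn any $T$-round algorithm for $P^*$ on $T_R$ into a $T$-round algorithm for a sinkless-orientation-style edge-labeling problem on a related high-girth $\Delta$-regular graph, and then invoke the known deterministic $\Omega(\log_\Delta n)$ lower bound for that problem.

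\paragraph{Main obstacle.}
The hard part is that $\Omega(\log_\Delta n) = \Omega(\log n / \log \Delta)$ is well beyond what pure Naor--Stockmeyer Ramsey arguments can deliver (they cap at $\Omega(\log^* n)$). A successful proof therefore cannot merely appeal to order-invariance; it must exploit the global rigidity of $P^*$---the fact that chains must actually reach matching-degree irregularities and that $d$-labels must remain consistent along chains of length $\Theta(R)$---rather than just the unavailability of identifier information in the view. Engineering the indistinguishability construction (or the reduction) so that it withstands arbitrary polynomial-size identifiers while exploiting this long-range consistency constraint is the delicate technical point; everything else is essentially bookkeeping around \lemmaref{lem:homog-local} and the structural facts above.
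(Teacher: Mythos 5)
Your upper bound is correct and matches the paper: \lemmaref{lem:homog-local} plus the observation that every node in an $n$-node graph of maximum degree $\Delta$ has an irregularity within distance $O(\log_\Delta n)$ gives \lemmaref{lem:symupper}.

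The lower bound proposal, however, misses the paper's key idea and heads toward a harder (and, as you yourself note, unresolved) argument. You correctly observe that on the complete $\Delta$-regular balanced tree $T$ of radius $r$, the center $v$ is forced to output $d(v)=1$ because every pointer chain must terminate at a degree-$1$ leaf. But you then try to derive a contradiction by forcing two adjacent nodes to point at each other, and you worry about Ramsey-style order-invariance arguments and identifier engineering, which indeed top out at $\Omega(\log^* n)$ and cannot reach $\Omega(\log_\Delta n)$. The paper does something much more direct with the $d$-label observation you already have: it builds a second tree $T'$ on the same $n$ vertices which agrees with $T$ on $B_{r-2}(v)$, but in which every node $u$ at depth $r-1$ has one leaf-child removed and re-attached elsewhere, so $\deg(u) = \Delta - 1$. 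In $T'$, constraints (2), (3), and (5) force $d(v) = \Delta - 1$, while in $T$ they force $d(v) = 1$. Any algorithm with running time $t(n) < r - 1 = \Theta(\log_\Delta n)$ sees identical $t$-balls around $v$ in both instances and thus outputs the same $d(v)$, so it errs with probability at least $1/2$ on one of them. This is a pure locality/indistinguishability argument: no Ramsey, no identifier symmetrization, no appeal to sinkless orientation. Your ``main obstacle'' paragraph correctly diagnoses that Naor--Stockmeyer machinery is the wrong tool, but you should then notice that the long-range consistency of the $d$-label (not the pointer orientation) is what can be forced to two different values on two locally indistinguishable inputs, and that is all that is needed.
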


\paragraph{\boldmath Homogeneous \lcl{}s.}
In a homogeneous $\lcl$, we will need to solve some problem $P$ in those parts of the graph in which we do not have any irregularities, but we do not care about the correctness in the neighborhoods in which we have irregularities. We formalize this idea by constructing a homogeneous version $P_H$ of $P$, in which any node can always fall back to solving the pointer problem~$P^*$. Hence whenever a node sees any irregularities nearby, it can simply choose to construct a pointer chain pointing to an irregularity.

Recall that we assumed that $\Delta \in \{4,6,\dotsc\}$. Given an \lcl{} $P$ with maximum degree $\Delta$, we form a \emph{$\Delta$-homogeneous} \lcl{} $P_H = (P, P^*)$, in which the labeling $\ell_H$ is a pair $(\ell_P, \ell_{P^*})$, where we also allow $\ell_{P^*}(v)$ to be empty. At each node $v$ the verifier for $P_H$ accepts if and only if one of the following is true:
\begin{enumerate}[noitemsep]
  \item $v$ has a nonempty $P^*$-label and $v$ is $P^*$-happy, or
  \item $v$ has an empty $P^*$-label and verifier for $P$ accepts.
\end{enumerate}
In particular, if an algorithm starts solving $P^*$ in some neighborhood, the pointer chains it creates cannot terminate without meeting an irregularity.

\paragraph{\boldmath Classification of homogeneous \lcl{}s.}
Our aim is to characterize completely the computational complexity of homogeneous $\lcl$s. We will proceed as follows: In Sections \ref{sec:lboverview}--\ref{sec:generalization} we study the complexity of weak $2$-coloring and prove a lower bound of $\Omega(\log^* n)$ for it. Then in Section~\ref{sec:homog-proofs} we return back to the topic of homogeneous $\lcl$s. In Section~\ref{ssec:homog-classfication} we use Lemma~\ref{lem:weak-col-family} to show that homogeneous weak $2$-coloring is a \emph{minimal} nontrivial homogeneous problem, and hence there cannot be any homogeneous $\lcl$ with a complexity in the range between $\omega(1)$ and $o(\log^* n)$. Together with the previous work on the complexity theory of \lcl{} problems (see Appendix~\ref{sec:gap-appendix} for details), we will then obtain the following theorem:

\begin{restatable}{theorem}{thmhomogcomplexity} \label{thm:homog-complexity}
  A $\Delta$-homogeneous \lcl{} $P_H = (P, P^*)$ can have the following complexities.
  \begin{enumerate}[noitemsep]
    \item $O(1)$ deterministic and randomized,
    \item $\Theta(\log^* n)$ deterministic and randomized,
    \item $\Theta(\log n)$ deterministic and $\Theta(\log \log n)$ randomized, or
    \item $\Theta(\log n)$ deterministic and randomized.
  \end{enumerate}
\end{restatable}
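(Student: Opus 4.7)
The proof synthesizes four ingredients: (i) the structural results about $P^*$ from \sectionref{sec:minimal}, (ii) the $\Omega(\log^* n)$ lower bound for weak $2$-coloring developed in Sections \ref{sec:lboverview}--\ref{sec:generalization}, (iii) the minimality of homogeneous weak $2$-coloring to be established in \sectionref{ssec:homog-classfication}, and (iv) the known gap theorems collected in \appendixref{sec:gap-appendix}. The universal upper bound of $O(\log n)$ comes for free from \theoremref{thm:symcomplexity} combined with \lemmaref{lem:homog-local}: every node can simply output a valid $P^*$-label, which satisfies the homogeneous verifier irrespective of $P$. This caps both the deterministic and the randomized complexity of any $P_H$ at $O(\log n)$.

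The next step, which is the one that uses the novel contributions of this paper, is ruling out complexities strictly between $\omega(1)$ and $o(\log^* n)$. Suppose some algorithm $\mathcal{A}$ solves $P_H$ in $T(n) = o(\log^* n)$ rounds. Inside a regular even-degree tree whose radius is much larger than $T(n)$ no $P^*$ pointer chain output by $\mathcal{A}$ can terminate, since it would have to reach an irregularity; hence $\mathcal{A}$ must produce a correct $P$-labeling there. The minimality reduction of \sectionref{ssec:homog-classfication}, together with \lemmaref{lem:weak-col-family}, then converts $\mathcal{A}$ into an $o(\log^* n)$-round algorithm for weak $2$-coloring in a regular even-degree tree, contradicting the lower bound of Sections \ref{sec:lboverview}--\ref{sec:generalization}. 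Thus $T(n)$ must be $O(1)$.

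For the remaining gaps above $\log^* n$ I would quote prior work. The Chang--Kopelowitz--Pettie deterministic gap \cite{chang16exponential} rules out deterministic complexities in $(\omega(\log^* n), o(\log n))$, and the Chang--Pettie randomized speedup \cite{chang17hierarchy,chang18complexity} combined with the same gap pins sub-$\log n$ randomized complexities to $O(1)$, $\Theta(\log^* n)$, or $\Theta(\log \log n)$. To witness non-emptiness of each class, I would exhibit: weak $2$-coloring when $\Delta$ is odd, realizing $\Theta(1)$ via \cite{Naor1995}; weak $2$-coloring when $\Delta$ is even, realizing $\Theta(\log^* n)$ by this paper; homogeneous sinkless orientation, realizing $\Theta(\log n)$ deterministic and $\Theta(\log \log n)$ randomized via \cite{Brandt2016,ghaffari17distributed}; and homogeneous $2$-coloring of regular even-degree trees, realizing $\Theta(\log n)$ in both models. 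Each construction amounts to pairing the base problem $P$ with $P^*$ so that the verifier can always fall back to $P^*$-happiness near irregularities.

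The principal obstacle, deferred to the referenced sections rather than handled here, is the new $\Omega(\log^* n)$ randomized lower bound for weak $2$-coloring in even-degree regular trees; the minimality reduction is considerably easier, and once both are in hand the classification itself is a short bookkeeping argument over $P^*$ and prior gap theorems.
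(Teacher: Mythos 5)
Your overall decomposition -- $O(\log n)$ cap via $P^*$, a new $\omega(1)$--$o(\log^* n)$ gap via weak $2$-coloring, prior gap theorems for the rest -- is the right strategy, but the central step has a genuine flaw that the paper's proof is careful to avoid. You argue that any $o(\log^* n)$-round algorithm $\mathcal{A}$ for $P_H$ must produce a $P$-labeling inside a deep regular tree, and that \lemmaref{lem:weak-col-family} then converts $\mathcal{A}$ into an $o(\log^* n)$-round weak $2$-coloring algorithm, a contradiction. This conversion does \emph{not} always go through. If some constant label is a valid $P$-output inside the $\Delta$-regular tree, then $\mathcal{A}$'s $P$-output may simply be constant there, and a constant labeling is not a distance-$r$ weak coloring -- there is nothing to feed to \lemmaref{lem:weak-col-family}, and no contradiction. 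In fact in that case $P_H \in \dlocal(O(1))$: output the constant label, and fall back to $P^*$ near irregularities via \lemmaref{lem:homog-local}. The paper therefore splits into two cases: (a) some constant label is valid inside $\Delta$-regular trees, giving complexity $O(1)$; and (b) no constant label is valid, in which case the radius-$r$ verifier forces every node to see a differently-labeled node within distance $r$, so the $P$-output is a distance-$r$ weak coloring with constantly many colors, and only then does your reduction yield $\Omega(\log^* n)$. Without this dichotomy your stated conclusion ``Thus $T(n)$ must be $O(1)$'' also does not follow -- a contradiction shows $T(n) = \Omega(\log^* n)$, not $T(n) = O(1)$; the gap claim holds precisely because of the case split. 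Note also that citing ``the minimality reduction of \sectionref{ssec:homog-classfication}'' is circular, since that section \emph{is} the proof of the theorem you are writing.

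A second, smaller gap: you invoke the randomized $O(\log\log n)$ speedup as if it applied directly to $P_H$, but \theoremref{thm:sublog-trees} concerns trees, while $P_H$ is defined on arbitrary bounded-degree graphs. The paper glues the tree algorithm to $P^*$ by running, in parallel, the $T(n)$-round tree algorithm for $P$ and the $O(T(n)+r)$-round procedure of \lemmaref{lem:homog-local}: a node whose $(T(n)+r)$-ball is a tree is handled by the former, and any other node has a nearby irregularity and solves $P^*$. This gluing is a technical observation specific to homogeneous \lcl{}s and should be spelled out rather than absorbed into ``quote prior work.''
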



\section{Lower bound for weak 2-coloring: overview}\label{sec:lboverview}

In Sections \ref{sec:speedup}--\ref{sec:generalization} we will give a lower bound of $\Omega(\log^* n)$ for weak 2-coloring. Specifically, we will prove the following theorem that holds even if the input graph is promised to be a regular tree of any even constant degree, also for randomized algorithms, even if a globally consistent orientation is provided, and even if nodes are provided with identifiers exactly in $\{1,\ldots,n\}$.

\begin{restatable}{theorem}{generalizedresult}\label{thm:generalizedresult}
	Solving weak 2-coloring on trees of any even constant degree with global success probability at least $\frac{1}{2}$ requires $\Omega(\log^* n)$ communication rounds.
\end{restatable}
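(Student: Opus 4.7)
The plan is to establish the lower bound by a \emph{speedup simulation} in the spirit of Linial--Naor and Brandt et al., alternating between a family of node-labeling problems $P_i$ and a family of edge-labeling problems $Q_i$ on $\Delta$-regular trees with $\Delta$ even. I would take $P_0$ to be weak $2$-coloring, let $P_i$ be weak coloring with $f(i)$ colors for a rapidly growing $f$ (an iterated exponential), and design $Q_i$ as a tailored non-proper ``weak'' edge coloring with matching parameters. The two reductions I would aim to prove are
\begin{align*}
Q_i \in \dlocal(T) &\;\Longrightarrow\; P_i \in \dlocal(T), \\
P_i \in \dlocal(T) &\;\Longrightarrow\; Q_{i+1} \in \dlocal(T-1),
\end{align*}
where the first step is a local postprocessing at each node of its incident $Q_i$-labels (no new communication required), and the second step is the round-saving move: a $T$-round algorithm for $P_i$ is determined by the radius-$T$ centered view at a node, so at an edge $e=\{u,v\}$ the radius-$(T-1)$ view centered at $e$ already determines what the algorithm would output at both endpoints, giving an edge label in $f(i)^2 \le f(i+1)$ symbols from which the weak-coloring constraints of $P_i$ translate into the intended constraints of $Q_{i+1}$.

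Iterating these two steps $T$ times converts a hypothetical $T$-round algorithm for $P_0$ into a $0$-round algorithm for $P_T$ or $Q_T$ using $f(T)$ labels. A $0$-round algorithm at an interior node of a $\Delta$-regular tree depends only on $(n,\Delta)$ and therefore outputs a constant label at every such node, which cannot satisfy any weak-coloring-style constraint. Taking $T = c\log^* n$ for a suitably small constant $c$ keeps $f(T)$ in the ``\lcl{}-friendly'' range and yields the contradiction, proving the deterministic ID-based lower bound. To extend to the randomized and $\{1,\dots,n\}$-identifier statements of Theorem~\ref{thm:generalizedresult}, I would precede the speedup with a Naor--Stockmeyer-style Ramsey argument: a randomized algorithm succeeding with probability $\ge 1/2$ yields, after an appropriate fixing and restriction to a large monochromatic subset of identifier patterns, a deterministic order-invariant algorithm of the same running time, to which the speedup then applies unchanged.

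The main obstacle, and the novelty explicitly flagged in the introduction, will be pinning down the edge-problem family $Q_i$ that makes both simulation steps close up: $Q_i$ must be \emph{weak enough} that the edge labels produced in the node-to-edge step form a valid $Q_{i+1}$-labeling, yet \emph{strong enough} that each node can decode a valid $P_i$-color from its incident $Q_i$-labels in zero additional rounds. Following the hint in the overview, I expect $Q_i$ to be a weak, asymmetric variant of non-proper edge coloring with a designated-endpoint flavor that keeps the iteration balanced. A secondary technical point is that the even-degree hypothesis must be preserved throughout the recursion, since the base problem is trivial at odd degrees by Naor--Stockmeyer \cite{Naor1995}; every $P_i$ and $Q_i$ must inherit its nontriviality on regular trees specifically from the assumption that $\Delta$ is even, so that the final $0$-round contradiction actually lives in the promised input class and the lower bound survives the promise that a globally consistent orientation and identifiers from $\{1,\dots,n\}$ are provided.
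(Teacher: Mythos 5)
Your high-level blueprint---alternating a family of node problems (weak $f(i)$-coloring) with a family of edge problems, speeding up one round per node-to-edge step, and driving the recursion to a $0$-round contradiction after $\Theta(\log^* n)$ steps---does match the overall architecture of the paper's proof (Sections~\ref{sec:speedup}--\ref{sec:weak2collb}, generalized in Section~\ref{sec:generalization}). However, there are two genuine gaps that break the argument as written.

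First, the round-saving step is wrong as stated. You claim that on a tree the radius-$(T-1)$ view centered at $e=\{u,v\}$ ``already determines what the algorithm would output at both endpoints.'' It does not. In a $\Delta$-regular tree with $\Delta\ge 3$, the set $B_{T-1}(e)=B_{T-1}(u)\cup B_{T-1}(v)$ is a strict subset of $B_T(u)$: nodes at distance exactly $T$ from $u$ whose path to $u$ avoids $v$ lie at distance $T+1$ from $v$ and are invisible to $B_{T-1}(e)$. So the edge cannot deterministically compute $A(u)$ or $A(v)$. The paper's resolution is to abandon a deterministic simulation entirely and work in the \emph{anonymous randomized} model: the edge enumerates all possible random-bit assignments to $B_t(u)\setminus B_{t-1}(e)$ and $B_t(v)\setminus B_{t-1}(e)$, records for each endpoint the \emph{set of colors output with frequency $\ge f$}, and outputs a $2c$-bit vector (Lemma~\ref{lem:speedup1}). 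The failure probability degrades only polynomially under this move, and tracking that degradation quantitatively over $O(\log^* n)$ iterations---together with the base fact that a $0$-round anonymous randomized weak $c_0$-coloring fails locally with probability $\ge 1/c_0^4$ (Claim~\ref{lem:factlocal})---is what carries the whole proof. Your deterministic recurrence $P_i\in\dlocal(T)\Rightarrow Q_{i+1}\in\dlocal(T-1)$ is not a valid simulation, and there is no deterministic analogue of it.

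Second, the proposed derandomization at the end would not reach the stated bound even if the deterministic recursion worked. You appeal to a Naor--Stockmeyer/Ramsey argument to turn a randomized algorithm with success probability $\ge 1/2$ into an order-invariant deterministic one ``of the same running time.'' But the Ramsey argument requires the radius-$T$ ball to contain $o(\log^* n)$ identifiers, and in a $\Delta$-regular tree with $\Delta\ge 3$ that forces $T=o(\log\log^* n)$---exactly the barrier the introduction (``Challenges below the log-star threshold'') identifies as the reason this problem was open. The paper never derandomizes; it proves a local failure lower bound in the anonymous randomized model (Claims~\ref{lem:prob-t-round} and \ref{lem:factlocal}), then converts it to a bound on the \emph{global} success probability in the $\{1,\dots,n\}$-identifier model by exhibiting $n^{\Omega(1/t)}$ pairwise-far nodes with independent executions and comparing against a random-ID-assignment process (Claim~\ref{lem:factglobal}). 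Replacing that chain with a Ramsey step would collapse the bound back to $\Omega(\log\log^* n)$.
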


We will first consider the case in which the tree is $4$-regular. In Section \ref{sec:speedup} we will prove a simulation result, informally stating that given an algorithm that can compute a weak $c$-coloring in $t$ rounds, we can find a weak $c'$-coloring in $t-1$ rounds, with bounded growth in error probability and palette size. Then, in Section \ref{sec:weak2collb}, we show how to use the simulation result to prove our main theorem for the case of $4$-regular trees. These proofs are heavily based on the ideas presented by Naor \cite{Naor1991}.

Finally, in Section \ref{sec:generalization}, we will show how our results can be generalized to any regular tree of constant degree, completing the proof of Theorem \ref{thm:generalizedresult}.

\section{Speedup simulation in 4-regular trees} \label{sec:speedup}

In this section we show that, given an algorithm for weak $c$-coloring on $4$-regular oriented trees, we can speed it up and obtain a weak coloring algorithm that is one round faster but uses a larger palette and has a larger local failure probability. The size of the new palette is doubly exponential in $c$, and the new algorithm has a failure probability that is polynomial in the failure probability of the original algorithm.

\paragraph{Setting: consistently oriented trees.}

We assume that nodes do not have unique identifiers, but have access to random bits. Nodes are assumed to know $n$, the size of the input graph. We also assume that the edges are oriented in a consistent manner, that is, a labeling in $\{U,D,L,R\}$, representing the directions \emph{up}, \emph{down}, \emph{left}, and \emph{right}, satisfying that if the edge $\{u,v\}$ is labeled $R$ for $u$, then it is labeled $L$ for $v$, and that if the edge is labeled $U$ for $u$, then it is labeled $D$ for $v$.

Since the graph topology is always assumed to be a 4-regular tree, $t$-round algorithm $A$ is a function that maps each $t$-neighborhood $B_t(v)$ to an output. The output $A(v)$ only depends on the random bit assignment to the neighborhood. An algorithm $A$ fails locally with probability at most $p$ if, for each node $v$, $\Pr[\,\forall u \in N(v), A(u) = A(v)\,] \le p$.

\paragraph{Intermediate problem: weak edge coloring.}

We will not directly show how to convert a $t$-round algorithm to a $(t-1)$-round algorithm. Instead, we define an intermediate problem that requires to produce an edge coloring such that for each node, either the edges labeled $U$ and $D$ have different colors, or the edges labeled $L$ and $R$ have different colors. If the number of allowed colors is $c$, we refer to this problem as \emph{weak edge $c$-coloring}. For this problem we construct an algorithm in a different, edge-based, model of computation: edges are computing entities, and two edges can communicate if they are incident to the same node. A $t$-round edge algorithm $A'$ is a function that maps $t$-edge neighborhoods $B_t(\{u,v\}) = B_{t}(u) \cup B_{t}(v)$, labeled with random bits, to the outputs. Notice that, even if it is an edge-based model of computation, random bits are still given to the nodes.
Since the topology is assumed to be a 4-regular tree, the output $A'(e)$ only depends on the type of the edge $e$ and the random bits. We say that an edge algorithm $A'$ fails locally around node $v$ with probability at most $p$ if $\Pr[ A'(e_U(v)) = A'(e_D(v)) \text{ and }  A'(e_L(v)) = A'(e_R(v)) ] \le p$, where $e_x(v)$ is the edge incident to $v$ labeled $x$.

We show how to convert a $t$-round node based algorithm into a $(t-1)$-round edge based algorithm in Lemma~\ref{lem:speedup1}. Then, we show how to convert a $(t-1)$-round edge algorithm into a $(t-1)$-round node algorithm in Lemma~\ref{lem:speedup2}.

\begin{lemma}[First speedup lemma] \label{lem:speedup1}
  Let $A$ be a $t$-round weak $c$-coloring algorithm with local failure probability at most $p$. Then there exists a $(t-1)$-round edge based algorithm $A'$ for weak edge $2^{2c}$-coloring with local failure probability $p' \leq 5 p^{1/5} \cdot c^{4/5}$.
\end{lemma}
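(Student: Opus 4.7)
Plan: I will use the Linial--Naor ``subset speedup,'' transposed to the edge-centric model. Since edge $e = \{u,v\}$ sees only $B_{t-1}(u) \cup B_{t-1}(v)$ in $t-1$ rounds, it cannot pin down $A$'s color at its two endpoints, but it does know, for each $x \in \{u,v\}$, the set $S_x(e) \subseteq [c]$ of all values $A(x)$ takes as the random bits outside the view range over all completions. Using the consistent orientation, I will order each edge's endpoints canonically (top first for vertical, left first for horizontal) and define $A'(e) = (S_{\mathrm{first}}(e), S_{\mathrm{second}}(e))$; this lives in a palette of size $2^c \cdot 2^c = 2^{2c}$.

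Next, I will establish correctness via a collision argument. Suppose $A'$ fails at $v$: $C(e_U) = C(e_D)$ and $C(e_L) = C(e_R)$. Unpacking the canonical pairing of each $C(e)$ produces four set-equalities of the form $S_v(e_X) = S_{n_Y}(e_Y)$ for appropriate opposite directions $X, Y$, plus the symmetric equalities on the other endpoint. The key structural fact I will exploit is that, after conditioning on the bits $b_0$ inside $B_{t-1}(v)$, the random bits in the four directional depth-$t$ subtrees of $v$ are mutually independent, and each $S_v(e_X)$ is a function of $b_0$ together with the $X$-direction subtree alone (with an analogous statement for each $S_{n_Y}(e_Y)$). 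I will argue that if every relevant $S$-set has size at most $k$, then a pigeonhole/coupling step forces, in each direction independently, some distinguished color $c^\star \in S^\star$ to be chosen by $A$ with conditional probability $\ge 1/k$; combined across the four directions this yields the joint event $A(v)=A(n_U)=A(n_D)=A(n_L)=A(n_R)$ with conditional probability $\ge k^{-4}$. Since this is a local failure of $A$ at $v$, it occurs with probability at most $p$, so the ``small-set'' portion of the edge-failure event has probability at most $p\, k^4$.

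For the complementary ``large-set'' regime, where some relevant $S$-set has size $> k$, I will use a direct counting estimate over subsets of $[c]$ to bound the probability that four independent $S$-sets coincide pairwise by a term decreasing in $k/c$. Optimizing the threshold $k$ to balance this against $p\, k^4$ will yield $p' \le 5\, p^{1/5} c^{4/5}$; the exponent $1/5$ reflects the five nodes $\{v, n_U, n_D, n_L, n_R\}$ that must collide in a local failure of $A$, and the factor $c^{4/5}$ reflects the four directions around $v$.

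The main obstacle is the collision step: turning the four qualitative set-equalities implied by edge failure into a quantitative lower bound on the probability of a simultaneous 5-way collision of $A$. This rests on the mutual independence of the random bits in the four directional depth-$t$ subtrees of $v$ and on carefully identifying, for each of $S_v(e_X)$ and $S_{n_Y}(e_Y)$, exactly which bits the set truly depends on so that the four independent pigeonhole events can be composed. Once the small-set bound $p\, k^4$ is in place, the final balancing that produces the explicit constants $5, 1/5, 4/5$ is an elementary optimization.
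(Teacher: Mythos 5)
Your high-level architecture mirrors the paper's: simulate $A$ at each edge from $B_{t-1}(e)$, attach to each endpoint a set of colors, pair the sets canonically using the orientation to get a $2^{2c}$-sized palette, and then exploit the conditional independence of the four fringes around $v$ to turn a failure of $A'$ into a $5$-way collision of $A$. However, your choice of defining $S_x(e)$ as the \emph{support} of $A(x)$ given $B_{t-1}(e)$ introduces a genuine gap in the key quantitative step, and I don't see how to close it along the lines you sketch.

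The problem is the collision step. After conditioning on $B_t(v)$, what you need is a lower bound on $\Pr[A(n_X) = A(v) \mid B_t(v)]$ for each neighbor $n_X$. Knowing $A(v) \in S_{n_X}(e_X)$ only tells you that this probability is strictly positive; it can be arbitrarily small (say $2^{-|B_t(n_X) \setminus B_t(v)|}$). The pigeonhole in your ``small-set'' regime says that \emph{some} color in $S_{n_X}(e_X)$ has conditional probability $\ge 1/k$, but the pigeonhole cannot force that color to coincide with $A(v)$, nor to be the same across all four directions, so the jump to ``$A(v) = A(n_U) = \cdots = A(n_R)$ with probability $\ge k^{-4}$'' is unjustified. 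The ``large-set'' regime has a second issue: the sets $S_{n_X}(e_X)$ are deterministic functions of the (conditionally) fixed bits in $B_{t-1}(e_X)$ and are not independent uniformly-random subsets of $[c]$, so a counting estimate over subsets does not apply to bound the coincidence probability.

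The paper resolves exactly this by replacing the support set with a \emph{frequency-threshold} set: $A'(v,e)$ contains color $i$ iff $\Pr[A(v) = i \mid B_{t-1}(e)] \ge f$. This has two crucial consequences that the support set lacks. First, membership in the set is a \emph{probability lower bound}: if $A(v)$ is frequent for $n_X$ over $e_X$, then $\Pr[A(n_X) = A(v) \mid B_t(v)] \ge f$ directly, which yields $f^4$ for the joint collision with no pigeonhole needed. Second, the ``bad'' case (where $A(v)$ fails to be in its own frequency set for some incident edge) has probability $\le cf$ per edge by a union bound over the $c$ colors, giving a clean $\Pr[\text{bad}] \le 4cf$; this replaces your shaky large-set estimate. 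Combining yields $p \ge (p' - 4cf)f^4$, and optimizing $f = p'/(5c)$ gives the stated exponents (the $1/5$ is from the optimization, not literally from counting the five colliding nodes). If you want to rescue your approach, adopt the frequency threshold rather than the raw support.
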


\begin{lemma}[Second speedup lemma] \label{lem:speedup2}
  Let $A$ be a $t$-round weak edge $c$-coloring algorithm with local failure probability at most $p$. Then there exists a $t$-round algorithm $A'$ for weak $2^{4 c}$-coloring with local failure probability $p' \leq 4p^{1/4} \cdot c^{3/4}$.
\end{lemma}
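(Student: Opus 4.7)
The plan is to have each node $v$ simulate the edge algorithm $A$ on its four incident edges, but since $A$'s output on $e_d(v) = \{v, u_d\}$ depends on random bits in $B_t(v) \cup B_t(u_d)$ whereas $v$'s $t$-round view is only $B_t(v)$, node $v$ cannot compute $A(e_d(v))$ exactly. Instead, for each direction $d \in \{\dirup, \dirdown, \dirleft, \dirright\}$ I define
\[
S_d(v) \;=\; \{\, A(e_d(v)) : \text{random bits in } B_t(u_d) \setminus B_t(v) \text{ range over all assignments} \,\} \;\subseteq\; [c],
\]
and let $A'(v) = (S_\dirup(v), S_\dirdown(v), S_\dirleft(v), S_\dirright(v))$. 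Each $S_d(v)$ is an element of $2^{[c]}$, so the palette has size $2^{4c}$, and the algorithm is plainly $t$-round.

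For the failure analysis, suppose $A'(u) = A'(v)$ for every neighbor $u$ of $v$. The key observation is that the edge $e_\dirup(v)$ coincides with the edge $e_\dirdown(u_\dirup)$: the realized color $c_\dirup(v) := A(e_\dirup(v))$ thus lies in $S_\dirup(v)$ by definition and also in $S_\dirdown(u_\dirup)$. Since $A'(v) = A'(u_\dirup)$ forces $S_\dirdown(u_\dirup) = S_\dirdown(v)$, we conclude $c_\dirup(v) \in X_v := S_\dirup(v) \cap S_\dirdown(v)$. Running the same reasoning for $u_\dirdown, u_\dirleft, u_\dirright$ yields $c_\dirdown(v) \in X_v$ and $c_\dirleft(v), c_\dirright(v) \in Y_v := S_\dirleft(v) \cap S_\dirright(v)$, both sets of size at most $c$.

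Next I exploit that $G$ is a tree: the four subtrees hanging off the neighbors of $v$ are disjoint, so conditional on the view $B_t(v)$ (which fixes $S_d(v)$, $X_v$, and $Y_v$), the four realized colors $c_d(v)$ depend on disjoint blocks of random bits and are mutually independent; the same holds for the four events $E_d := \{A'(u_d) = A'(v)\}$. Combining with the implications above,
\[
\Pr[A' \text{ fails at } v \mid B_t(v)] \;\le\; \prod_{d \in \{\dirup, \dirdown\}} \Pr[c_d \in X_v \mid B_t(v)] \,\cdot\, \prod_{d \in \{\dirleft, \dirright\}} \Pr[c_d \in Y_v \mid B_t(v)],
\]
whereas the hypothesis on $A$ reads $\mathbb{E}\bigl[\Pr[c_\dirup = c_\dirdown \mid B_t(v)] \cdot \Pr[c_\dirleft = c_\dirright \mid B_t(v)]\bigr] \le p$ by the same conditional independence. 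The quantitative last step is a chain of Cauchy--Schwarz inequalities: for each $d$, $\Pr[c_d \in Z_d \mid B_t(v)]^2 \le |Z_d| \sum_{s \in Z_d} \Pr[c_d = s \mid B_t(v)]^2 \le c \cdot \Pr[c_d = c_d' \mid B_t(v)]$ with $c_d'$ an independent copy of $c_d$; multiplying the four direction-wise bounds and pairing $(\dirup,\dirdown)$ with $(\dirleft,\dirright)$ relates the product to the cross-collision probabilities $\Pr[c_\dirup = c_\dirdown]$ and $\Pr[c_\dirleft = c_\dirright]$; finally Hölder's inequality applied in the outer expectation against the hypothesis gives $p' \le 4 p^{1/4} c^{3/4}$.

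The hard part is precisely this final inequality chain. It cannot be executed pointwise: when the conditional marginals of $c_\dirup$ and $c_\dirdown$ have full support but are anti-correlated, the product $\Pr[c_\dirup \in X_v] \cdot \Pr[c_\dirdown \in X_v]$ can approach $1$ while $\Pr[c_\dirup = c_\dirdown \mid B_t(v)]$ is arbitrarily small. The argument must therefore be carried out in expectation, exploiting the four-fold product structure of the failure event and trading the pointwise looseness against the fourth-root scaling $(p')^4 \le 256\, p\, c^3$ that the target bound encodes.
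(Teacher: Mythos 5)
Your construction differs from the paper's in a crucial way, and the gap you yourself flag at the end is not a technicality that can be patched -- it is fatal to this line of attack.

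\textbf{Where the approaches diverge.} You define $S_d(v)$ as the full support of $A(e_d(v))$ conditioned on $B_t(v)$. The paper instead uses a \emph{threshold} parameter $f$: the set of colors $i$ with $\Pr[A(e_d(v))=i \mid B_{t-1}(v)] \ge f$. This is not cosmetic. With support sets, $X_v$ and $Y_v$ can equal $[c]$, and the event $\{c_d \in Z_d\}$ carries no information. Moreover, the paper does not try to show that $A'$ failing at $v$ forces $A$ to fail at $v$; it shows that $A'$ failing at $v'$ forces $A$ to fail with probability at least $f^3$ at a \emph{neighbor} $v$ of $v'$. The move to a neighbor is essential: it lets one take the \emph{realized} color $y$ of an edge of $v'$, observe (via the agreement of $A'$-labels and the "goodness" of $v'$) that this specific $y$ is $f$-frequent for two independent edges of $v$, and hence lower-bound the collision probability at $v$ by $f^2$. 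The threshold supplies exactly the lower bound on the collision probability that your argument is missing.

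\textbf{Why your final inequality chain cannot be repaired.} You need a lower bound on $\Pr[c_\dirup = c_\dirdown \mid B_t(v)]$ in terms of $\Pr[c_\dirup \in X_v \mid B_t(v)]\cdot\Pr[c_\dirdown \in X_v \mid B_t(v)]$, but Cauchy--Schwarz only yields the reverse: writing $p_s = \Pr[c_\dirup=s\mid B_t(v)]$ and $q_s = \Pr[c_\dirdown=s\mid B_t(v)]$, one has $\Pr[c_\dirup=c_\dirdown\mid B_t(v)] = \sum_s p_s q_s \le (\sum_s p_s^2)^{1/2}(\sum_s q_s^2)^{1/2}$, so self-collision probabilities \emph{upper-bound} the cross-collision. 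Your chain passes through self-collisions and thus cannot reach the needed lower bound. Your own ``anti-correlated'' intuition is right, but your example (disjoint supports) is not, since then $X_v = S_\dirup(v)\cap S_\dirdown(v) = \emptyset$. A correct counterexample: let $c_\dirup$ place mass $1-\epsilon$ on color $1$ and $\epsilon/(c-1)$ on each other color, let $c_\dirdown$ place mass $1-\epsilon$ on color $2$ and spread $\epsilon$ on the rest, and symmetrically for $c_\dirleft, c_\dirright$. Both supports are all of $[c]$, so $X_v=Y_v=[c]$ and $\prod_d \Pr[c_d\in Z_d\mid B_t(v)]=1$, yet $\Pr[c_\dirup=c_\dirdown\mid B_t(v)]\approx 2\epsilon/(c-1)$, so $p$ can be made arbitrarily small while your bound on $p'$ stays at $1$. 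This is not a pointwise-versus-expectation issue -- the inequality $p' \le 4p^{1/4}c^{3/4}$ is simply false for the support-set construction. To fix the proof you should introduce the threshold $f$, split into ``good'' and ``bad'' neighborhoods as the paper does, and argue failure at a neighbor rather than at $v$ itself.
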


\subsection{First speedup lemma} \label{ssec:speedup1proof}

\begin{proof}[Proof of Lemma~\ref{lem:speedup1}]
  We will construct the edge based algorithm $A'$ using a simulation of $A$. The $(t-1)$-neighborhood of an edge $e = (v,u)$, $B_{t-1}(e)$, almost fixes the $t$-radius neighborhoods of nodes $u$ and $v$. The part that is still not fixed, for nodes $u$ and $v$, is the random assignments of nodes in $B_t(v) \setminus B_{t-1}(e)$ and in $B_t(u) \setminus B_{t-1}(e)$. Note that these two sets are disjoint, since the topology is assumed to be a tree.

  The high level idea of the simulation is the following. Edge $e$ knows a part of the radius-$t$ neighborhood of nodes $u$ and $v$, and it can complete these partial views to full $t$-radius views in all possible ways. Each time it can simulate the original algorithm $A$ and see what color this algorithm would have assigned to the nodes. Thus, for each color $c$ and endpoint $v$, the edge $e$ knows the frequency with which the algorithm gives color $c$ to node $v$. This procedure gives $2c$ frequencies ($c$~for each endpoint). We can then apply a threshold function, obtaining a bit string of length $2c$, that we will show to be a weak edge $2^{2c}$-coloring.

  Thus, algorithm $A'$ does the following. Each edge $e$ starts by gathering $B_{t-1}(e)$, and then goes over all random bit assignments in $B_t(v) \setminus B_{t-1}(e)$ and in $B_t(u) \setminus B_{t-1}(e)$, simulates $A$ on $B_t(v)$ and $B_t(u)$, and computes the frequency of each color $i \in [c]$ appearing as the output of $v$ and $u$, respectively. Let $f \in [0,1]$ be a parameter to be decided later. We say that a color is \emph{frequent} for $v$ given $B_{t-1}(e)$, if it is the output of $A(v)$ with probability at least $f$.

  Let us fix an orientation for the nodes $u$ and $v$: edge $e$ is either $e_U(v)$ or $e_R(v)$ for $v$, while $e$ is either $e_D(u)$ or $e_L(u)$ for $u$.
  The output of $A'$ is a bit vector of length $2c$: for the first $c$ bits, the bit $i$ is $1$ if and only if the color $i$ is frequent for $v$. For the next $c$ bits, bit $c+i$ is $1$ if and only if color $i$ is frequent for $u$. Given an edge $e = (v,u)$, the output $A'(e)$ is also seen as the pair $(A'(v,e),A'(u,e))$ that consists of the sets of frequent colors $A'(v,e)$ of $v$ and $A'(u,e)$ of $u$.

  We will show that, for most of the random bit assignments, a local failure of algorithm $A'$ will also imply a local failure of the original algorithm $A$ with a smaller probability. We say that the assignment of random bits to $B_t(v)$ is \emph{good} if, for each edge $e$ incident to $v$, we have that $A(v)$, the true output of $A$, is contained in $A'(v,e)$. Since there are at most $c$ colors, the probability of having $A(v) \notin A'(v,e)$ is bounded by $cf$. By taking a union bound over the edges of $v$, we have that $\Pr[B_t(v) \text{ is good}] \geq 1 - 4cf$.

  \begin{figure}
    \centering
    \vspace{-4mm}
    \includegraphics[width=0.9\textwidth]{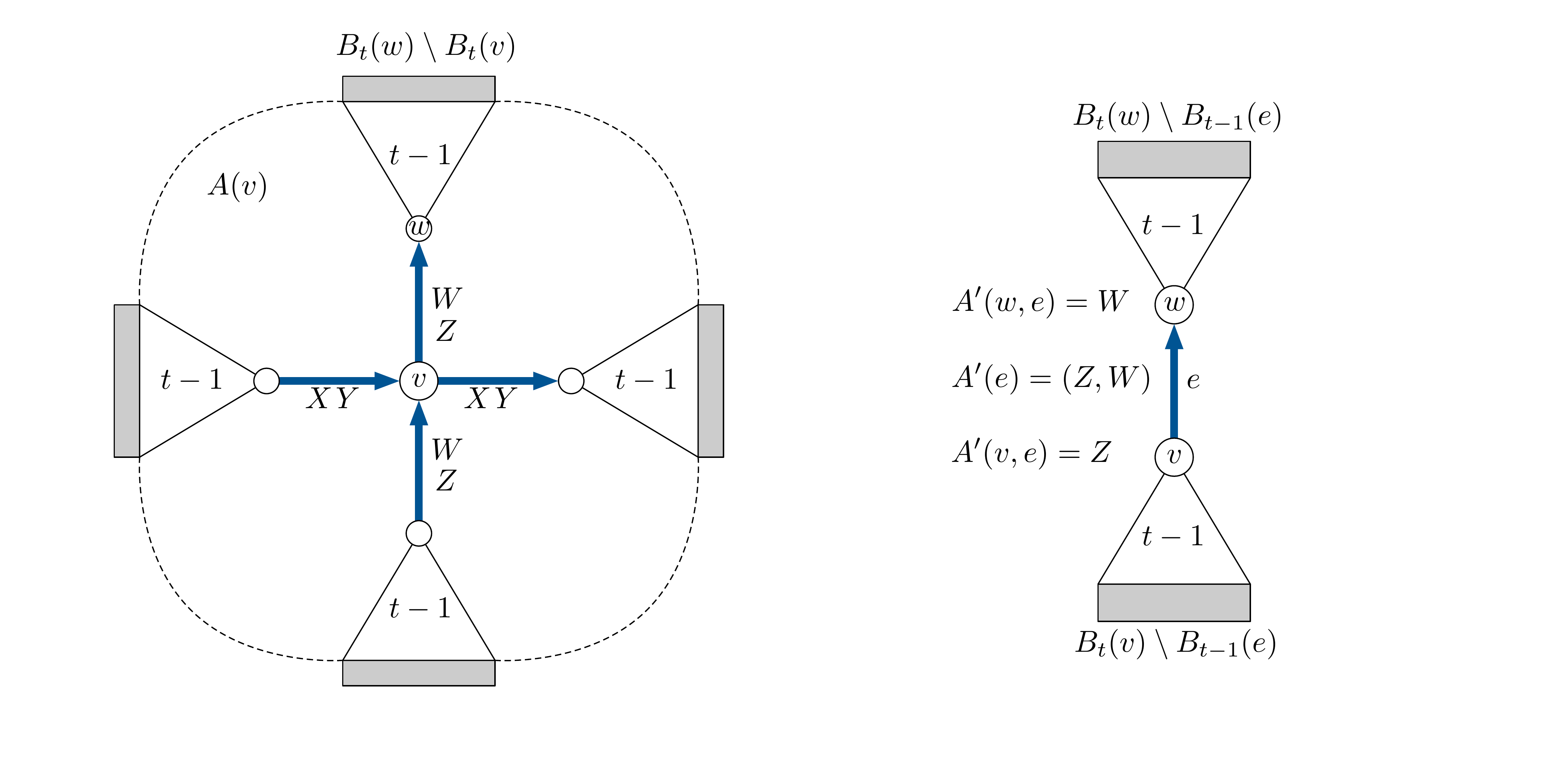}
    \vspace{-8mm}
    \caption{First speedup lemma: simulation of $A$ based on $(t-1)$-edge neighborhoods. On the left: the output of node $v$ is determined by the random bit assignment in its $t$-neighborhood. Given $B_t(v)$, the outputs of its neighbors are independent of each other. Assignment to $B_t(v)$ is good, if $A(v) \in X,Y,Z$, and $W$. On the right: the outputs of edges are constructed as simulations of $A$ at the endpoints $v$ and $w$. A color $i$ is included in the set $A'(w,e) = W$ if $w$ outputs $A(w)=i$ with probability $\geq f$ given $B_{t-1}(e)$. Given $B_{t-1}(e)$, the outputs of $v$ and $w$ are independent. Simulation $A'$ fails at $v$ if horizontal and vertical edges produce the same pairs of labels. In a good neighborhood this implies that $A$ also fails with probability at least $f^4$.}
    \label{fig:speedup1}
  \end{figure}

  Now assume that $A'$ fails locally in a good neighborhood $B_t(v)$. Let $e_x = e_x(v)$. Denote the partial outputs of edges $e_R, e_L, e_U$, and $e_D$ corresponding to node $v$ by $X, Y, Z$, and $W$, respectively. See Figure~\ref{fig:speedup1} for an illustration. Since by assumption $A'$ does not break symmetry neither horizontally nor vertically, we must have that $A'(e_R) = A'(e_L) = (X,Y)$ and that $A'(e_U) = A'(e_D) = (Z,W)$. Since the random bit assignment was good, we have that $A(v) \in X \cap Y \cap Z \cap W$.

  We have only fixed the random bits in $B_t(v)$. Since $A(v)$ is a frequent color for each neighbor of $v$ over their respective edges, this implies that for each neighbor $u$, $\Pr[A(u) = A(v) ~|~ B_t(v)] \geq f$, and that, since these events are independent, all neighbors of $v$ output $A(v)$ with probability at least $f^4$. Now if $A'$ fails locally with probability $p'$, we get that
  \begin{align*}
    p = \Pr[A \text{ fails locally}] &\geq \Pr[A' \text{ fails locally in a good neighborhood}] \cdot f^4 \\
    & \geq (1 -\Pr[A' \text{ does not fail}] - \Pr[\text{the neighborhood is not good}] )\cdot f^4 \\
    &\geq (p' - 4cf) \cdot f^4.
  \end{align*}
  To maximize this probability we set $f = p'/(5c)$. This implies that $p \geq (p' - 4cf)f^4 = (p'/5)^5 c^{-4}$, and conversely that $p' \leq 5 p^{1/5} \cdot c^{4/5}$.
\end{proof}

\subsection{Second speedup lemma} \label{ssec:speedup2proof}

\begin{proof}[Proof of Lemma~\ref{lem:speedup2}]
  We prove the statement for a $(t-1)$-round algorithm. We will again construct $A'$ by a simulation of $A$ and show that the local failure probability of $A'$ is bounded by the local failure probability of $A$.

  The algorithm $A'$ is constructed as follows. Given the random bits assigned to $B_{t-1}(v)$, node $v$ simulates $A$ on each of its incident edges. For each edge $e$, node $v$ outputs a bit vector of length $c$: the $i$-th bit is 1 if and only if $\Pr[A(e) = i ~|~ B_{t-1}(v)] \geq f$. Denote the vector of the edge $e$ by $A_e'(v)$. We say that the edge color $i$ is frequent for edge $e$ at node $v$ if $A_e'(v)$ contains a $1$ in position $i$. The output $A'(v)$ is then defined as the ordered 4-tuple consisting of the outputs of the right, left, up, and down edges of $v$ at $v$. Clearly, the output of $A'$ can take $2^{4c}$ different values.

  Now we bound the probability that the output of $A'$ is not a weak coloring. We will show that, if the algorithm $A'$ fails on some node $v'$, then $A$ fails on a neighbor $v$ of $v'$ with some probability.

  Let $r_{v'}$ and $u_{v'}$ denote the right and up neighbors of some node $v'$, respectively (see Figure \ref{fig:speedup2}). The $t$-neighborhood of node $v'$ is good if the following holds. For $(v',r_{v'})$ the output $A((v',r_{v'})) = x$ is frequent at node $v'$, and the output $A((v',u_{v'})) = y$ is frequent at $v'$ and $u_{v'}$. For an arbitrary edge $e$, the probability that $A(e)$ is not a frequent color of $e$ at its endpoint is, by definition, at most $cf$. By a union bound, the $t$-neighborhood of a node is good with probability at least $1-3cf$.

  \begin{figure}
	\centering
  \vspace{-3mm}
	\includegraphics[width=0.5\textwidth]{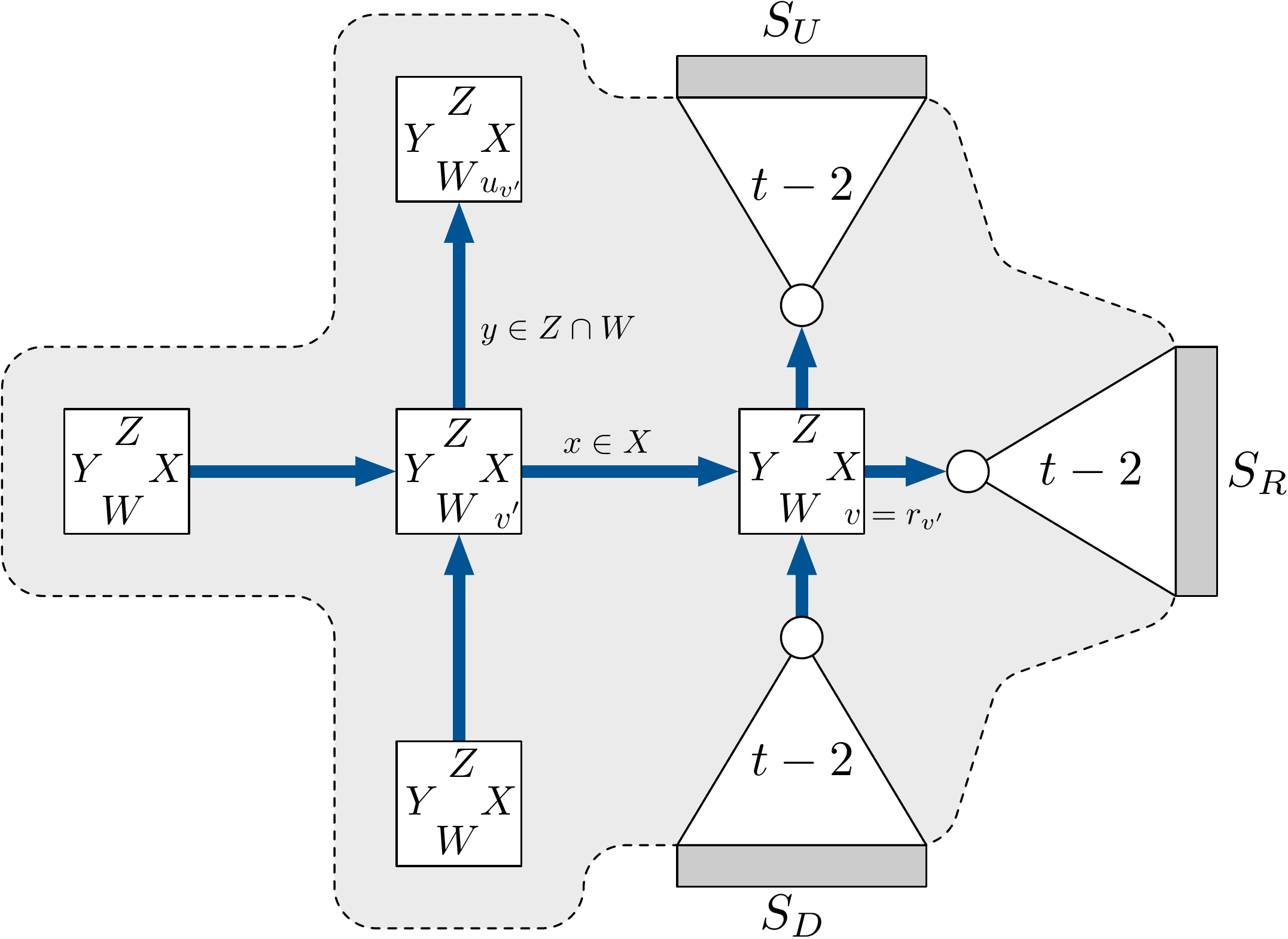}
	\caption{Second speedup lemma: bounding the local failure probability of $A$ based on $A'$. The neighborhood of node $v'$ is good if for its right and up edges $e$ and $e'$, the outputs $A(e) = x$ and $A(e') = y$ of those edges are in sets generated by the simulations in their endpoints: $x \in X$ and $y \in Z,W$. The simulation $A'$ fails at $v'$ if all neighbors of $v'$ also output the same 4-tuple $(X,Y,Z,W)$. This implies that, since the outputs of the edges incident to $v$ are independent given the $(t-1)$-neighborhood of $v$, the original algorithm $A$ fails around $v$ with probability at least $f^3$.}
	\label{fig:speedup2}
  \end{figure}

  Now consider arbitrary nodes $v,v'$ such that $v = r_{v'}$, and assume that the $t$-neighborhood around a node $v'$ is good, but that $A'$ fails at $v'$. Denote the output $A'(v')$ of $v'$ by $(X,Y,Z,W)$. The failure of $A'$ implies that $A'$ outputs $(X,Y,Z,W)$ for each neighboring node of $v'$ as well (including $v$). Since the neighborhood of $v'$ is good, we have that $x = A((v',v)) \in X$ and that $y = A((v',u_{v'})) \in Z \cap W$. Now consider $v$: given $B_{t-1}(v) \subseteq B_t(v')$, the outputs of its incident edges in $A$, not counting $(v',v)$, depend only on the random bit assignment in $B_t(v) \setminus B_t(v')$ (the regions $S_U, S_R, S_D$ in Figure \ref{fig:speedup2}), and the outputs are independent of each other given $B_{t-1}(v)$. Since $A((v',v)) = x$ and since we have that $x \in X$ and $y \in Z \cap W$, with probability at least $f^3$ the algorithm $A$ outputs $x$ on the right edge of $v$, and $y$ on both the up and down edges of $v$, meaning that $A$ fails around node $v$.

  Let $p'$ denote the local failure probability of $A'$. Then the local failure probability $p$ of $A$ is lower bounded by $p \geq (p'-3cf) \cdot f^3$. Again, we maximize this by setting $f = p'/(4c)$. We get that $p \geq (p' - 3cp'/(4c))(p'/(4c))^3 = (p'/4)^4 \cdot c^{-3}$. Conversely we have that $p' \leq 4p^{1/4} \cdot c^{3/4}$.
\end{proof}


\section{Lower bound in 4-regular trees} \label{sec:weak2collb}

In this section we show how Lemmas~\ref{lem:speedup1}~and~\ref{lem:speedup2} imply a lower bound of $\Omega(\log^* n)$ for weak 2-coloring. Note that, in the context of distributed complexity theory, it is usually required to obtain local high probability of success when $O(\log n)$-bit identifiers are given. In this section we prove a stronger statement, that is, at least $\frac{1}{2} \log^* n -4$ rounds are required to obtain global success probability of at least $\frac{1}{2}$, when identifiers from $1$ to $n$ are given. However, in order to prove Theorem \ref{thm:homog-complexity}, only Claim \ref{lem:prob-t-round} and Claim \ref{lem:factlocal} are required.

\begin{lemma}\label{lem:globalprob}
	Any randomized weak $2$-coloring algorithm for $4$-regular trees, with a running time less than $t=\frac{1}{2} \log^* n -b -3$ rounds, produces a legal weak 2-coloring with probability at most
	\[
	\biggl(1-\frac{1}{\log^{(2b)}n}\biggr)^{n^{\frac{1}{3(2t+1)}}} + \frac{1}{2 n^{1/3}}.
	\]
\end{lemma}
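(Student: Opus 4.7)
The plan is to iterate Lemmas~\ref{lem:speedup1} and \ref{lem:speedup2} alternately, starting from the hypothetical $t$-round weak $2$-coloring algorithm, to obtain eventually a $0$-round algorithm for weak $c_t$-coloring with a controlled palette size $c_t$ and controlled local failure probability $p_t$. Since a $0$-round algorithm on an anonymous $4$-regular tree has almost no symmetry-breaking power, $p_t$ must be bounded from below, which unwinds to a lower bound on $p_0$ and then, by a packing argument, to the stated upper bound on the global success probability.

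Concretely, set $c_0 = 2$ and let $p_0$ be the local failure probability of the given algorithm. One ``double step'' consisting of Lemma~\ref{lem:speedup1} followed by Lemma~\ref{lem:speedup2} turns a $\tau$-round weak $c$-coloring algorithm into a $(\tau-1)$-round weak $c'$-coloring algorithm with $c' = 2^{4 \cdot 2^{2c}}$ and new failure probability of the form $p' \le O(p^{1/20}\cdot c^{O(1)})$. After $t$ such double steps I obtain a $0$-round weak $c_t$-coloring algorithm. The recursion $c_{i+1} = 2^{4\cdot 2^{2c_i}}$ adds two exponentials per iteration, so $c_t$ is a tower of height roughly $2t + O(1)$; with $t = \tfrac12\log^* n - b - 3$ this gives $c_t \le \log^{(2b)} n$ once the additive constants are matched to the $-b-3$ offset in the definition of $t$.

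Next I lower-bound $p_t$. In a $0$-round anonymous algorithm each node's output depends only on its own random string, and all nodes of a $4$-regular tree are in the same local situation, so the outputs of a node $v$ and of each of its four neighbors are i.i.d.\ samples from some distribution $(q_1,\dots,q_{c_t})$ on $[c_t]$. The probability that $v$ and all four of its neighbors receive the same color is $\sum_i q_i^5 \ge c_t \cdot (1/c_t)^5 = 1/c_t^4$ by the power mean inequality, so the local failure probability of the $0$-round algorithm is at least $1/c_t^4$. Unwinding the failure-probability recursion for $p_0,p_1,\dots,p_t$ then forces $p_0 \ge 1/\log^{(2b)} n$ (the constants and the polynomial factors $c_i^{O(1)}$ are absorbed into the $-b$ and $-3$ offsets in $t$).

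Finally, to convert this local bound into the stated global bound, I would pick a family $S$ of nodes whose $t$-radius neighborhoods are pairwise vertex-disjoint; their success events then depend on disjoint random-bit sets and are independent, so the probability of simultaneous success is at most $(1 - p_0)^{|S|}$. The identifiers are required to lie in $\{1,\dots,n\}$ rather than a larger polynomial range, so a short concentration step produces, with probability at least $1 - 1/(2n^{1/3})$, a set $S$ of $|S| \ge n^{1/(3(2t+1))}$ nodes whose $t$-neighborhoods are disjoint and whose identifier distributions are compatible with the anonymous/randomized analysis above; that $1/(2n^{1/3})$ failure probability of the packing/identifier reduction contributes the additive term in the statement. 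Combining gives the claimed upper bound $(1 - 1/\log^{(2b)} n)^{n^{1/(3(2t+1))}} + 1/(2n^{1/3})$. The main obstacle I expect is bookkeeping rather than conceptual: tracking the constants so that the tower $c_t$ lands exactly at $\log^{(2b)} n$ and the packing exponent lands exactly at $1/(3(2t+1))$, since the two speedups compound with different polynomial factors and the identifier reduction adds a further polynomial loss.
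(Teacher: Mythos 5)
Your proposal follows the same three-part strategy as the paper's proof: (i) lower-bound the local failure probability of a $0$-round weak $c$-coloring algorithm on a $4$-regular tree by $1/c^4$ via the observation that the five relevant outputs are i.i.d.; (ii) iterate Lemmas~\ref{lem:speedup1} and~\ref{lem:speedup2} to push a hypothetical $t$-round weak $2$-coloring algorithm down to $0$ rounds, tracking how the palette grows as a tower of height $\approx 2t$ and how the failure probability decays, to conclude that the original algorithm must fail locally with probability $\geq 1/\log^{(2b)}n$ (this is the content of the paper's Claims~\ref{lem:prob-t-round} and~\ref{lem:factlocal}); and (iii) convert the local failure bound to a global one by packing $n^{1/(3(2t+1))}$ nodes with pairwise-disjoint $t$-neighborhoods inside a leaf-free ball of radius $k = \log_3\frac{n^{1/3}+1}{2}$ and by simulating the $\{1,\dots,n\}$-identifier model with random identifiers, whose collision probability in that ball is the additive $1/(2n^{1/3})$ term (Claim~\ref{lem:factglobal}).

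One small but genuine inaccuracy worth flagging: the per-double-step failure recurrence is not of the form $p' \le O(p^{1/20}\cdot c^{O(1)})$. Composing Lemma~\ref{lem:speedup1} (new palette $\hat{c}=2^{2c}$, new failure $\le 5p^{1/5}c^{4/5}$) with Lemma~\ref{lem:speedup2} (new palette $2^{4\hat{c}}$, new failure $\le 4\hat{p}^{1/4}\hat{c}^{3/4}$) introduces a factor $\hat{c}^{3/4}=2^{3c/2}$, which is exponential in the current palette, not polynomial. This does not break the argument, but if you compose na\"ively with $c^{O(1)}$ you will not land on the same exponent. The paper sidesteps this by proving $\tilde p_{i+1} \ge (\tilde p_i/(5\tilde c_1))^5$ for every step, where $\tilde c_1$ is the \emph{largest} palette in the whole chain (the $0$-round one), so that the single quantity $\tilde c_1$ uniformly absorbs every intermediate $\tilde c_i$ and $\hat{c}_i$ factor; this is what yields the clean form $p_t \ge (p_0/(5c_0))^{5^{2t+1}}$. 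You should also keep in mind that absorbing the resulting $5^{2t+2}$ exponent into the claim $p_t \ge 1/\log^{(2b)}n$ is not just bookkeeping of additive constants: it uses the fact that $5^{2t+2}\cdot 2\log^{(2b+2)}n \le \log^{(2b+1)}n$ for $t=O(\log^* n)$, i.e.\ one extra iterated logarithm of $n$ swallows a $2^{O(\log^* n)}$ factor.
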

	In order to prove the lemma we proceed as follows. We consider the setting in which nodes do not have IDs, that is, the network is anonymous (as in Lemmas~\ref{lem:speedup1} and \ref{lem:speedup2}). In this setting we prove a lower bound on the failure probability of a single node (local failure probability). We can then use this result to prove a lower bound on the probability that at least one node fails (global failure probability), even in the case in which distinct IDs in $\{1,\ldots,n\}$ are given.

	\begin{claim}\label{lem:factglobal}
		If there does not exist an anonymous randomized $t$-round algorithm for weak $2$-coloring with a local failure probability at most $p$, then, even in the case in which IDs are given, there does not exist a randomized $t$-round algorithm for weak $2$-coloring with a global success probability at least
    \[
    (1-p)^{n^{\frac{1}{3(2t+1)}}} + \frac{1}{2 n^{1/3}}.
    \]
	\end{claim}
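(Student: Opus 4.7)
The plan is to prove the contrapositive: starting from a randomized $t$-round algorithm $A$ with identifiers whose global success probability is at least $q := (1-p)^{K} + \frac{1}{2n^{1/3}}$, where $K := n^{1/(3(2t+1))}$, I will build an anonymous randomized $t$-round algorithm $A'$ with local failure probability at most $p$. The anonymous algorithm $A'$ is natural: each node of its input tree uses its private randomness to sample a \emph{virtual identifier} uniformly from $\{1,\ldots,n\}$ together with fresh random bits for $A$, and then applies $A$'s local rule to its $t$-neighborhood; the substance of the argument is bounding $A'$'s local failure.

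First I would fix a $4$-regular tree $T_n$ on $n$ nodes and pick $K$ internal vertices $v_1,\ldots,v_K$ with pairwise disjoint $t$-neighborhoods. Since each $t$-ball in a $4$-regular tree has size at most $2\cdot 3^t$, the union $B := \bigcup_i B_t(v_i)$ satisfies $|B| \leq 2K\cdot 3^t \leq n^{1/3}$; this is exactly what the exponent $\frac{1}{3(2t+1)}$ in $K$ is tuned to achieve, and it uses $t = O(\log^* n)$. Running $A$ on $T_n$ with a uniformly random permutation of $\{1,\ldots,n\}$ as identifiers gives global success with probability at least $q$, and global success in particular forces every $v_i$ to be locally happy, so $\Pr_{\text{perm}}[v_1,\ldots,v_K \text{ all locally happy}\,] \geq q$.

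Next I would pass to the \emph{IID experiment}, in which identifiers on $B$ are drawn independently and uniformly from $\{1,\ldots,n\}$. The natural coupling between the permutation and IID distributions on $B$ fails only when the IID draws collide somewhere in $B$, an event of probability at most $\binom{|B|}{2}/n \leq \frac{1}{2n^{1/3}}$. Hence the joint local-happiness probability in the IID experiment is at least $q - \frac{1}{2n^{1/3}} = (1-p)^K$. Under IID, however, the local-happiness events at $v_1,\ldots,v_K$ depend on disjoint collections of independent random variables and are therefore independent, and by symmetry of internal vertices of the $4$-regular tree they share a common marginal success probability $q_0$. Thus $q_0^K \geq (1-p)^K$, which yields $q_0 \geq 1-p$.

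Finally, the local view of $A'$ at any node has the same distribution as the local view of a single $v_i$ in the IID experiment, so $A'$'s local failure probability equals $1 - q_0 \leq p$, which contradicts the hypothesis and finishes the contrapositive. The main obstacle is the careful calibration of $K$: the packing $|B| \leq K \cdot O(3^t)$ must stay below $n^{1/3}$ so that the collision probability is absorbed by the slack $\frac{1}{2n^{1/3}}$ in the statement; the exponent $\frac{1}{3(2t+1)}$ is chosen precisely to make this inequality valid for all $t$ up to roughly $\log n$, which comfortably covers the regime $t = O(\log^* n)$ in which the lower bound will later be applied.
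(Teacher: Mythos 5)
Your proposal is correct and follows essentially the same approach as the paper: identify many nodes with independent $t$-round views, compare a uniformly random permutation of identifiers to i.i.d.\ virtual identifiers via a collision bound of at most $\binom{n^{1/3}}{2}/n < \tfrac{1}{2n^{1/3}}$, and conclude via independence that joint success is at most $(1-p)^K$. The only real difference is presentational (you argue the contrapositive and speak of pairwise-disjoint $t$-balls, while the paper works directly inside a fixed $k$-ball of size $n^{1/3}$ and explicitly \emph{constructs} the $K = n^{1/(3(2t+1))}$ nodes at pairwise distance $\geq 2t+1$); you assert the existence of the $K$ internal vertices rather than building them, and that counting is precisely where the paper earns the exponent $\tfrac{1}{3(2t+1)}$.
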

	\begin{proof}
		Consider a node $v$ in a $4$-regular tree of $n$ nodes, satisfying that $B_k(v)$, for $k = \log_3 \frac{n^{1/3}+1}{2}$, does not contain any leaf of the tree. Now $B_k(v)$ is a ball containing exactly $n^{\frac{1}{3}}$ nodes. Since the execution of a $t$-round algorithm on node $v$ depends only on nodes at distance at most $t$ from $v$, then if two nodes $u$ and $v$ are at distance at least $2t+1$, we can assume that their executions are independent. In particular, the events of nodes $u$ and $v$ failing are independent.

		In order to compute a lower bound on the global failure probability, we start by computing a lower bound on the number of independent executions in $B_k(v)$, that is, on the number of nodes that could fail independently. We give a lower bound on the number of nodes having pairwise distance at least $2t+1$ with the following process.
		Let $m(u,x)$ be the node reached by starting from $u$ and moving $2t+1$ times to the edge labeled $x$, where $x \in \{L,R,U,D\}$.
		Consider the set $I$ of nodes at distance exactly $7$ from $v$. The size of $I$ is $4 \cdot 3^6$. Start from each node $u \in I$ and consider the $3$ nodes $m(u,x)$ where $x = \{L,R,U,D\} \setminus \{\ell\}$, and $\ell$ is the label of the edge incident to $u$ on the shortest path between $u$ and $v$. Let $S$ be the set of these nodes. Repeat this process from nodes in $S$ and add the new reached nodes to $S$. At each time we add $3$ new nodes for each node added at the previous step, each one at distance at least $2t+1$ from all the other nodes. The nodes added to $S$ at step $i \ge 1$ are at distance $7+ i(2t+1)$ from $v$. Thus we can repeat the process for $k' = \bigl\lfloor \frac{k-7}{2t+1} \bigr\rfloor -1$ steps and obtain a set of nodes $S$ that are pairwise at distance at least $2t+1$, and their ball is fully contained in $B_k(v)$, meaning that in $t$ rounds they do not see nodes not contained in $B_k(v)$.
		The size of $S$ is
		  \begin{align*}
		 4 \cdot 3^6 \cdot \biggl(\sum_{i=1}^{k'} 3^i\biggr) &= 4 \cdot \frac{3^7}{2} \Bigl(3^{k'} -1\Bigr) \ge 2 \cdot 3^7 \cdot 3^{k' -1} \ge  2 \cdot 3^{7+  \frac{k-7}{2t+1} -3}  = 2 \cdot 3^{4+  \frac{k-7}{2t+1}}  \\
     &= 2 \cdot  3^{4-\frac{7}{2t+1}} \cdot \Bigl(\frac{n^{1/3}+1}{2}\Bigr)^\frac{1}{2t+1} \ge 3^{4-\frac{7}{2t+1}}  \cdot n^\frac{1}{3(2t+1)} \\&\ge n^\frac{1}{3(2t+1)},
		 \end{align*}
		 assuming that $t \ge 1$. This implies that the number of independent executions of nodes in $B_k(v)$ is at least $n^\frac{1}{3(2t+1)}$. Thus, since for an algorithm to succeed it is required that at least all nodes in $B_k(v)$ succeed, and that requires that at least all nodes in $S$ succeed, the probability that the (anonymous) algorithm succeeds is at most \[(1-p)^{n^{\frac{1}{3(2t+1)}}}.\]

		We now want to compute an upper bound on the global success probability of a randomized algorithm running on nodes labeled with distinct IDs from $\{1,\ldots,n\}$. Our speedup result is shown for anonymous randomized algorithms. Such algorithms can use randomness to generate unique identifiers with high probability. We bound the success probability of any algorithm in the randomized \local{} model by comparing it to a model where the identifiers are random but always globally unique. Such an assignment is at least as good as the worst-case assignment of unique identifiers.

		 Let $X$ be the event that the algorithm succeeds, let $Y$ the event that a random ID assignment is correct, let $X'$ be the event that the algorithm succeeds for all nodes contained in $B_k(v)$ and let $Y'$ be the event that a random ID assignment is correct (meaning that it is globally unique) in $B_k(v)$. The global success probability is at most the success probability of the algorithm running in an instance where IDs have been assigned randomly and correctly. Thus, we can upper bound the global success probability as $\Pr[X|Y] \le \Pr[X'|Y'] \le \Pr[X'] + \Pr[\bar{Y'}]$. Intuitively, $Pr[X|Y]$ is the global success probability of an algorithm running on nodes labeled with randomly assigned distinct IDs, while $\Pr[X'|Y']$ is the probability that no node fails in $B_k(v)$, given that nodes of $B_k(v)$ have randomly assigned distinct IDs. Then, $\Pr[X']$ is the probability that all nodes of $B_k(v)$ succeed for an algorithm running on an anonymous network, and $\Pr[\bar{Y'}]$ is the probability that a random ID assignment is not correct for nodes in $B_k(v)$. We will use, as upper bound for $Pr[X']$, the previously computed probability that all nodes of the set $S$ succeed.

		By assigning random IDs in $\{1,\ldots,n\}$ to the nodes of $B_k(v)$, the probability that at least a pair of nodes chooses the same ID is at most \[\binom{n^{1/3}}{2}  \frac{1}{n} < \frac{1}{2 n^{1/3}}.\] Thus, if an algorithm locally fails with probability at least $p$, then the global success probability is at most \[Pr[X|Y] \le Pr[X'] + Pr[\bar{Y'}] \le (1-p)^{n^{\frac{1}{3(2t+1)}}} +  \frac{1}{2 n^{1/3}},\] even if the IDs are provided.
	\end{proof}

   \begin{claim}\label{lem:prob-t-round}
   	    Suppose that any $0$-round weak-$c_0$ coloring algorithm locally fails with probability at least $p_0$. Then, any $t$-time weak $2$-coloring algorithm locally fails with probability at least $(\frac{p_0}{5 c_0})^{5^{2t+1}}$.
   \end{claim}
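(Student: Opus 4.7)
The plan is to cascade Lemmas \ref{lem:speedup1} and \ref{lem:speedup2} alternately to turn any hypothetical $t$-round weak $2$-coloring algorithm into a $0$-round weak $c_0$-coloring algorithm, and then invoke the hypothesis. The key preliminary observation is that both speedup bounds can be captured by the single recurrence $p' \le 5\,p^{1/5} c^{4/5}$: Lemma \ref{lem:speedup1} gives this directly, and the bound $4\,p^{1/4} c^{3/4}$ from Lemma \ref{lem:speedup2} is dominated by it whenever $p \le 1 \le c$, since the ratio equals $(4/5)(p/c)^{1/20}$.

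I would then start with an arbitrary $t$-round weak $2$-coloring algorithm with local failure probability $p_t$ and alternate Lemma \ref{lem:speedup1} (drop the round count by one, node-based to edge-based) and Lemma \ref{lem:speedup2} (preserve the round count, edge-based back to node-based). After $2t$ such speedup steps---$t$ of each type---one obtains a $0$-round, node-based weak $c_0$-coloring, where $c_0$ is the specific (tower-of-exponentials in $t$) palette size produced by the chain. Bounding every intermediate palette by the final $c_0$ and writing $q_0 = p_t, q_1, \ldots, q_{2t}$ for the successive failure probabilities, the unified recurrence specializes to $q_{i+1} \le K\,q_i^{1/5}$ with $K := 5\,c_0^{4/5}$.

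Unrolling this recurrence and summing the geometric series $1 + 1/5 + \cdots + 1/5^{2t-1} < 5/4$ in the exponent of $K$ gives $q_{2t} \le K^{5/4}\,p_t^{1/5^{2t}} = 5^{5/4}\,c_0\,p_t^{1/5^{2t}}$. Invoking the hypothesis---the terminal $0$-round weak $c_0$-coloring must satisfy $q_{2t} \ge p_0$---I rearrange to obtain $p_t \ge \bigl(p_0/(5^{5/4} c_0)\bigr)^{5^{2t}}$. The slightly weaker bound stated in the claim, $p_t \ge \bigl(p_0/(5 c_0)\bigr)^{5^{2t+1}}$, then follows from $(p_0/(5 c_0))^5 \le p_0/(5^{5/4} c_0)$, which holds whenever $p_0 \le 1 \le c_0$, by raising both sides to the $5^{2t}$-th power.

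The main obstacle is the palette bookkeeping: each iteration roughly exponentiates the palette, so $c_0$ is a tower of height $\Theta(t)$, and one has to verify that replacing every intermediate $c_i$ by $c_0$ in the unified recurrence is valid. Fortunately this step is monotone---the $c^{4/5}$ factor only enters positively and the intermediate palettes are smaller than $c_0$---so a single constant $K$ suffices throughout, and the remainder is a routine geometric-sum calculation combined with the slackening from $5^{2t}$ to $5^{2t+1}$ in the final exponent.
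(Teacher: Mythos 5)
Your proof is correct and follows essentially the same approach as the paper: unify both speedup bounds into the single worse recurrence (exponent $5$, paper checks this in the inverted lower-bound form while you check dominance in the upper-bound form; both use $p\le 1\le c$), replace every intermediate palette by the largest one $c_0$, and unroll the $2t$ alternating applications to get a power-tower-of-$5$ exponent. The only cosmetic difference is that you unroll the upper-bound recurrence forward with an explicit geometric sum and then invert once at the end, while the paper inverts first and proves the lower bound by induction on $\tilde{p}_i$; the resulting bound $\bigl(p_0/(5^{5/4}c_0)\bigr)^{5^{2t}}$ that you obtain is in fact slightly sharper than the stated $\bigl(p_0/(5c_0)\bigr)^{5^{2t+1}}$, and your weakening step from the former to the latter is valid.
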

 	\begin{proof}
 		Lemma \ref{lem:speedup1} and \ref{lem:speedup2} imply that if there exists a $t$-round algorithm that produces a weak $c_t$ coloring with local failure probability at most $p_t$, then there exists a $(t-1)$-round algorithm that produces a weak $c_{t-1} = 2^{4 \hat{c}_{t-1}}$ coloring, where $\hat{c}_{t-1} = 2^{2 c_t}$, such that the failure probability is at most $p_{t-1}$, where $p_{t-1} = 4 \hat{p}_{t-1}^{1/4} \cdot \hat{c}_t^{3/4}$ and $\hat{p}_{t-1} = 5p_t^{1/5} \cdot c_t^{4/5}$. Thus, by starting with a small value of $p_t$ and applying the two lemmas recursively, we obtain a small value for $p_0$. If the obtained value of $p_0$ is smaller than the assumed minimum, we obtain a contradiction for $p_t$. This implies that we can compute a lower bound for $p_t$ as a function of $p_0$ and $c_0$. We can define $p_t$ as a function of $p_{t-1}$ as follows:
 		\[p_t = \frac{(\hat{p}_{t-1}/5)^5}{c_t^4} \text{, where } \hat{p}_{t-1} = \frac{(p_{t-1}/4)^4}{\hat{c}_t^3}  \text{, } c_{t-1} = 2^{4 \hat{c}_{t-1}} \text{, and }\hat{c}_{t-1} = 2^{2 c_t}  \text{. }\]
 		It is now possible to define a new recurrence that we can use to give a lower bound for $p_t$:
 \[p_t \ge \tilde{p}_{2t+1} \text{, where }  \tilde{p}_{i+1} =  \frac{(\tilde{p}_{i}/5)^5}{\tilde{c}_{i+1}^4} \text{, } \tilde{c}_{i} = 2^{2 \tilde{c}_{i+1}} \text{, } \tilde{c}_1 = c_0 \text{, and } \tilde{p}_1 = p_0\text{. }\]
 A lower bound for $\tilde{p}_{i+1}$ can be given by noting that $\tilde{c}_1 \ge \tilde{c}_i, ~ \forall i$. Thus,
 \[\tilde{p}_{i+1} = \frac{(\tilde{p}_{i}/5)^5}{\tilde{c}_{i+1}^4} \ge \frac{(\tilde{p}_{i}/5)^5}{\tilde{c}_{1}^4} \ge \biggl(\frac{\tilde{p}_{i}}{5 \tilde{c}_{1}}\biggr)^5.\]
 We prove the following by induction on $i$:
 \[
 \tilde{p}_{i+1}  \ge \frac{(\tilde{p}_1)^{5^{i+1}}}{(5 \tilde{c}_1)^{\sum_{j=1}^{i} 5^j}} \ge \biggl(\frac{\tilde{p}_1}{5 \tilde{c}_1}\biggr)^{5^{i+1}}.
 \]
 For $i=0$ the claim trivially holds.
 By inductive hypothesis,
 \[
 \tilde{p}_{i+1}  \ge \frac{(\tilde{p}_1)^{5^{i+1}}}{(5 \tilde{c}_1)^{\sum_{j=1}^{i} 5^j}}.
 \]
 Thus, $\tilde{p}_{i+2}$ can be lower bounded as follows:
 \[
 \tilde{p}_{i+2} \ge \biggl(\frac{\tilde{p}_{i+1}}{5 \tilde{c}_{1}}\biggr)^5 \ge \left(\frac{(\tilde{p}_1)^{5^{i+1}}}{(5 \tilde{c}_1)^{\sum_{j=1}^{i} 5^j}}\right)^5 / (5 \tilde{c}_{1})^5 = \frac{(\tilde{p}_1)^{5^{i+2}}}{(5 \tilde{c}_1)^{5 \sum_{j=1}^{i} 5^j} (5 \tilde{c}_{1})^5} = \frac{(\tilde{p}_1)^{5^{i+2}}}{(5 \tilde{c}_1)^{\sum_{j=1}^{i+1} 5^j}}.
 \]
 Finally, we obtain a lower bound for $p_t$:
 \[
 p_t \ge \tilde{p}_{2t+1} \ge \biggl(\frac{\tilde{p}_1}{5 \tilde{c}_1}\biggl)^{5^{2t+1}} = \biggl(\frac{p_0}{5 c_0}\biggr)^{5^{2t+1}}.\qedhere\]
\end{proof}

 	\begin{claim}\label{lem:factlocal}
 		Any $(\frac{1}{2} \log^* n -b -3)$-round weak $2$-coloring algorithm fails locally with probability at least $\frac{1}{\log^{(2b)}n}$, where $b \ge 1$.
 	\end{claim}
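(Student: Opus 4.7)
The plan is to apply Claim~\ref{lem:prob-t-round} with a concrete lower bound on the 0-round failure probability, and then verify that the resulting bound on $p_t$ meets the $1/\log^{(2b)} n$ threshold via careful tower-of-2 arithmetic.

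First, I would show that any 0-round anonymous weak $c_0$-coloring algorithm on a 4-regular tree has local failure probability at least $1/c_0^4$. In the 0-round setting, each node independently samples its color from the same distribution $q$ over $[c_0]$, so the probability that $v$ and all four neighbors agree is $\sum_{i=1}^{c_0} q_i^5$, which by the power-mean inequality (equivalently, convexity of $x \mapsto x^5$) is at least $c_0 \cdot (1/c_0)^5 = 1/c_0^4$. Taking $p_0 = 1/c_0^4$ in Claim~\ref{lem:prob-t-round} gives
\[
p_t \;\ge\; \Bigl(\tfrac{p_0}{5 c_0}\Bigr)^{5^{2t+1}} \;\ge\; \tfrac{1}{(5 c_0^5)^{5^{2t+1}}},
\]
where $c_0 = \tilde c_1$ is the value produced by the recursion $\tilde c_i = 2^{2 \tilde c_{i+1}}$ with $\tilde c_{2t+1} = 2$ used in the proof of Claim~\ref{lem:prob-t-round}.

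Next, I would bound the tower height of $c_0$. A direct iterated-logarithm analysis of the map $x \mapsto 2^{2x}$ shows that, once $x$ is of the form $2^y$ with $y$ large, one iteration increases the log-star height by exactly one: writing $2^{2x} = 2^{2^{y+1}}$, we have $\log^*(y+1) = \log^* y$ for large $y$. Starting from $x_0 = 2$ and iterating $2t$ times, this yields $\log^* c_0 \le 2t + C_1$ for a small absolute constant $C_1$. Substituting into $Q := (5 c_0^5)^{5^{2t+1}}$, we have $\log Q \le 5^{2t+2}\log c_0$, and the polynomial factor $5^{2t+2}$ is absorbed by the doubly-exponential $\log c_0$, giving $\log^* Q \le 2t + C_2$ for some slightly larger constant $C_2$. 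Then, unfolding the definition of $\log^*$ yields the standard lower bound $\log^{(2b)} n > T_{\log^* n - 2b - 1}$, where $T_k$ denotes a tower of 2's of height $k$. With $t = \tfrac{1}{2}\log^* n - b - 3$, we have $2t + C_2 = \log^* n - 2b - 6 + C_2$, which is at most $\log^* n - 2b - 1$ as soon as $C_2 \le 5$. A careful accounting shows $C_2 = 5$ suffices, so $Q \le T_{\log^* n - 2b - 1} \le \log^{(2b)} n$, and hence $p_t \ge 1/Q \ge 1/\log^{(2b)} n$.

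The main obstacle is the sharp tower-of-2 arithmetic: a naive induction on $x \mapsto 2^{2x}$ gives only $c_0 \le T_{4t+1}$, which is too loose to close the comparison (it would force $C_2$ to grow with $t$). One must recognize that, after the first iteration, each step of $x \mapsto 2^{2x}$ adds exactly one log-star level, not two, because the shift $y \mapsto y+1$ inside the exponent does not change $\log^* y$ except possibly at the very first step. Absorbing the outer polynomial factor $5^{O(t)}$ without losing additional tower levels is the other delicate ingredient, and both rely on the fact that $\log c_0$ is already astronomically larger than any polynomial in $t$.
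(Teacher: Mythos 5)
Your proposal follows the same structure as the paper's proof: lower-bound $p_0 \ge 1/c_0^4$ by convexity, plug into Claim~\ref{lem:prob-t-round}, and then do tower-of-2 arithmetic on $c_0$. The arithmetic is phrased slightly differently---the paper iterates $\log$ to show $c_0 \le \log^{(2b+1)} n$ and then bounds $\log(1/p_t) \le \log^{(2b+1)} n$, whereas you track $\log^*$ of $Q = (5c_0^5)^{5^{2t+1}}$ directly and compare against $\log^{(2b)} n > T_{\log^* n - 2b - 1}$---but these are the same computation in two notations. One caution: the heuristic ``$\log^*(y+1) = \log^* y$ for large $y$'' is false when $y$ sits at a tower boundary, and the first iterations starting from $\tilde c_{2t+1}=2$ do hit such boundaries (e.g.\ $y=4=T_2$), so $\log^*$ jumps by $2$ rather than $1$ there; this is harmless (it only affects the additive constant $C_1$), but the assertion ``a careful accounting shows $C_2=5$ suffices'' is left unverified, and a referee would want to see it spelled out even though it is in fact true (with $C_2=3$ already enough).
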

 	\begin{proof}
 		Since a $0$-round algorithm has no information about its neighbors, the random output based on the uniform distribution minimizes the failure probability, that is, the probability that a node has the same color of all its neighbors. Thus, $p_0 \ge \frac{1}{c_0^4}$. Let us fix $t=\frac{1}{2} \log^* n -b -3$. We can apply Claim \ref{lem:prob-t-round}, obtaining that the local failure probability of a $t$-round weak $2$-coloring algorithm must be at least
 		\[
    \biggl(\frac{p_0}{5 c_0}\biggr)^{5^{2t+1}} \ge \biggl(\frac{1}{5 c_0^5}\biggr)^{5^{2t+1}} \ge \biggl(\frac{1}{5 c_0}\biggr)^{5^{2t+2}}.
    \]

 		A $t$-round algorithm must produce a weak $2$-coloring, thus $c_t = 2$. Also, by definition,
    \[
    c_0 \le 2^{4\cdot 2^{\cdot^{\cdot^{\cdot^{4 \cdot 2^{4 c_t}}}}}}, \text{ where the number of ``$4\cdot 2$'' in the tower is  } 2t-1 = \log^*n -2b -7.
    \]
    In order to compute a lower bound on the failure probability $p_t$, we start by giving an upper bound for $c_0$. It is easy to see that
    \[
    \log^{(x)} c_0 \le 6 \cdot 2^{4 \cdot 2^{\cdot^{\cdot^{\cdot^{4 \cdot 2^{4 c_t}}}}}}, \text{ where the number of ``$4\cdot 2$'' in the tower is } \log^*n -2b -7 -x,
    \]
    implying that $\log^{(\log^*n -2b -5)} c_0 \le 6$. By taking the logarithm on both sides for $3$ additional times, we get that $\log^{(\log^*n -2b-2)} c_0 \le 1$. By definition, $\log^{(\log^*n -1)} n > 1$. This is equivalent to taking the logarithm $(2b+1)$ times, and then applying the logarithm again for $(\log^*n -2b-2)$ times on the result.
 		Thus,  $\log^{(\log^*n -2b-2)} \log^{(2b+1)}n > 1$. We can finally obtain a bound on $c_0$ as a function of $n$, since the above constraints imply that $c_0 \le \log^{(2b +1 )} n$.

         We can now lower bound $p_t$ by
         \[
         \biggl(\frac{1}{5 \log^{(2b +1 )} n}\biggr)^{5^{2t+2}}.
         \]
         By taking the inverse of each side we get that
         \[
         \frac{1}{p_t} \le (5\log^{(2b +1 )} n)^{ 5^{2t+2}},
         \]
         and by taking the logarithm of each side we obtain that
         \[
         \log \biggl(\frac{1}{p_t}\biggr) \le 5^{2t+2}(\log 5 + \log(\log^{(2b +1 )} n)) = 5^{2t+2}(\log 5 + \log^{(2b +2 )} n).
         \]
         For large enough $n$, this is at most $5^{2t+2}(2 \log^{(2b +2 )} n)$, which, since $t = O(\log^*n)$, can be upper bounded by $\log^{(2b +1 )} n$. Thus, we get that $p_t \ge \frac{1}{\log^{(2b)}n}$.
\end{proof}
	Now Lemma \ref{lem:globalprob} follows from Claim \ref{lem:factglobal} and Claim \ref{lem:factlocal}. We can apply it to prove Theorem~\ref{thm:main}.

\begin{theorem} \label{thm:main}
Solving weak 2-coloring with global success probability at least $\frac{1}{2}$ requires $\Omega(\log^* n)$ rounds.
\end{theorem}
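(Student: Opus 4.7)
The plan is to derive Theorem~\ref{thm:main} directly from Lemma~\ref{lem:globalprob} by a contrapositive argument: I will fix any constant $b \ge 1$ (taking $b=1$ for concreteness), assume toward a contradiction that some randomized algorithm $\mathcal{A}$ solves weak $2$-coloring in $4$-regular trees in $t = o(\log^* n)$ rounds with global success probability at least $\tfrac{1}{2}$, and then show that Lemma~\ref{lem:globalprob}'s upper bound on the success probability is strictly less than $\tfrac{1}{2}$ for large enough $n$, producing a contradiction. This will establish that $t \ge \tfrac{1}{2}\log^* n - 4 = \Omega(\log^* n)$, which is exactly the claim of the theorem. We already know from Section~\ref{sec:generalization} (and will use transparently here) that it suffices to prove the $4$-regular case.

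The first step is to substitute $b=1$ into Lemma~\ref{lem:globalprob}: any algorithm running in fewer than $t_0 = \tfrac{1}{2}\log^* n - 4$ rounds succeeds globally with probability at most
\[
S(n,t) := \biggl(1 - \frac{1}{\log^{(2)} n}\biggr)^{n^{1/(3(2t+1))}} + \frac{1}{2 n^{1/3}}.
\]
I then need to argue that $S(n,t) < \tfrac{1}{2}$ for all sufficiently large $n$ whenever $t < t_0$. The second summand is obviously $o(1)$, so the main task is to bound the first summand. Using the standard inequality $(1-x)^k \le e^{-xk}$, this term is at most $\exp\bigl(-n^{1/(3(2t+1))} / \log\log n\bigr)$. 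Since $t < t_0$ implies $2t+1 < \log^* n - 7$, we have $n^{1/(3(2t+1))} \ge n^{1/(3\log^* n)} = 2^{(\log n)/(3\log^* n)}$, which grows faster than any iterated logarithm of $n$; in particular it dominates $\log\log n$ as $n\to\infty$. Therefore the exponent $-n^{1/(3(2t+1))}/\log\log n \to -\infty$, and the first summand is $o(1)$.

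Combining the two bounds, $S(n,t) = o(1)$ uniformly over all $t < t_0$, so in particular $S(n,t) < \tfrac{1}{2}$ for all sufficiently large $n$. This contradicts the hypothesis that $\mathcal{A}$ achieves success probability at least $\tfrac{1}{2}$, so we must have $t \ge t_0 = \Omega(\log^* n)$, proving the theorem.

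The only real subtlety is the asymptotic comparison $n^{1/\log^* n} \gg \log\log n$, which is the reason the choice of $b$ can be any fixed constant: moving to a larger $b$ only replaces $\log^{(2)} n$ in the bound by $\log^{(2b)} n$, and $n^{1/(3\log^* n)}$ still dominates. So the computation is routine once the right $b$ is chosen, and the real content of the theorem is packaged entirely in Claim~\ref{lem:factglobal} and Claim~\ref{lem:factlocal}, which were already proved; no new combinatorial idea is needed here.
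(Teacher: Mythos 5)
Your proof is correct and follows the same route as the paper: both apply Lemma~\ref{lem:globalprob} with $b=1$, bound the first term via $(1-x)^k \le e^{-xk}$, and observe that $n^{1/(3(2t+1))}$ dominates $\log\log n$ for $t$ near $\frac{1}{2}\log^* n$, so the global success probability drops below $\frac{1}{2}$. The only cosmetic slip is that Lemma~\ref{lem:globalprob}'s displayed bound uses the fixed threshold $t$ from its own statement rather than the algorithm's actual running time; since the bound is monotone, this does not affect the argument, and the paper simply evaluates at that threshold.
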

\begin{proof}
		Let $t=\frac{1}{2} \log^* n -4$. Applying Lemma~\ref{lem:globalprob} with $b=1$, we get that the global success probability is at most
    \[
    \biggl(1-\frac{1}{\log \log n}\biggr)^{n^{1/(3 (2 t+1))}} + \frac{1}{2 n^{1/3}} \le e^{-\frac{n^{1/(3 (2 t+1))}}{\log \log n}} + \frac{1}{2 n^{1/3}}.
    \]
    First, note that, for large enough $n$, we have that $\frac{1}{2 n^{1/3}} < \frac{1}{4}$. In order to prove that the other addendum is also at most $\frac{1}{4},$ we prove that
    \[
    \frac{n^{1/(3 (2 t+1))}}{\log \log n} > 2.
    \]
    The above follows by noting that, since $t = O(\log^* n)$, for large enough $n$, \[\frac{\log n }{3(2t+1)} - \log \log \log n > 1.\qedhere\]
\end{proof}

\section{Generalizing the lower bound beyond 4-regular trees}\label{sec:generalization}
In this section we will describe how to adapt the presented techniques in order to prove an $\Omega(\log^* n)$ lower bound for weak $2$-coloring any $\Delta$-regular oriented tree, for any constant even $\Delta = 2k$. We start by describing how to adapt the speedup result of Section \ref{sec:speedup}.

In Section \ref{sec:speedup}, we assumed to have a consistent orientation on the two directions, horizontal and vertical. We now assume to have a consistent orientation among each of the $k$ dimensions. The new weak edge $c$-coloring problem is defined as follows. For each node $v$, there must exist one dimension $d \in \{1,\ldots,k\}$, such that the edges of dimension $d$ incident to $v$ are labeled with two different colors in $\{1,\ldots, c\}$. It is easy to see that for $k=2$ we obtain the original problem.

Then, using similar techniques of Section~\ref{sec:speedup}, it is possible to prove the following.

\begin{lemma}[First generalized speedup lemma]
	Let $A$ be a $t$-round weak $c$-coloring algorithm with local failure probability $\leq p$. Then there exists a $(t-1)$-round edge based algorithm $A'$ for weak edge $2^{2 c}$-coloring with local failure probability $p' \leq (\Delta+1) p^{1/(\Delta+1)} \cdot c^{1-1/(\Delta+1)}$.
\end{lemma}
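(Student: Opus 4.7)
The plan is to follow the structure of \lemmaref{lem:speedup1}, replacing its two horizontal/vertical dimensions by the $k$ dimensions of a $\Delta = 2k$-regular oriented tree. I would build $A'$ by simulation: each edge $e = \{u,v\}$ first gathers $B_{t-1}(e)$, which pins down $B_t(v)$ and $B_t(u)$ up to the random bits in the two disjoint outer subtrees $B_t(v) \setminus B_{t-1}(e)$ and $B_t(u) \setminus B_{t-1}(e)$. For a threshold $f$ to be chosen later, call color $i \in [c]$ \emph{frequent at $v$ over $e$} if $\Pr[A(v) = i \mid B_{t-1}(e)] \ge f$, and let $A'(v,e) \subseteq [c]$ be the set of such colors, encoded as a $c$-bit string. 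The edge output $A'(e)$ is the ordered pair $(A'(v,e),A'(u,e))$ where the order is fixed by the consistent orientation of the dimension of $e$; this gives $2^{2c}$ possible edge labels, matching the statement.

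Next I would set up a ``goodness'' condition: say $B_t(v)$ is \emph{good} if $A(v) \in A'(v,e)$ for every edge $e$ incident to $v$. Only non-frequent colors can violate this on a given edge, and each non-frequent color has conditional probability strictly less than $f$; summing over the at most $c$ colors and taking expectation over $B_{t-1}(e)$ gives $\Pr[A(v) \notin A'(v,e)] \le cf$. A union bound over the $\Delta$ edges incident to $v$ then yields $\Pr[B_t(v)\text{ good}] \ge 1 - \Delta c f$.

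The core of the argument is the failure analysis. Suppose $A'$ fails at $v$ in a good neighborhood. Because a weak edge coloring only requires some dimension at each node to carry two distinct edge labels, a failure means that in \emph{every} dimension $d \in \{1,\ldots,k\}$ the two edges $e_d^+$ and $e_d^-$ incident to $v$ have $A'(e_d^+) = A'(e_d^-)$ as ordered pairs. Unpacking this equality with the $\pm$ orientation, exactly as in \lemmaref{lem:speedup1}, gives $A'(u_d^+, e_d^+) = A'(v, e_d^-)$ and $A'(u_d^-, e_d^-) = A'(v, e_d^+)$; combined with goodness ($A(v) \in A'(v, e_d^\pm)$), this forces $A(v) \in A'(u, e_{v,u})$ for \emph{every} one of the $\Delta$ neighbors $u$ of $v$, i.e., $A(v)$ is frequent at every neighbor over its incident edge. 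Conditioning on $B_t(v)$, the $\Delta$ neighbors' outputs depend on disjoint outer subtrees and are hence mutually independent, and each neighbor outputs $A(v)$ with probability at least $f$ by the definition of frequent. So all $\Delta$ neighbors simultaneously output $A(v)$ with probability at least $f^\Delta$, and combining with the goodness bound gives the master inequality
\[
p \;\ge\; (p' - \Delta c f)\, f^\Delta.
\]

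Finally I would optimize: setting the derivative in $f$ to zero gives $f = p'/((\Delta+1)c)$, and substituting back yields $p \ge (p')^{\Delta+1}/((\Delta+1)^{\Delta+1}\, c^\Delta)$, which rearranges to $p' \le (\Delta+1)\, p^{1/(\Delta+1)}\, c^{1-1/(\Delta+1)}$, as claimed. The main obstacle is purely notational: one has to set up the per-dimension $\pm$ orientation carefully so that the ordered-pair decoding of each dimension-$d$ failure really forces $A(v)$ to be frequent at \emph{both} neighbors in dimension $d$, and not merely one; this is what guarantees that independence across all $\Delta = 2k$ neighbors yields the $f^\Delta$ factor (rather than only $f^k$). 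Once that is pinned down, the rest is the same probabilistic and calculus argument as in the $\Delta = 4$ case.
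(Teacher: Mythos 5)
Your proposal is correct and follows exactly the approach the paper sketches for this lemma: define $A'$ by the same edge-simulation and frequent-color construction as in the $\Delta=4$ case, establish the master inequality $p \geq (p' - \Delta c f)\,f^\Delta$ via the goodness argument and per-dimension ordered-pair matching, and then optimize $f = p'/((\Delta+1)c)$ to obtain the claimed bound. Your explicit check that the consistent per-dimension orientation makes the failure at $v$ force $A(v)$ to be frequent at \emph{both} neighbors in each of the $k$ dimensions (hence $f^\Delta$, not $f^k$) is precisely the point that needs care, and you handled it correctly.
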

To prove this lemma we can define the new coloring in the same way it has been defined in the $\Delta=4$ case (by computing the color probability of each endpoint of an edge). Then, we can prove that if the new algorithm fails with probability $p'$, then the old algorithm must fail with probability $p \ge (p'-\Delta c f)\cdot f^\Delta$.

\begin{lemma}[Second generalized speedup lemma]
	Let $A$ be a $t$-round weak edge $c$-coloring algorithm with local failure probability $\leq p$. Then there exists a $t$-round algorithm $A'$ for weak $2^{\Delta c}$-coloring with local failure probability $p' \leq \Delta p^{1/\Delta} \cdot c^{1-1/\Delta}$.
\end{lemma}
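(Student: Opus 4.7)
The plan is to generalize the proof of Lemma~\ref{lem:speedup2} by replacing the two dimensions there (horizontal and vertical) with the $k = \Delta/2$ dimensions available at each node. First, construct $A'$ in direct analogy with the $\Delta = 4$ case: each node $v$ uses its random bits in $B_{t-1}(v)$ to simulate $A$ on each of its $\Delta$ incident edges, outputting for each edge $e$ a length-$c$ bit vector whose $i$-th bit is $1$ iff $\Pr[A(e) = i \mid B_{t-1}(v)] \geq f$, with threshold $f$ to be optimized later. The output $A'(v)$ is the ordered $\Delta$-tuple of these vectors, which takes $2^{\Delta c}$ possible values as required.

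For the analysis, fix a node $v'$ at which $A'$ fails, and distinguish one neighbor $v$ of $v'$, say its $+1$-neighbor. Write $A'(v') = (X_{+1}, X_{-1}, \ldots, X_{+k}, X_{-k})$, set $x_1 = A(e_{+1}(v'))$, and for each $d \geq 2$ let $y_d = A(e_{+d}(v'))$ and let $w_d$ be the $+d$-neighbor of $v'$. Call the $t$-neighborhood of $v'$ \emph{good} if (i) $x_1 \in X_{+1}$, and (ii) for each $d = 2, \ldots, k$, both $y_d \in X_{+d}$ and $y_d$ lies in the frequent set that $w_d$ itself computes for its $-d$-edge. These are $1 + 2(k-1) = \Delta - 1$ checks, and each fails with probability at most $cf$ by the standard counting argument (at most $c$ infrequent colors per edge-endpoint, each of conditional probability less than $f$), so a union bound yields $\Pr[\text{non-good}] \leq (\Delta - 1) c f$.

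The key step is then to show that, conditioned on $A'$ failing at $v'$ inside a good neighborhood, $A$ fails locally at $v$ with probability at least $f^{\Delta - 1}$. Failure of $A'$ at $v'$ forces $A'(v) = A'(w_d) = A'(v')$ for every $d$, so the frequent-set tuples at $v$, $w_d$, and $v'$ all coincide; combined with goodness~(ii), this gives $y_d \in X_{+d} \cap X_{-d}$. For $A$ to fail at $v$, in dimension~$1$ we need $A(e_{+1}(v)) = x_1$ (probability $\geq f$ since $x_1 \in X_{+1}$), and in each dimension $d \geq 2$ we need $A(e_{+d}(v)) = A(e_{-d}(v)) = y_d$ (probability $\geq f^2$). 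These $1 + 2(k-1) = \Delta - 1$ edge outputs depend on random bits in pairwise-disjoint subtrees outside $B_t(v')$, hence are conditionally independent given $B_{t-1}(v)$, yielding total probability $\geq f^{\Delta - 1}$. Combining, $p \geq (p' - (\Delta-1)cf)\, f^{\Delta - 1}$; optimizing with $f = p'/(\Delta c)$ gives $p \geq (p'/\Delta)^\Delta / c^{\Delta - 1}$, i.e., $p' \leq \Delta p^{1/\Delta} \cdot c^{1 - 1/\Delta}$ as claimed.

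The main subtlety is the bookkeeping across dimensions: unlike the $\Delta = 4$ case, there are $k - 1$ non-distinguished dimensions, and each contributes \emph{two} goodness checks (one at $v'$'s positive endpoint and one at $w_d$'s negative endpoint of the same edge) so that after propagating the equality of frequent-set tuples from the failure of $A'$ we obtain $y_d \in X_{+d} \cap X_{-d}$ at $v$. One must verify that exactly $\Delta - 1$ goodness checks and $\Delta - 1$ independent edge-equality demands at $v$ balance out correctly under the threshold choice $f = p'/(\Delta c)$ to produce precisely the stated bound $\Delta p^{1/\Delta} c^{1 - 1/\Delta}$.
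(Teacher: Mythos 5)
Your proof is correct and takes essentially the same approach as the paper's sketch: it constructs $A'$ by the same per-edge frequent-color simulation, defines goodness with the same $\Delta-1$ checks (one for the distinguished dimension, two for each of the remaining $k-1$ dimensions), derives the same bound $p \geq (p' - (\Delta-1)cf)f^{\Delta-1}$, and optimizes with the same threshold $f = p'/(\Delta c)$. You simply fill in the details that the paper's brief sketch omits, and the bookkeeping of $\Delta-1$ goodness checks versus $\Delta-1$ independent edge-equality demands is handled correctly.
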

To prove this lemma we can define the new coloring in a similar way it has been defined in the $\Delta=4$ case. Since now a node has $\Delta$ incident edges, we obtaining a bit string of length $\Delta c$, that is a $2^{\Delta c}$ coloring. Then, we can argue that the neighborhood of a node $v'$ is good with probability at least $1 - (\Delta-1)c f$, and that the same neighborhood will be bad for a specific neighbor $v$ of $v'$ with probability $f^{\Delta-1}$, obtaining that if the new algorithm fails with probability $p'$, then the old algorithm must fail with probability $p \ge (p'- (\Delta-1) c f)\cdot f^{\Delta-1}$. This time, the choice of $f$ that maximizes the failure probability is $f=\frac{p'}{\Delta c}$. We get that \[p \ge \Bigl(\frac{p'}{\Delta}\Bigr)^\Delta \cdot \frac{1}{c^{\Delta-1}}.\]

Once we have these lemmas, we can use similar techniques of Section \ref{sec:weak2collb} to prove a lower bound for the global failure probability. First, the exact statement of Claim \ref{lem:factglobal} can be proved for any constant $\Delta$, by setting \[k= \log_{\Delta-1} \bigg( (n^{1/3} -1) \frac{\Delta-2}{\Delta}+1\bigg).\] Then, we can prove a generalized version of Claim~\ref{lem:prob-t-round} by following exactly the same reasoning.
   \begin{claim}
	Suppose that any $0$-round weak-$c_0$ coloring algorithm locally fails with probability at least $p_0$. Then, any $t$-time weak $2$-coloring algorithm locally fails with probability at least $\bigl(\frac{p_0}{(\Delta+1) c_0}\bigr)^{(\Delta+1)^{2t+1}}$.
\end{claim}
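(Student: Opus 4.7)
The proof will mirror the structure of the proof of Claim~\ref{lem:prob-t-round} almost exactly, with the only new ingredients being (a) the two Generalized Speedup Lemmas replacing Lemmas~\ref{lem:speedup1} and \ref{lem:speedup2}, and (b) the arithmetic bookkeeping needed to unify the two different exponents they produce ($\Delta+1$ and $\Delta$) into a single uniform exponent $\Delta+1$.

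First, I would invert the two Generalized Speedup Lemmas into lower bounds on the original algorithm's failure probability. The first lemma yields that a $t$-round weak $c$-coloring with failure $p$ followed by the simulation gives a $(t-1)$-round edge weak $2^{2c}$-coloring with failure $p'$ satisfying $p \ge (p'/(\Delta+1))^{\Delta+1}/c^{\Delta}$. The second lemma yields that a $t$-round edge weak $c$-coloring with failure $p$ gives a $t$-round node weak $2^{\Delta c}$-coloring with failure $p'$ satisfying $p \ge (p'/\Delta)^{\Delta}/c^{\Delta-1}$. Chaining these alternately corresponds to the sequence $p_t \to \hat p_{t-1} \to p_{t-1} \to \hat p_{t-2} \to \cdots \to p_0$, which is $2t$ speedup steps in total, with palette sizes $c_t = 2$, $\hat{c}_{t-1} = 2^{2c_t}$, $c_{t-1} = 2^{4\hat c_{t-1}}$, and so on.

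Next, I would collapse the two different exponents into a single recurrence by taking the weaker of the two bounds. Since $p' \le 1$ and $c \ge 1$, the second lemma's bound $(p'/\Delta)^{\Delta}/c^{\Delta-1}$ is at least $(p'/(\Delta+1))^{\Delta+1}/c^{\Delta}$: enlarging the exponent weakens a sub-$1$ probability, and $c^{\Delta-1} \le c^{\Delta}$ weakens the denominator. Thus I can relabel the chain as a single sequence $(\tilde p_i, \tilde c_i)$ for $i=1,\dotsc,2t+1$ with $\tilde p_1 = p_0$, $\tilde c_1 = c_0$, and the uniform recurrence
\[
\tilde p_{i+1} \ge \frac{(\tilde p_i/(\Delta+1))^{\Delta+1}}{\tilde c_{i+1}^{\Delta}},\qquad \tilde c_i = 2^{2\tilde c_{i+1}},
\]
so that $p_t \ge \tilde p_{2t+1}$.

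Then I would observe that the palette recurrence $\tilde c_i = 2^{2\tilde c_{i+1}}$ makes $\tilde c_i$ strictly increasing as $i$ decreases, so $\tilde c_1 \ge \tilde c_i$ for every $i$; in particular $\tilde c_{i+1}^{\Delta} \le \tilde c_1^{\Delta}$, hence
\[
\tilde p_{i+1} \ge \left(\frac{\tilde p_i}{(\Delta+1)\,\tilde c_1}\right)^{\Delta+1}.
\]
A straightforward induction on $i$ (base case $i=0$ trivial; inductive step by raising the hypothesis to the power $\Delta+1$ and dividing by $((\Delta+1)\tilde c_1)^{\Delta+1}$, exactly as in the $\Delta=4$ proof) gives
\[
\tilde p_{i+1} \ge \left(\frac{\tilde p_1}{(\Delta+1)\,\tilde c_1}\right)^{(\Delta+1)^{i+1}}.
\]
Setting $i = 2t$ and substituting $\tilde p_1 = p_0$, $\tilde c_1 = c_0$ yields $p_t \ge \tilde p_{2t+1} \ge \bigl(p_0/((\Delta+1)c_0)\bigr)^{(\Delta+1)^{2t+1}}$, which is the desired bound.

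The proof is essentially routine accounting, and the main (very mild) obstacle is the unification of the two exponents: I would have to verify carefully that trading $(p'/\Delta)^\Delta/c^{\Delta-1}$ for $(p'/(\Delta+1))^{\Delta+1}/c^\Delta$ is genuinely a lower-bound substitution, using $p' \le 1$ and $c \ge 1$. Everything else, including the induction and the counting of $2t+1$ indices for $t$ rounds of speedup, transfers verbatim from the proof of Claim~\ref{lem:prob-t-round} with $5$ replaced by $\Delta+1$.
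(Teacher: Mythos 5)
Your proposal is correct and matches the paper's intended argument (the paper simply points to the proof of Claim~\ref{lem:prob-t-round}); the key new observation---unifying the two exponents $\Delta+1$ and $\Delta$ from the generalized speedup lemmas into a single uniform exponent $\Delta+1$ via $p'\le 1$, $c\ge 1$, and $(\Delta+1)^{\Delta+1}>\Delta^\Delta$---is exactly what is needed, and you verify it cleanly. Two small clean-ups: the induction must be carried out on the paper's stronger intermediate form $\tilde p_{i+1}\ge(\tilde p_1)^{(\Delta+1)^{i+1}}\big/\bigl((\Delta+1)\tilde c_1\bigr)^{\sum_{j=1}^i(\Delta+1)^j}$, since a direct induction on the simpler displayed inequality picks up an extra factor of $1/\bigl((\Delta+1)\tilde c_1\bigr)^{\Delta+1}$ at each step and does not close; and the node-palette recurrence should read $c_{t-1}=2^{\Delta\hat c_{t-1}}$ rather than $2^{4\hat c_{t-1}}$ for general $\Delta$, though neither of these affects the final bound since only $\tilde c_1\ge\tilde c_i$ is used.
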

Then, using similar techniques to Claim~\ref{lem:factlocal}, we can prove that there exists some constant $b$ that depends on $\Delta$, such that any $(\frac{1}{2} \log^*n -b)$-round weak 2-coloring algorithm locally fails with probability at least $\frac{1}{\log \log n}$. Finally, we can prove the following theorem, using the same ideas of Theorem \ref{thm:main}.

\generalizedresult*


\section{Proving properties of homogeneous \lcl{}s} \label{sec:homog-proofs}

Now that we have a lower bound for the distributed computational complexity of weak $2$-coloring, we return to the topic of homogeneous \lcl{}s that we introduced in Section~\ref{sec:minimal}.

\subsection{\boldmath Solving \texorpdfstring{$P^*$}{P*} in irregular neighborhoods} \label{ssec:homog-solving}

\lemhomoglocal*

\begin{proof}
  We give an algorithm to construct the required partial labeling. Each node $v$ looks at distance $r$ and determines the closest irregularity $\irr(v)$ to it, if any. Recall that distance to a cycle is defined as sum of the minimum distance to a node on the cycle and the length of the cycle. Cycles with nodes of smaller degree are not considered as irregularities. Nodes prefer the closest cycle with the smallest maximum identifier, and if there are no cycles, the node closest with the smallest degree and smallest identifier, in that order. All nodes on the shortest path from $v$ to its closest irregularity also have an irregularity within distance $r$.

  First assume that the closest irregularity to node $v$ is a cycle. There are two cases: either $v$ is on the cycle, or not. In the latter case, $v$ points toward the cycle and outputs $d(v) = 0$. As all nodes on the path to the cycle do this, we will have that $d(v) = d(p(v))$. Now assume $v$ is on a cycle. The algorithm must ensure that the cycle is oriented and labeled in a consistent manner.

  Since each node $v$ on $C$ will have a cycle as their closest irregularity, they will all output $d(v) = 0$. For each cycle $C$, a consistent orientation is determined as follows: the node with the smallest identifier on $C$ will orient toward its smaller neighbor, and all nodes on $C$ follow this orientation. Each node $v$ will determine the orientation of $C = \irr(v)$ and point $p(v)$ according to the orientation of $C$.

  If for all $u \in C$ we have that $\irr(u) = C$, this orientation is consistent and correct. Now assume that there exists $u \in C$ such that $\irr(u) = C' \neq C$. If $\irr(p(v)) = C$, the labeling is still correct around $v$. If $\irr(p(v)) \neq C$, it still holds that $p(p(v)) \neq v$, since $p(p(v)) = v$ would imply that $\irr(v) \neq C$, and the labeling is correct.

  Finally, assume that the closest irregularity to node $v$ is a node $u$ of degree $\deg(u) < \Delta$. Since nodes on the shortest path $P$ from $v$ to $u$ might have a cycle as their closest irregularity, $v$ must look an additional $r$ steps away and verify if this is the case. If not, $v$ will point along $P$ and output $d(v) = \deg(u)$. If yes, $v$ will still point along $P$, but output $d(v) = 0$. Let $w$ be the first node on $P$ from $v$ such that $\irr(w)$ is some cycle $C$. If $p(v) \neq w$, the labeling is correct around $v$. If $p(v) = w$, we have that $d(w) = 0$ and $p(w) \neq v$ since otherwise $v$ would have a cycle as its irregularity. Therefore the labeling is correct in the neighborhood of $v$.
\end{proof}

\subsection{\boldmath Complexity of \texorpdfstring{$P^*$}{P*}} \label{ssec:homog-complexity}

In this section we prove the two lemmas to show that the distributed complexity of $P^*$, defined in Section~\ref{sec:weak-is-minimal} is $\Theta(\log n)$.

\thmsymcomplexity*

Theorem~\ref{thm:symcomplexity} follows directly from the following two lemmas.

\begin{lemma} \label{lem:symupper}
  Problem $P^*$ is in $\dlocal(O(\log n))$.
\end{lemma}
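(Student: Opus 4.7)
The plan is to reduce everything to \lemmaref{lem:homog-local} by establishing a Moore-type bound: in a graph of maximum degree $\Delta$ on $n$ nodes, every node has an irregularity (in the paper's sense) within distance $r = O(\log_{\Delta-1} n)$. Given such an $r$, the algorithm is immediate: every node collects its $r$-neighborhood in $r$ rounds and then invokes \lemmaref{lem:homog-local} to produce a correct $P^*$-labeling of its $1$-neighborhood in another $O(r)$ rounds. Because $P^*$-happiness is a $1$-local condition, pairwise agreement of these local labelings yields a globally valid solution, giving the $O(\log n)$ upper bound.

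For the Moore-type bound I would argue by contradiction. Suppose $v$ has no irregularity within distance $r$. Then (a)~every node within distance $r$ of $v$ has degree exactly $\Delta$, and (b)~every cycle $C$ of $G$ satisfies $\min_{u \in C}\dist(v,u) + \ell(C) > r$. Now run BFS from $v$ up to depth $R = \lfloor (r-1)/2 \rfloor$. Any collision during this BFS reveals a cycle $C$ contained in $B_R(v)$ of length at most $2R+1$, so $\ell(C) \le R+1$ while $\min_{u \in C}\dist(v,u) \le R$. By (a) all nodes of $C$ have degree $\Delta$, and the totals obey $\min_{u \in C}\dist(v,u) + \ell(C) \le 2R+1 \le r$, contradicting~(b). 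Hence BFS is injective up to depth $R$, and combined with (a) this forces $|B_R(v)| \ge (\Delta-1)^R$. Choosing $r$ to be a sufficiently large constant multiple of $\log_{\Delta-1} n$ makes $(\Delta-1)^R$ exceed $n$, a contradiction; so some irregularity must lie within distance $r$.

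Putting the pieces together, for $r = \Theta(\log_{\Delta-1} n)$ the algorithm described above runs in $O(r) = O(\log n)$ rounds and produces a feasible solution to $P^*$, which is what is claimed.

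The main obstacle is the paper's unusual definition of distance to a cycle, which adds the correction $\ell(C) \approx |C|/2$ on top of the minimum graph distance from $v$. A naive BFS up to depth $r$ would only prevent cycles of very short total length, not cycles of length roughly $r$ close to $v$. The factor of roughly $1/2$ between the BFS depth $R$ and the forbidden radius $r$ is exactly what ensures that any collision produced by BFS still has length plus offset bounded by $r$, so that property~(b) can be invoked to rule it out.
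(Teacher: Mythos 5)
Your proof is correct and follows the same route as the paper: the paper's proof simply asserts that every node sees a low-degree node or a cycle within distance $O(\log_\Delta n)$ and then invokes \lemmaref{lem:homog-local}, while you supply the Moore-type argument behind that assertion. One small remark on your ``main obstacle'' paragraph: the factor-of-two safety margin $R = \lfloor (r-1)/2 \rfloor$ is overcautious, because for a cycle $C$ closed by a non-tree edge met at BFS depth $d_a$ (with LCA at depth $d_c$) the two quantities $\min_{u\in C}\dist(v,u) = d_c$ and $\ell(C) = d_a - d_c + 1$ are not independent but sum to exactly $d_a + 1$, so BFS to depth $r-1$ already yields the contradiction --- this only improves the constant, so your version is still fine.
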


\begin{proof}
  Let $G = (V,E)$ be the input graph of size $n$ and maximum degree $\Delta$. Any node $v$ in $G$ will see a node of degree $\leq \Delta-1$ or a cycle within distance $O(\log_{\Delta} n)$. Applying Lemma~\ref{lem:homog-local}, we can solve $P^*$ in time $O(\log_{\Delta} n)$.
\end{proof}

\begin{lemma} \label{lem:symlower}
  Solving $P^*$ in the randomized \local{} model requires time $\Omega(\log_{\Delta} n)$.
\end{lemma}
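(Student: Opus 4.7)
The plan is an indistinguishability argument: I will construct two input graphs $G_1$ and $G_2$ of the same size $n$ that share the same radius-$T$ neighborhood around a designated node $v$, but for which the constraints of $P^*$ force the label $d(v)$ to take different values in the two graphs. Combined with the standard observation that the algorithm's output at $v$ depends only on $B_T(v)$ (including identifiers and random bits inside it), this will yield $T = \Omega(\log_{\Delta} n)$.

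Choose $h = \Theta(\log_{\Delta} n)$ so that a balanced $\Delta$-regular tree of depth $h$ has at most $n$ nodes, and let $G_1$ be such a tree rooted at $v$ (internal nodes of degree $\Delta$, leaves of degree $1$), padded if necessary by a disjoint component placed far from $v$ (for example, isolated edges, whose degree-$1$ endpoints satisfy $P^*$ trivially via condition~(2)) to bring the total to exactly $n$ nodes. Construct $G_2$ from $G_1$ by pairing up the leaves of the tree and adding a matching edge between each pair, so that every original leaf now has degree $2$ while every internal node still has degree $\Delta$. Use the same identifier assignment on the shared part of the two graphs. For any $T \leq h-1$, the radius-$T$ neighborhood of $v$ contains only internal nodes of degree $\Delta$ and is identical as a labeled, identified subgraph in $G_1$ and $G_2$; hence the algorithm's output at $v$, viewed as a function of the random bits in $B_T(v)$, has the same distribution on both graphs.

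The heart of the argument is that any valid $P^*$-labeling of $G_1$ must assign $d(v) = 1$, while any valid labeling of $G_2$ must assign $d(v) = 2$. In $G_1$ the only irregularities are degree-$1$ leaves, and since $G_1$ is a tree there can be no pointer cycle; the non-backtracking chain from $v$ is therefore a simple walk in a finite tree and must terminate at some leaf, so condition~(5) forces $d(v) = \deg(\text{leaf}) = 1$. In $G_2$ the only irregularities are the degree-$2$ leaves. Every cycle of $G_2$ must pass through at least one such leaf, since the internal subgraph is still acyclic. Condition~(2) forces $p(\cdot) = \bot$ at every low-degree node, so no pointer cycle can contain a leaf, and hence no pointer cycle can exist in $G_2$ at all. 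The chain from $v$ therefore still cannot close up, and being non-backtracking in a finite graph it must terminate at some node with $p = \bot$, that is, at a leaf of degree $2$, forcing $d(v) = 2$.

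A global success probability at least $1/2$ on $G_1$ requires $\Pr[d(v) = 1] \geq 1/2$, and similarly on $G_2$ it requires $\Pr[d(v) = 2] \geq 1/2$. Since $\{d(v)=1\}$ and $\{d(v)=2\}$ are disjoint events under the common distribution induced by the random bits in $B_T(v)$, both cannot simultaneously exceed $1/2$; this contradicts $T \leq h-1$ and therefore yields $T \geq h = \Omega(\log_{\Delta} n)$. The step I expect to require the most care is the pointer-cycle-elimination in $G_2$: one must carefully invoke condition~(2) and the location of every cycle in $G_2$ to rule out every possible pointer cycle, since otherwise $d(v)$ would be an unconstrained label in $\{0,\dots,\Delta-1\}$ rather than being pinned to the irregularity degree $2$.
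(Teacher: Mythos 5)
Your proposal is correct and follows the same high-level indistinguishability framework as the paper: construct two $n$-node graphs that agree on the radius-$T$ view of $v$ but force different values of $d(v)$, then invoke the fact that disjoint outcomes cannot both have probability bounded away from $\tfrac12$. Where you diverge is in the choice of the second graph. The paper's $T'$ keeps the graph a tree: for every node $u$ at depth $r-1$, one of its leaf children is detached and re-attached as a child of a sibling leaf, which makes $u$ have degree $\Delta-1$ and forces $d(v) = \Delta-1$ while $|V|$ is unchanged. Because $T'$ is a tree, pointer chains trivially cannot close up and must terminate at a low-degree node, so no extra argument is needed. Your $G_2$ instead pairs up the depth-$h$ leaves with a perfect matching, turning them into degree-$2$ nodes and forcing $d(v) = 2$; this introduces cycles, so you correctly have to rule out pointer cycles before concluding the chain terminates. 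Your observation---that every cycle of $G_2$ passes through a degree-$2$ node, which has $p(\cdot) = \bot$ by condition~(2), hence is not on any pointer cycle, and the remaining degree-$\Delta$ subgraph is a tree on which a non-backtracking walk is simple---is the right argument and fills the gap your construction creates. Your version even has the small advantage of not needing to compare $1$ against $\Delta-1$, though both are fine since $\Delta \ge 4$ for $P^*$.

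Two small caveats. First, padding $G_1$ with \emph{disjoint} isolated edges is not permitted since the model restricts to connected graphs; either restrict to the infinitely many $n$ for which the balanced tree has exactly $n$ vertices (as the paper implicitly does), or use a connected padding gadget and argue that the chain from $v$ is confined to the balanced part of the graph. Second, with a success threshold of exactly $\tfrac12$ the two inequalities $\Pr[d(v)=1]\ge\tfrac12$ and $\Pr[d(v)=2]\ge\tfrac12$ are not literally contradictory (both could be exactly $\tfrac12$); this is present in the paper's phrasing too and is fixed by taking any success threshold strictly above $\tfrac12$.
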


\begin{proof}
  We consider two trees, $T$ and $T'$. The first is a balanced $\Delta$-regular tree, with $\Delta > 2$: the center vertex $v$ is at distance $r = \Theta(\log_{\Delta} n)$ from each leaf. Assume we have an algorithm $A$ with running time $t(n) = o(\log_{\Delta} n)$ that solves $P^*$ on $T$. For a large enough $n$, we have that $t(n) < r - 1$.

  Now no matter which neighbor $v$ points to, it must output $d(v) = 1$. Now consider the following $T'$: it is exactly as $T$, except for each node $u$ at distance $r-1$ from $v$. For each of these nodes, we remove one of its neighboring leaves and place it as a neighbor of one of the other leaves. Since $|V(T)| = |V(T')| = n$, we have that $A$ must also run in time $t(n) < r-1$ on $T'$. Therefore the $t(n)$-neighborhood of $v$ is indistinguishable between $T$ and $T'$.

  Each node $u$ at distance $r-1$ from $v$ (and every irregularity reachable from $v$ via a path of nodes of degree $\Delta$) has degree $\deg(u) = \Delta-1$ and therefore must output $p(u) = \bot$ and $d(u) = \Delta-1$. The pointer chain starting from $v$ can only terminate at the nodes at distance $r-1$, and therefore $v$ must also output $d(v) = \Delta-1$ on $T'$. Therefore the algorithm must fail on either $T$ or $T'$ with probability at least $1/2$.
\end{proof}

\subsection{\boldmath Classification of homogeneous \lcl{}s} \label{ssec:homog-classfication}

Finally, we are ready to prove Theorem~\ref{thm:homog-complexity}.

\thmhomogcomplexity*

\begin{proof}
  If the randomized complexity of an \lcl{} is at most logarithmic, it can fall into one of the following four categories: $O(1)$, between $\Omega(\log \log^* n)$ and $O(\log^* n)$, $\Theta(\log \log n)$, and $\Theta(\log n)$; see Appendix~\ref{sec:gap-appendix} for details. Our main result implies that homogeneous \lcl{}s cannot have a complexity between $\omega(1)$ and $o(\log^* n)$, as we show below.

  Let $P_H = P \cup P^*$ be a $\Delta$-homogeneous \lcl{}. Let $P$ be locally checkable in $r$ rounds. We divide all \lcl{}s into two types based on whether some constant label is valid for $P$ in the $\Delta$-regular tree, or not.

\begin{enumerate}
  \item Constant label is valid inside $\Delta$-regular trees. Each node outputs that constant label as solution to $P$. If its $r$-neighborhood is a tree, this is a valid output. If not, there is an irregularity within distance $r$. Using Lemma~\ref{lem:homog-local}, it is possible to solve $P^*$ in time $O(r)$. Therefore we have that $P_H \in \dlocal(O(1))$.

  \item Constant label is not valid inside $\Delta$-regular trees. Let $A$ be an algorithm for $P_H$. In each $r$-neighborhood $A$ must solve $P$ or $P^*$. On $\Delta$-regular trees the latter requires $\Omega(\log n)$ rounds. Therefore assume that on trees $A$ solves $P$. Since a constant label is not a feasible output in this case, each node $v$ is guaranteed to have a node $u$ inside its radius-$r$ neighborhood such that $A(v) \neq A(u)$. This implies that the output of $A$ forms a distance-$r$ weak coloring with a constant number of colors. By Lemma~\ref{lem:weak-col-family} this gives a weak 2-coloring in a constant number of rounds. By Theorem~\ref{thm:main}, this requires $\Omega(\log^* n)$ rounds, so we have that $P$, and also $P_H$, require $\Omega(\log^* n)$ rounds. If $P \in \rlocal(O(\log^* n))$, then by the derandomization result of Chang et al.\ \cite{chang16exponential} (see Theorem~\ref{thm:derandomization} in Appendix~\ref{sec:gap-appendix}) we also have that $P \in \dlocal(O(\log^* n))$.

  \item $P \in \rlocal(o(\log n)) \setminus \rlocal(O(\log^* n))$. By Theorem~\ref{thm:sublog-trees}, due to Chang and Pettie~\cite{chang17hierarchy}, and Chang et al.\ \cite{chang18complexity}, there is a randomized algorithm $A$ for solving $P$ in time $O(\log \log n)$ in neighborhoods that are tree-like up to distance $\Omega(\log \log n)$.
  In neighborhoods that are not tree-like up to that distance, $P^*$ can be solved in time $o(\log \log n)$. Let $T(n)$ denote the running time of the randomized algorithm to solve $P$ on trees. Run the following algorithm: in parallel try to solve $P$ with the $T(n)$-time distributed algorithm, and try to solve $P^*$ for parameter $k = T(n)+r$, as in Lemma~\ref{lem:homog-local}. Every $k$-neighborhood of a node $v$ is either a tree, in which case $A$ succeeds in solving $P$ in the $r$-neighborhood of $v$, or there is an irregularity in $B_k(v)$, in which case all nodes in $B_1(v)$ solve $P^*$ in time $O(k)$.

  Since deterministic complexity cannot be between $\omega(\log^* n)$ and $o(\log n)$ by Theorem~\ref{thm:det-logstar-log-gap}, it must be exactly $\Theta(\log n)$.

  \item $P \notin \rlocal(o(\log n))$. Finally, even if $P$ has randomized complexity $\omega(\log n)$, we can solve $P^*$ in time $O(\log n)$. \qedhere
\end{enumerate}
\end{proof}

\section*{Acknowledgments}
We would like to thank Sebastian Brandt for discussions related to weak 2-coloring, Tuomo Lempi\"ainen for discussions related to the concept of minimality, and anonymous reviewers for their helpful comments on previous versions of this work.
This work was supported in part by the Academy of Finland, Grants 285721 and 314888.

\newpage

{
	\urlstyle{sf}
	\DeclareUrlCommand{\Doi}{\urlstyle{same}}
	\renewcommand{\doi}[1]{\href{http://dx.doi.org/#1}{\footnotesize\sf doi:\Doi{#1}}}
	\bibliographystyle{plainnat}
	\bibliography{weak2col}

\begin{thebibliography}{23}
\providecommand{\natexlab}[1]{#1}
\providecommand{\url}[1]{\texttt{#1}}
\expandafter\ifx\csname urlstyle\endcsname\relax
  \providecommand{\doi}[1]{doi: #1}\else
  \providecommand{\doi}{doi: \begingroup \urlstyle{rm}\Url}\fi

\bibitem[Balliu et~al.(2018{\natexlab{a}})Balliu, Brandt, Olivetti, and
  Suomela]{Balliu2018disc}
Alkida Balliu, Sebastian Brandt, Dennis Olivetti, and Jukka Suomela.
\newblock {Almost global problems in the LOCAL model}.
\newblock In \emph{Proc. 32nd International Symposium on Distributed Computing
  (DISC 2018)}, Leibniz International Proceedings in Informatics (LIPIcs).
  Schloss Dagstuhl–Leibniz-Zentrum f{\"{u}}r Informatik, 2018{\natexlab{a}}.
\newblock \doi{10.4230/LIPIcs.DISC.2018.9}.

\bibitem[Balliu et~al.(2018{\natexlab{b}})Balliu, Hirvonen, Korhonen,
  Lempi{\"{a}}inen, Olivetti, and Suomela]{Balliu2018stoc}
Alkida Balliu, Juho Hirvonen, Janne~H Korhonen, Tuomo Lempi{\"{a}}inen, Dennis
  Olivetti, and Jukka Suomela.
\newblock {New classes of distributed time complexity}.
\newblock In \emph{Proc. 50th ACM Symposium on Theory of Computing (STOC
  2018)}, pages 1307--1318. ACM Press, 2018{\natexlab{b}}.
\newblock \doi{10.1145/3188745.3188860}.

\bibitem[Barenboim and Elkin(2013)]{Barenboim2013}
Leonid Barenboim and Michael Elkin.
\newblock \emph{{Distributed Graph Coloring: Fundamentals and Recent
  Developments}}, volume~4.
\newblock 2013.
\newblock \doi{10.2200/S00520ED1V01Y201307DCT011}.

\bibitem[Brandt et~al.(2016)Brandt, Fischer, Hirvonen, Keller,
  Lempi{\"{a}}inen, Rybicki, Suomela, and Uitto]{Brandt2016}
Sebastian Brandt, Orr Fischer, Juho Hirvonen, Barbara Keller, Tuomo
  Lempi{\"{a}}inen, Joel Rybicki, Jukka Suomela, and Jara Uitto.
\newblock {A lower bound for the distributed Lov{\'{a}}sz local lemma}.
\newblock In \emph{Proc. 48th ACM Symposium on Theory of Computing (STOC
  2016)}, pages 479--488. ACM Press, 2016.
\newblock \doi{10.1145/2897518.2897570}.

\bibitem[Brandt et~al.(2017)Brandt, Hirvonen, Korhonen, Lempi{\"{a}}inen,
  {\"{O}}sterg{\aa}rd, Purcell, Rybicki, Suomela, and
  Uzna{\'{n}}ski]{Brandt2017}
Sebastian Brandt, Juho Hirvonen, Janne~H Korhonen, Tuomo Lempi{\"{a}}inen,
  Patric R~J {\"{O}}sterg{\aa}rd, Christopher Purcell, Joel Rybicki, Jukka
  Suomela, and Przemys{\l}aw Uzna{\'{n}}ski.
\newblock {LCL problems on grids}.
\newblock In \emph{Proc. 36th ACM Symposium on Principles of Distributed
  Computing (PODC 2017)}, pages 101--110. ACM Press, 2017.
\newblock \doi{10.1145/3087801.3087833}.

\bibitem[Chang and Pettie(2017)]{chang17hierarchy}
Yi-Jun Chang and Seth Pettie.
\newblock {A Time Hierarchy Theorem for the LOCAL Model}.
\newblock In \emph{Proc. 58th IEEE Symposium on Foundations of Computer Science
  (FOCS 2017)}, pages 156--167. IEEE, 2017.
\newblock \doi{10.1109/FOCS.2017.23}.

\bibitem[Chang et~al.(2016)Chang, Kopelowitz, and Pettie]{chang16exponential}
Yi-Jun Chang, Tsvi Kopelowitz, and Seth Pettie.
\newblock {An Exponential Separation between Randomized and Deterministic
  Complexity in the LOCAL Model}.
\newblock In \emph{Proc. 57th IEEE Symposium on Foundations of Computer Science
  (FOCS 2016)}, pages 615--624. IEEE, 2016.
\newblock \doi{10.1109/FOCS.2016.72}.

\bibitem[Chang et~al.(2018)Chang, He, Li, Pettie, and Uitto]{chang18complexity}
Yi-Jun Chang, Qizheng He, Wenzheng Li, Seth Pettie, and Jara Uitto.
\newblock {The Complexity of Distributed Edge Coloring with Small Palettes}.
\newblock In \emph{Proc. 29th ACM-SIAM Symposium on Discrete Algorithms (SODA
  2018)}, pages 2633--2652. Society for Industrial and Applied Mathematics,
  2018.
\newblock \doi{10.1137/1.9781611975031.168}.

\bibitem[Cole and Vishkin(1986)]{cole86deterministic}
Richard Cole and Uzi Vishkin.
\newblock {Deterministic coin tossing with applications to optimal parallel
  list ranking}.
\newblock \emph{Information and Control}, 70\penalty0 (1):\penalty0 32--53,
  1986.
\newblock \doi{10.1016/S0019-9958(86)80023-7}.

\bibitem[Fischer and Ghaffari(2017)]{fischer17sublogarithmic}
Manuela Fischer and Mohsen Ghaffari.
\newblock {Sublogarithmic Distributed Algorithms for Lov{\'{a}}sz Local Lemma,
  and the Complexity Hierarchy}.
\newblock In \emph{Proc. 31st International Symposium on Distributed Computing
  (DISC 2017)}, pages 18:1--18:16, 2017.
\newblock \doi{10.4230/LIPIcs.DISC.2017.18}.

\bibitem[Ghaffari and Su(2017)]{ghaffari17distributed}
Mohsen Ghaffari and Hsin-Hao Su.
\newblock {Distributed Degree Splitting, Edge Coloring, and Orientations}.
\newblock In \emph{Proc. 28th ACM-SIAM Symposium on Discrete Algorithms (SODA
  2017)}, pages 2505--2523. Society for Industrial and Applied Mathematics,
  2017.
\newblock \doi{10.1137/1.9781611974782.166}.

\bibitem[Ghaffari et~al.(2017)Ghaffari, Kuhn, and Maus]{Ghaffari2017}
Mohsen Ghaffari, Fabian Kuhn, and Yannic Maus.
\newblock {On the complexity of local distributed graph problems}.
\newblock In \emph{Proc. 49th ACM SIGACT Symposium on Theory of Computing (STOC
  2017)}, pages 784--797. ACM Press, 2017.
\newblock \doi{10.1145/3055399.3055471}.

\bibitem[Ghaffari et~al.(2018{\natexlab{a}})Ghaffari, Harris, and
  Kuhn]{Ghaffari2018}
Mohsen Ghaffari, David~G Harris, and Fabian Kuhn.
\newblock {On Derandomizing Local Distributed Algorithms}.
\newblock In \emph{Proc. 59th IEEE Symposium on Foundations of Computer Science
  (FOCS 2018)}, 2018{\natexlab{a}}.
\newblock \doi{10.1109/FOCS.2018.00069}.

\bibitem[Ghaffari et~al.(2018{\natexlab{b}})Ghaffari, Hirvonen, Kuhn, and
  Maus]{Ghaffari2018a}
Mohsen Ghaffari, Juho Hirvonen, Fabian Kuhn, and Yannic Maus.
\newblock {Improved Distributed $\Delta$-Coloring}.
\newblock In \emph{Proc. 37th ACM Symposium on Principles of Distributed
  Computing (PODC 2018)}, pages 427--436. ACM, 2018{\natexlab{b}}.
\newblock \doi{10.1145/3212734.3212764}.

\bibitem[Goldberg et~al.(1988)Goldberg, Plotkin, and Shannon]{Goldberg1988}
Andrew~V. Goldberg, Serge~A. Plotkin, and Gregory~E. Shannon.
\newblock {Parallel Symmetry-Breaking in Sparse Graphs}.
\newblock \emph{SIAM Journal on Discrete Mathematics}, 1\penalty0 (4):\penalty0
  434--446, 1988.
\newblock \doi{10.1137/0401044}.

\bibitem[Laurinharju and Suomela(2014)]{Laurinharju2014}
Juhana Laurinharju and Jukka Suomela.
\newblock {Brief announcement: Linial's lower bound made easy}.
\newblock In \emph{Proc. 33rd ACM SIGACT-SIGOPS Symposium on Principles of
  Distributed Computing (PODC 2014)}, pages 377--378. ACM Press, 2014.
\newblock \doi{10.1145/2611462.2611505}.

\bibitem[Linial(1992)]{Linial1992}
Nathan Linial.
\newblock {Locality in Distributed Graph Algorithms}.
\newblock \emph{SIAM Journal on Computing}, 21\penalty0 (1):\penalty0 193--201,
  1992.
\newblock \doi{10.1137/0221015}.

\bibitem[Naor(1991)]{Naor1991}
Moni Naor.
\newblock {A lower bound on probabilistic algorithms for distributive ring
  coloring}.
\newblock \emph{SIAM Journal on Discrete Mathematics}, 4\penalty0 (3):\penalty0
  409--412, 1991.
\newblock \doi{10.1137/0404036}.

\bibitem[Naor and Stockmeyer(1995)]{Naor1995}
Moni Naor and Larry Stockmeyer.
\newblock {What Can be Computed Locally?}
\newblock \emph{SIAM Journal on Computing}, 24\penalty0 (6):\penalty0
  1259--1277, 1995.
\newblock \doi{10.1137/S0097539793254571}.

\bibitem[Panconesi and Srinivasan(1995)]{panconesi95delta}
Alessandro Panconesi and Aravind Srinivasan.
\newblock {The local nature of $\Delta$-coloring and its algorithmic
  applications}.
\newblock \emph{Combinatorica}, 15\penalty0 (2):\penalty0 255--280, 1995.
\newblock \doi{10.1007/BF01200759}.

\bibitem[Peleg(2000)]{Peleg2000}
David Peleg.
\newblock \emph{{Distributed Computing: A Locality-Sensitive Approach}}.
\newblock Society for Industrial and Applied Mathematics, 2000.
\newblock \doi{10.1137/1.9780898719772}.

\bibitem[Pettie(2018)]{Pettie2018}
Seth Pettie.
\newblock {Automatically Speeding Up LOCAL Graph Algorithms}.
\newblock In \emph{7th Workshop on Advances in Distributed Graph Algorithms
  (ADGA 2018)}, 2018.
\newblock URL \url{http://adga.hiit.fi/2018/Seth.pdf}.

\bibitem[Rybicki(2011)]{rybicki11exact}
Joel Rybicki.
\newblock \emph{{Exact bounds for distributed graph colouring}}.
\newblock Master's thesis, University of Helsinki, 2011.
\newblock URL \url{http://hdl.handle.net/10138/26560}.

\end{thebibliography}
}

\appendix

\newpage

\section{Appendix}

\subsection{\boldmath Gaps in the complexity landscape of \lcl{} problems}\label{sec:gap-appendix}

In this section, we review the theorems that are necessary to prove all the gaps required to fully characterize the possible complexities of homogeneous \lcl{}s. The proofs give a sketch of how exactly these gaps can be shown. We emphasize that all of these follow from previous work.

The following theorem connects the randomized and deterministic complexities of \lcl{}s.

\begin{theorem}[\cite{chang16exponential}, Theorem~3]\label{thm:derandomization}
  The deterministic complexity of any \lcl{} $P$ on instances of size $n$ is upper bounded by the randomized complexity of $P$ on instances of size $2^{n^2}$.
\end{theorem}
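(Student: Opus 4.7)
The plan is to derandomize by fixing, via the probabilistic method, a single random seed that makes the given randomized algorithm succeed on every size-$n$ instance simultaneously. Let $R$ be a randomized algorithm for $P$ with complexity $T := T_R(N)$ where $N := 2^{n^2}$, having success probability at least $1-1/N$ on size-$N$ instances (after standard amplification, see below).

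First I would argue that $R$ can be applied unchanged to any size-$n$ instance $G$ with identifiers in $[n^c]$. Since $n^c \le N^c$, the identifier range fits inside what $R$ expects for a size-$N$ problem, and $R$ is required to work on every graph of size at most $N$, so in particular on $G$. Amplifying $R$ from any constant success gap down to $\le 1/N$ costs only a multiplicative constant in running time: run a constant number of independent copies of $R$ in parallel and use the local $\lcl{}$ verifier of $P$ to select, for each node, an output drawn from a copy whose radius-$r$ neighborhood certifies correctness; if none succeeds, output a canonical default (this introduces the failure event of interest). Because verification of $P$ is local and runs in $O(1)$ rounds, amplification stays within $T_R(N)$ up to constants.

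Next I would apply the counting derandomization. The total number of distinct size-$n$ labeled instances (graph topology together with an injective identifier assignment from $[n^c]$) is at most $2^{\binom{n}{2}} \cdot n^{cn} \le 2^{n^2/2 + O(n \log n)}$. Each such instance fails under $R$ with probability at most $1/N = 2^{-n^2}$, so summing over all instances (and implicitly over all nodes within each instance, absorbing a $\mypoly(n)$ factor) gives total failure probability at most $2^{-n^2/2 + O(n \log n)} \to 0$. By the probabilistic method there exists a fixed assignment of random bits, indexed by the possible identifier values and of total length $\mypoly(N)$, such that $R$ is correct on every size-$n$ labeled instance. Hard-coding this assignment (as a function of $n$) inside the algorithm gives a deterministic algorithm with running time $T_R(2^{n^2})$.

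The main obstacle is the parameter calibration. The exponent $n^2$ in $N = 2^{n^2}$ is tuned precisely so that the $2^{\Theta(n^2)}$ labeled size-$n$ instances are dominated by $N$, leaving enough slack for the union bound to close even after accounting for the $n^{cn}$ identifier choices; a milder gap such as $N = 2^n$ would not suffice. The argument relies crucially on local checkability of $P$ (for the amplification step) and is therefore specific to $\lcl{}$s, matching the setting of Chang, Kopelowitz, and Pettie.
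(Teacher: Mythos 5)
Your core plan — view the randomness as a function of the identifier, observe that a uniformly random table fails on any fixed size-$n$ labeled instance with probability at most $1/N$ for $N = 2^{n^2}$, count the at most $2^{\binom{n}{2}} \cdot n^{cn} < 2^{n^2}$ labeled instances, and fix a good table by the probabilistic method — is essentially the same argument Chang, Kopelowitz, and Pettie give for their Theorem~3, and your calibration of $N$ against the instance count is correct.

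However, the amplification step you insert is a genuine error. Letting each node independently adopt the output of whichever parallel copy of $R$ locally verifies in its radius-$r$ neighborhood does not produce a valid joint labeling: two adjacent nodes may adopt outputs from \emph{different} copies, each of which verifies at the respective node, yet together violate a constraint (two distinct proper colorings are each locally valid, but mixing them across an edge can create a monochromatic edge). Local checkability gives you no way to coordinate which copy to commit to, and this coordination problem is exactly why naive success-probability amplification for \lcl{}s fails. Fortunately the step is also unnecessary: the convention in Chang et al.\ and throughout this literature is that a randomized $\local$ algorithm on $N$-node instances succeeds with probability at least $1-1/N$ by definition, so the bound you need is available without amplification. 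You should also make the ``lying'' step explicit rather than asserting that $R$ ``is required to work on every graph of size at most $N$'': since $R$ is given the exact $n$, you need to pad $G$ to an $N$-node graph $H$ so that every node of $G$ is at distance more than $2T_R(N) + r$ from the padding; then, by locality, the run of $R$ on $G$ (told size $N$) agrees with its run on $H$, and the $1/N$ failure bound for $H$ transfers to the restriction to $G$ because the constraints of $P$ are local. That is what licenses plugging $1/N$ into the union bound over size-$n$ instances.
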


\begin{corollary}[\cite{Naor1995,chang16exponential}] \label{cor:sublogstar-rand-det}
  For \lcl{}s, and for each $f(n) = O(\log^* n)$, we have that $\rlocal{}(f(n)) \subseteq \dlocal{}(O(f(n)))$.
\end{corollary}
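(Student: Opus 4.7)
The plan is to invoke \theoremref{thm:derandomization} as a black box and then just chase the asymptotics. Suppose $P$ is an \lcl{} with a randomized algorithm of running time $f(n) = O(\log^* n)$. By the theorem, the deterministic complexity of $P$ on instances of size $n$ is bounded above by $f(N)$ with $N = 2^{n^2}$. So it suffices to check that $f(2^{n^2}) = O(f(n))$, and for this it is enough to show $\log^*(2^{n^2}) = O(\log^* n)$.

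The key calculation is the following chain. By definition of $\log^*$, applying $\log$ once to $2^{n^2}$ gives $n^2$, so $\log^*(2^{n^2}) = 1 + \log^*(n^2)$. One more logarithm yields $2 \log n$, hence $\log^*(n^2) = 1 + \log^*(2 \log n)$. For $n \geq 4$ we have $2 \log n \leq n$, and $\log^*$ is non-decreasing, so $\log^*(2 \log n) \leq \log^* n$. Chaining these bounds gives $\log^*(2^{n^2}) \leq \log^* n + 2$ for all sufficiently large $n$. Therefore $f(2^{n^2}) = O(\log^*(2^{n^2})) = O(\log^* n) = O(f(n))$.

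I do not expect any obstacle here: the derandomization theorem does all the real work, and the only content of the corollary is that the class of functions $O(\log^* n)$ is closed under the transformation $n \mapsto 2^{n^2}$, which follows from the elementary monotonicity and telescoping properties of $\log^*$ used above. I would simply combine these two observations into a two-sentence proof, pointing out that the same reasoning would also yield $\rlocal(o(\log^* n)) \subseteq \dlocal(o(\log^* n))$, which is what gets actually used later to place the $O(\log^* n)$ class jointly among deterministic and randomized complexities in \theoremref{thm:homog-complexity}.
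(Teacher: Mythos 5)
Your approach matches the paper's: both treat \theoremref{thm:derandomization} as a black box and reduce the claim to the fact that $\log^*$ grows by only an additive constant under the map $n \mapsto 2^{n^2}$, which you verify by the telescoping calculation $\log^*(2^{n^2}) = 2 + \log^*(2\log n) \le \log^* n + 2$. The paper's proof is terser --- it cites Naor--Stockmeyer for the $O(1)$ case and simply says ``the generalization follows from Theorem~\ref{thm:derandomization}'' without spelling out the asymptotics --- but the substance is the same, and your version subsumes the constant case without the separate citation, since for $f(n)=O(1)$ the bound $f(2^{n^2})=O(1)$ is immediate.

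One small imprecision to note: the chain $f(2^{n^2}) = O(\log^*(2^{n^2})) = O(\log^* n) = O(f(n))$ overreaches at the last step. The hypothesis $f(n)=O(\log^* n)$ gives an upper bound on $f$ but not a matching lower bound, so $O(\log^* n) = O(f(n))$ can fail for a pathological $f$ oscillating between, say, $1$ and $\log^* n$. What your computation actually shows is that the deterministic complexity is $O(\log^* n)$, which is exactly what is used downstream (case~2 of \theoremref{thm:homog-complexity} only needs $\rlocal(O(\log^* n)) \subseteq \dlocal(O(\log^* n))$). This is really a quirk of how the corollary is phrased rather than a flaw specific to your argument --- the paper's one-line proof glosses over the very same point.
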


\begin{proof}
  For $f(n) = O(1)$, this was shown by Naor and Stockmeyer \cite{Naor1995}. The generalization follows from Theorem~\ref{thm:derandomization}.
\end{proof}

\begin{theorem}[\cite{Naor1995,chang17hierarchy}] \label{thm:const-loglogstar-gap}
  There are no \lcl{}s with complexity $\omega(1)$ and $o(\log \log^* n)$.
\end{theorem}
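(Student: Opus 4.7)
The plan is to use the Ramsey-theoretic speedup argument originating with Naor and Stockmeyer, pushed through the $o(\log \log^* n)$ regime by Chang and Pettie. By Theorem~\ref{thm:derandomization} it suffices to handle the deterministic case, since $\log^*(2^{n^2}) = O(\log^* n)$ implies that a randomized complexity of $o(\log \log^* n)$ on instances of size $2^{n^2}$ translates into a deterministic complexity of $o(\log \log^* n)$ on instances of size $n$.

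Suppose for contradiction that a deterministic algorithm $A$ solves a non-trivial \lcl{} $P$ in time $T = T(n) = o(\log \log^* n)$. The output $A(v)$ depends only on the identifier-labeled isomorphism type of the radius-$T$ ball $B_T(v)$, which contains at most $k := \Delta^{T+1}$ nodes. I would define a coloring $\chi$ of ordered $k$-subsets of the identifier space $[n^c]$ that records, for every $\Delta$-bounded rooted topology on $k$ vertices with those identifiers assigned in the given order, the output $A$ produces at the root. The number of colors is a constant $C = C(k, \Delta, |\Sigma_{\Out}|)$.

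Apply the hypergraph Ramsey theorem at rank $k$ with $C$ colors on the ground set $[n^c]$. Since $T = o(\log \log^* n)$ gives $k = \Delta^{T+1} = o(\log^* n)$, the tower of exponentials of height $k-1$ appearing in the Ramsey threshold is dominated by $n^c$, so we extract a monochromatic subset $S \subseteq [n^c]$ whose size $|S|$ tends to infinity with $n$. On every input whose identifiers lie in $S$, the algorithm $A$ becomes \emph{order-invariant}: its output at $v$ depends only on the order type of the identifiers appearing in $B_T(v)$, not on their numerical values.

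The final step is the standard Naor--Stockmeyer conclusion: an order-invariant, sub-$\log^*$-time algorithm solving an \lcl{} must collapse to an identifier-free $O(1)$-round algorithm. Concretely, take the worst-case graph family witnessing non-triviality of $P$ on $N = |S|$ nodes, assign the identifiers from $S$ in strictly increasing order along a long path (or analogously around a long cycle) inside that graph, and observe that all interior nodes on the stretch have order-isomorphic radius-$T$ neighborhoods and therefore receive the same output from $A$. That common output must locally satisfy the constraints of $P$, so outputting this constant label everywhere produces a valid solution on arbitrarily large instances of the hard family, placing $P \in \dlocal(O(1))$ and contradicting $P \notin \dlocal(O(1))$. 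The main delicate point is the bookkeeping that matches the tower of $k-1$ logarithms to the threshold $T = \log \log^* n$ and, more importantly, arguing that the ``symmetric'' identifier arrangement on the appropriate worst-case instance really does reduce $P$ to an identifier-free constant-time problem for every non-trivial \lcl{}, not just for symmetry-breaking problems in cycles; this is exactly the extension of the original $T = O(1)$ Naor--Stockmeyer argument that Chang and Pettie provide, and it is where the proof requires the most care.
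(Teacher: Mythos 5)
Your setup is right and matches the paper's route up to a point: reduce to the deterministic case via Theorem~\ref{thm:derandomization}, use the hypergraph Ramsey theorem to extract an identifier set $S$ on which $A$ becomes order-invariant, and then argue that an order-invariant algorithm running in $o(\log n)$ rounds must actually be constant time. The gap is in how you justify that last collapse. You invoke the ``assign identifiers from $S$ in increasing order along a long path so that all interior nodes have order-isomorphic radius-$T$ balls'' argument. That is Linial's argument for cycles and paths, and it does not generalize to arbitrary bounded-degree graphs: on a general worst-case graph, the radius-$T$ ball of a node on the path contains many off-path nodes whose identifiers also enter the order type, so successive balls along the path are typically \emph{not} order-isomorphic, and there is no canonical stretch to place them on. Moreover, even if you could force all nodes on a stretch to output the same label and that labeling locally satisfied the constraints on the stretch, it would not follow that the constant label is valid on every radius-$r$ ball of every instance, which is what $P\in\dlocal(O(1))$ requires.

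The paper's argument (following Chang and Pettie, Appendix A) is a different one and does not involve constructing a symmetric instance. The point is that once the algorithm is order-invariant, its output is a function of the \emph{topology and order type} of the radius-$T$ ball. For $T(n)=o(\log n)$ and bounded degree, these balls are trees, and the set of possible (topology, order type) patterns at radius $t$ is finite and independent of $n$. Therefore the algorithm's behavior ``freezes'' at some constant radius $T(n_0)$: one can run the order-invariant algorithm pretending the instance has a fixed size $n_0$, and because the LCL constraint is local and the view is a tree, the resulting constant-radius rule is still correct on all larger instances. That fooling/speedup step is what turns an order-invariant $o(\log n)$-time algorithm into a genuine $O(1)$-time algorithm, and it is the step your proposal replaces with an argument that only works on cycles.
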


\begin{proof}
    Naor and Stockmeyer~\cite{Naor1995} used Ramsey's theorem to argue that a constant-time deterministic algorithm for an \lcl{} implies a constant-time algorithm that only uses the relative order of the node identifiers. A more careful analysis implies that this analysis extends to $o(\log^* n)$ time on rings and $o(\log \log^* n)$ time on bounded-degree graphs (see \cite{chang17hierarchy}, Appendix A for a detailed analysis).

    An order-invariant algorithm cannot run in non-constant time that is $o(\log n)$, since the input does not reveal anything about the size of the graph before cycles close: therefore this implies a gap between deterministic running times $\omega(1)$ and $o(\log \log^* n)$. By Corollary~\ref{cor:sublogstar-rand-det} this gap extends to randomized algorithms.
\end{proof}

\begin{theorem}[\cite{chang16exponential}] \label{thm:det-logstar-log-gap}
  There are no \lcl{} problems with deterministic complexity between $\omega(\log^* n)$ and $o(\log n)$.
\end{theorem}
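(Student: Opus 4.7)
My plan is to follow the speedup argument of Chang, Kopelowitz, and Pettie~\cite{chang16exponential}: any deterministic algorithm for an \lcl{} with complexity $T(n) = o(\log n)$ can be accelerated to $O(\log^* n)$ via a self-reduction that shrinks the effective problem size, so the supposed middle regime $\omega(\log^* n) \le T(n) \le o(\log n)$ is empty.

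The first step is to use Linial's coloring algorithm~\cite{Linial1992} to compute, in $O(T(n) + \log^* n)$ rounds, a proper distance-$(2T(n)+1)$ vertex coloring $\chi$ with $c = \Delta^{O(T(n))}$ colors. Inside any $T(n)$-radius ball the values of $\chi$ are pairwise distinct and can therefore serve as locally unique identifiers drawn from $[c]$. Since $T(n) = o(\log n)$ and $\Delta = O(1)$, we have $c = 2^{o(\log n)} = n^{o(1)}$.

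The second step is the simulation. Each node $v$ collects its $T(n)$-radius labelled neighborhood and then executes the original algorithm $A$ on this local view, pretending the network has size $c$ with identifiers given by $\chi$. Because this neighborhood is indistinguishable from the $T(n)$-ball around $v$ in some legal execution of $A$ on a $c$-vertex graph, the output at $v$ satisfies the local checkability constraints of $P$, and the whole simulation costs $T(c) + O(T(n) + \log^* n)$ rounds. The third step is iteration: since $c = n^{o(1)}$, one application of the construction already replaces $T(n)$ by essentially $T(c) \ll T(n)$, and a constant number of iterations collapses the bound to $O(\log^* n)$, contradicting $T(n) = \omega(\log^* n)$.

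The main obstacle is a subtle consistency issue in the simulation step: the algorithm $A$ normally receives the true size $n$ as input, but we feed it the value $c$. One must verify that the output of $A$ at $v$ does not depend on the size parameter in a way that invalidates the local constraints back in the original graph. The fix, which I would import from \cite{chang16exponential}, is to extend (``pad'') the virtual $c$-vertex view to an actual graph of size $c$ on which $A$ is guaranteed to be correct, and to exploit the fact that an \lcl{} verifier only inspects a constant-radius neighborhood, so it suffices to produce a locally consistent output rather than a globally coherent one. The existing derandomization result (Theorem~\ref{thm:derandomization}) can be invoked to move freely between deterministic and randomized regimes of complexity $O(\log^* n)$ when this is convenient during the iteration.
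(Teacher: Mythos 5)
Your overall plan — fabricate locally consistent identifiers via a distance coloring, lie to the algorithm about the instance size, and exploit that an \lcl{} verifier is local — is the right idea and matches the Chang--Kopelowitz--Pettie argument the paper cites. However, your ``third step'' contains a real gap. After you compute the distance-$(2T(n)+1)$ coloring $\chi$ in $O(T(n)+\log^* n)$ rounds and then simulate $A$ for $T(c)$ further rounds, the total round complexity of the new algorithm on the actual $n$-node graph is $T(c) + O(T(n)+\log^* n) \ge \Omega(T(n))$: the coloring step alone already costs $\Omega(T(n))$ because its radius depends on $n$, so you have not produced a faster algorithm at all, and there is nothing to iterate. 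Moreover, even granting the false premise that the new cost were merely $T(c)$ with $c = n^{o(1)}$, a constant number of iterations would not reach $O(\log^* n)$; e.g.\ for $T(n)=\sqrt{\log n}$ and $c = 2^{O(\sqrt{\log n})}$, $k$ iterations give only $(\log n)^{2^{-k}}$, so $\omega(1)$ iterations would be needed.

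The correct version, and the one the paper's proof sketch alludes to (``lie to the algorithm that the graph is of much smaller size''), fixes a \emph{constant} size $n^*$ rather than a function $c(n)$. Since $T(n)=o(\log n)$, one can pick a constant $n^*$ large enough that $\Delta^{O(T(n^*))} \le (n^*)^{O(1)}$, i.e.\ the identifiers needed inside a $T(n^*)$-ball fit in the legal ID space of an $n^*$-node graph. Then on the true $n$-node graph one computes a distance-$(2T(n^*)+O(1))$ coloring with $\Delta^{O(T(n^*))} = O(1)$ colors in $O(\log^* n)$ rounds (the radius is now a constant independent of $n$), feeds these as IDs, tells $A$ the size is $n^*$, and runs it for $T(n^*)=O(1)$ rounds. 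Local indistinguishability from a genuine execution on some $n^*$-node graph, together with the local checkability of $P$, gives a correct output, and the total running time is $O(\log^* n)$ in a single application. The padding observation you make is fine; the iteration and the $n$-dependent choice of $c$ are the parts that need to be replaced by the fixed-$n^*$ argument.
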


\begin{proof}
    This follows directly from Theorem~6 of Chang et al.\ \cite{chang16exponential}: given a $T(n) = o(\log n)$-round algorithm for an \lcl{} $P$, we can lie to the algorithm that the graph is of much smaller size and compute a coloring of the nodes (in time $O(\log^* n)$) that looks locally like an identifier setting in a small graph. Since the algorithm is fooled everywhere locally, it must produce a proper solution everywhere locally. Since we are solving an \lcl{}, this is a globally feasible solution.
\end{proof}

\begin{theorem}[\cite{chang16exponential}] \label{thm:rand-logstar-loglog-gap}
  There are no \lcl{} problems with randomized complexity between $\omega(\log^* n)$ and $o(\log \log n)$.
\end{theorem}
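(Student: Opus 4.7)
The plan is to derive the randomized gap directly from the deterministic gap (Theorem~\ref{thm:det-logstar-log-gap}) via the derandomization result (Theorem~\ref{thm:derandomization}), by an argument by contradiction.

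First I would suppose, for contradiction, that there exists an \lcl{} $P$ whose randomized complexity $T_r(n)$ satisfies both $T_r(n) = \omega(\log^* n)$ and $T_r(n) = o(\log \log n)$. The goal is to squeeze its deterministic complexity into the forbidden window of Theorem~\ref{thm:det-logstar-log-gap}, producing a contradiction.

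Next I would invoke Theorem~\ref{thm:derandomization}, which gives $T_d(n) \le T_r(2^{n^2})$ for the deterministic complexity $T_d$. Since $T_r(n) = o(\log \log n)$, we get
\[
T_d(n) \le T_r(2^{n^2}) = o\bigl(\log \log 2^{n^2}\bigr) = o\bigl(\log(n^2)\bigr) = o(\log n).
\]
Hence $P$ has deterministic complexity strictly less than $\log n$ asymptotically.

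Now I would apply Theorem~\ref{thm:det-logstar-log-gap}: no \lcl{} has deterministic complexity strictly between $\omega(\log^* n)$ and $o(\log n)$. Combined with $T_d(n) = o(\log n)$, this forces $T_d(n) = O(\log^* n)$. Since any deterministic algorithm is a valid randomized one, $T_r(n) \le T_d(n) = O(\log^* n)$, contradicting $T_r(n) = \omega(\log^* n)$. No single step here is a real obstacle; the only subtlety is being careful that the derandomization bound is applied at input size $n$, turning the $o(\log \log \cdot)$ randomized bound at size $2^{n^2}$ into an $o(\log n)$ deterministic bound at size $n$, which is precisely what lets Theorem~\ref{thm:det-logstar-log-gap} close the gap.
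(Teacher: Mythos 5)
Your proposal is correct and takes essentially the same approach as the paper: both derive the randomized gap by feeding the $o(\log\log n)$ randomized bound through the derandomization of Theorem~\ref{thm:derandomization} to obtain an $o(\log n)$ deterministic bound, then invoke Theorem~\ref{thm:det-logstar-log-gap} and the trivial inequality $T_r \le T_d$. Your write-up is in fact a bit more careful than the paper's one-paragraph sketch about the direction of the derandomization inequality and the exact size substitution $n \mapsto 2^{n^2}$.
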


\begin{proof}
    This follows from Theorem~\ref{thm:det-logstar-log-gap} and Theorem~\ref{thm:derandomization}, both due to Chang et al.\ \cite{chang16exponential}.
    Theorem~\ref{thm:derandomization} states that the randomized complexity of an \lcl{} $P$ on instances of size $n$ is at least the deterministic complexity of $P$ on instances of size $2^{n^2}$.
    Since, by Theorem~\ref{thm:det-logstar-log-gap} there are no \lcl{}s with deterministic complexity $\omega(\log^* n)$ and $o(\log n)$, there can be no \lcl{}s with randomized complexity between $\omega(\log^* n)$ and $o(\log \log n)$.
\end{proof}

\begin{theorem}[\cite{chang17hierarchy,chang18complexity}] \label{thm:sublog-trees}
  On trees, all \lcl{}s that can be solved in time $o(\log n)$ can be solved in time $O(\log \log n)$.
\end{theorem}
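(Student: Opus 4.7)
The plan is to derive an $O(\log \log n)$-round randomized algorithm from any given $T(n) = o(\log n)$-round algorithm $A$ for $P$ on trees, via the tree-specific speedup machinery of Chang--Pettie and Chang--Kopelowitz--Pettie. The high-level strategy consists of a round-elimination loop that collapses the complexity down to the randomized ``floor'' represented by sinkless orientation, together with a rake-and-compress style decomposition to glue the local decisions together.

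The first step is to formalize a general speedup lemma for LCLs on trees in the vein of Lemmas~\ref{lem:speedup1}--\ref{lem:speedup2}: starting from a $T$-round algorithm for $P$ with local failure probability at most $p$, construct a $(T-1)$-round algorithm for a \emph{derived} LCL $P'$, where $P'$'s label set records, for each node--edge pair, which original labels $P$ outputs with probability above a carefully chosen frequency threshold given a partially exposed neighborhood. On trees, the radius-$(T-1)$ neighborhood of an edge splits cleanly into two independent halves (one per endpoint), which is precisely what makes the frequency-threshold construction go through with a polynomial loss in the failure probability, just as in the proof of the first speedup lemma.

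The second step is to iterate this speedup. Each iteration decreases $T$ by one and replaces $P$ by its round-eliminated version. The crucial observation is that the iteration cannot be continued indefinitely: there is a known $\Omega(\log \log n)$ randomized lower bound for sinkless orientation on trees, and any attempt to reduce below this threshold produces a contradiction with that bound. One therefore iterates until the running time reaches $\Theta(\log \log n)$ and the derived LCL still has bounded (though large) alphabet. To turn this into an algorithm for $P$ itself, I would invert the round-elimination chain via the natural one-round reconstruction maps that transform an algorithm for $P'$ into an algorithm for $P$, and assemble them globally using a rake-and-compress decomposition of the tree: each layer is of constant depth and can be processed in parallel without interference, with $O(\log \log n)$ rounds spent per layer.

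The main obstacle is controlling the alphabet blow-up through the $\Omega(\log n - \log \log n)$ iterations of round elimination: a naive analysis makes the derived alphabets grow tower-exponentially, which would take us outside the LCL framework. The standard remedy is to restrict, at each step, the derived problem to those labels that arise with non-negligible probability on trees, and to re-express the resulting problem in the canonical ``node-edge'' form so that the alphabet remains bounded by a function of $P$ alone. Simultaneously ensuring that the local failure probabilities stay small enough for a union bound over $n$ nodes to yield constant global success — despite the polynomial amplification incurred at each of the many round-elimination steps — is the most delicate part of the argument, and it is what ultimately pins the cap at exactly $\Theta(\log \log n)$ rather than at any smaller function.
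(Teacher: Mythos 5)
Your proposal takes round elimination---a lower-bound technique---and tries to run it in reverse to produce an upper bound, and this is where the argument breaks. The speedup lemmas (\lemmaref{lem:speedup1}, \lemmaref{lem:speedup2}) show that a $T$-round algorithm for one problem yields a $(T-1)$-round algorithm for a \emph{derived} problem; iterating produces a contradiction at time $0$, which is how a lower bound is extracted. Nothing in that machinery lets you ``stop at $\Theta(\log\log n)$'' and then ``invert the round-elimination chain via natural one-round reconstruction maps'': there is no canonical map from a solution to the derived problem $P'$ (whose labels are frequency sets) back to a valid labeling for $P$, and the $\Omega(\log\log n)$ sinkless-orientation lower bound only tells you that the iteration must \emph{fail}, not that the derived problem becomes solvable at that depth. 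You also correctly identify the alphabet blow-up as lethal---it is doubly exponential per step in the very speedup lemmas you cite---but the proposed remedy (restrict to ``non-negligible'' labels, re-express in ``canonical node-edge form'') is not something that actually bounds the alphabet by a function of $P$ alone over $\Theta(\log n)$ iterations. Finally, the rake-and-compress gluing step is asserted without a mechanism for how the per-layer decisions would remain consistent with a single $r$-radius constraint.

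The paper's proof is entirely different and far more direct: given a $T(n)=o(\log n)$-round randomized algorithm $A$ for $P$, define a bad event at each node that $A$'s local output is infeasible there; the local failure probability $p$ can be made smaller than any $d^{-c}$, so this is a Lov\'asz Local Lemma instance with a polynomial criterion, and on trees the dependency graph is tree-structured. Chang--Pettie~\cite{chang17hierarchy} show that LLL is complete for sublogarithmic-time \lcl{}s, and Chang et al.~\cite{chang18complexity} give an $O(\log_\lambda\log n)$-round algorithm for tree-structured LLL under criterion $p(ed)^\lambda<1$. Composing these immediately solves $P$ in $O(\log\log n)$ rounds. If you want to pursue an elementary argument in the spirit you sketched, you would need to either (i) go through the LLL reduction explicitly, or (ii) use a shattering-style argument, rather than attempt to reverse a speedup chain.
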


\begin{proof}
    This follows from the fact that the distributed Lov\'{a}sz local lemma is complete for \lcl{}s in sublogarithmic time~\cite{chang17hierarchy}, and that there exists an $O(\log \log n)$-time algorithm for solving LLL on tree-structured instances~\cite{chang18complexity}.

    In particular, any sublogarithmic-time randomized algorithm for an \lcl{} yields an instance of (symmetric) LLL with a polynomial LLL criterion $pd^c < 1$, for an arbitrarily large constant $c$. On trees, this instance is tree structured: the dependency graph looks locally like some $r$th power of a tree.

    Chang et al.~\cite{chang18complexity} give a randomized algorithm for solving tree-structured LLLs with criterion $p(ed)^{\lambda}$ in time $O(\log_{\lambda} \log n)$.
\end{proof}

\end{document}